\newcommand*{\rows}{4}
\newcommand{\Ima}{\operatorname{im}}
\let\ker\relax
\DeclareMathOperator{\ker}{ker}
\let\Set\relax
\DeclareMathOperator{\Set}{Set}
\newcommand{\ZZ}{\mathbb{Z}}
\newcommand{\code}{\textnormal{code}}
\newcommand{\euclid}{\textnormal{euclid}}
\newcommand{\intt}{\mathrm{int}}
\title{Transform Arbitrary Good Quantum LDPC Codes into Good Geometrically Local Codes in Any Dimension}
\date{}
\author[1]{Xingjian Li\thanks{lxj22@mails.tsinghua.edu.cn}}
\author[2,3]{Ting-Chun Lin\thanks{til022@ucsd.edu}}
\author[3]{Min-Hsiu Hsieh\thanks{min-hsiu.hsieh@foxconn.com}}
\affil[1]{Department of Computer Science and Technology, Tsinghua University, Beijing}
\affil[2]{Department of Physics, University of California San Diego, CA}
\affil[3]{Foxconn Research, Taipei, Taiwan}
\begin{document}
\begin{spacing}{1.1}
\maketitle \thispagestyle{empty}

\begin{abstract}

{Geometrically local quantum codes, comprised of qubits and checks embedded in $\R^D$ with local check operators, have been a subject of significant interest. A key challenge is identifying the optimal code construction that maximizes both dimension and distance. Recent advancements have produced several constructions, but these either depend on specific good quantum low-density parity-check (qLDPC) codes or are limited to three dimensions. In this work, we introduce a construction that can transform any good qLDPC code into an optimal geometrically local quantum code. Our approach hinges on a novel procedure that extracts a two-dimensional structure from an arbitrary three-term chain complex. We expect that this procedure will find broader applications in areas such as weight reduction and the geometric realization of chain complexes.}

\end{abstract}

\section{Introduction}
In recent years, quantum coding theory has become an area with significant progress in both theory and practice. Among various quantum codes, the quantum low-density parity-check (qLDPC) codes have drawn much attention. On the practical side, the qLDPC codes are favored since their stabilizers only act on a few qubits. The low-density property is experimental-friendly since quantum devices are sensitive to noise and measurement errors. On the theoretical side, there have been many constructions that achieved optimal asymptotically linear dimension and distance recently~\cite{panteleevAsymptoticallyGoodQuantum2022,leverrierQuantumTannerCodes2022,dinurGoodQuantumLDPC2023}. Interestingly, these constructions have deep connections with high dimensional expanders.

However, to connect the theoretical development with practice, there are still several barriers. One of the barriers is that the constructions of optimal codes are based on expanders, thus the checks do not have a geometrically local embedding in the Euclidean space.
For certain applications, it is preferable for the code to have a local embedding, allowing the system to avoid nonlocal operations.

If we take the geometry constraints into account, it is known that we cannot achieve linear dimension and distance simultaneously. In particular, Bravyi and Terhal~\cite{bravyiNogoTheoremTwodimensional2009} provided an upper bound on code distance, and Bravyi, Poulin, and Terhal~\cite{bravyiTradeoffsReliableQuantum2010} showed an upper bound on the distance and rate tradeoff for geometrically local codes. This was far from the known code constructions at the time and closing the gap has been a major open problem for coding theorists and physicists.

With the recent progress in qLDPC codes, the gap has been closed following two separate lines of work. In the first approach, Portnoy~\cite{portnoyLocalQuantumCodes2023} provided a construction that is optimal up to polylogs for any dimension $D\geq 3$ based on good qLDPC codes. This method involves transforming a code into a manifold, extracting a 2D complex, subdividing it, and then embedding the structure into $\R^D$. Following this idea, Lin, Wills, and Hsieh~\cite{linGeometricallyLocalQuantum2023} simplified the construction by identifying a specific family of good qLDPC codes that directly induce a 2D complex, bypassing the need for manifolds. While the original construction is only optimal up to polylogs, an improved embedding result on simplicial complexes~\cite{LP-future} gives optimal codes without polylogarithmic loss.
In a separate line of study, Williamson and Baspin~\cite{williamsonLayerCodes2023} introduced a layered construction that can turn any good qLDPC code into a geometrically local quantum code that is optimal in 3D.

We observe that while the constructions in \cite{portnoyLocalQuantumCodes2023,linGeometricallyLocalQuantum2023} work for arbitrary dimensions,
    the constructions only apply to certain initial good qLDPC codes.
On the other hand, while the constructions in \cite{williamsonLayerCodes2023} work for arbitrary good qLDPC codes,
    the construction so far only works in 3D.

In this work, we reconcile the two aspects and provide a construction that can map any good qLDPC code to a geometrically local quantum code with optimal parameters for any dimension $D\geq 3$.
The overall construction is similar to~\cite{portnoyLocalQuantumCodes2023,linGeometricallyLocalQuantum2023}
    but the key new ingredient is a procedure that extracts a 2D complex directly from an arbitrary quantum code.

\subsection{Main Contributions}

Our main result can be stated through the following theorem:
\begin{theorem}\label{thm:main}
    Given any family of quantum LDPC codes with linear dimension, linear distance, and linear energy barrier, there exists a family of $D$-dimensional geometrically local quantum code with code dimension $k=\Omega(n^{\frac{D-2}{D}})$, code distance $d=\Omega(n^{\frac{D-1}{D}})$, and energy barrier $\calE=\Omega(n^{\frac{D-2}{D}})$.
\end{theorem}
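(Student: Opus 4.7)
The plan is to follow the overall architecture established in \cite{portnoyLocalQuantumCodes2023,linGeometricallyLocalQuantum2023}: (a) extract a two-dimensional cell complex that carries the homological data of the input qLDPC code, (b) isotropically subdivide it to an appropriate scale, and (c) embed the subdivided complex geometrically locally into $\R^D$. Stages (b) and (c) can be imported essentially from the existing literature; the core novelty, promised in the introduction of this paper, is a procedure carrying out stage (a) for an \emph{arbitrary} three-term chain complex rather than a specific algebraically structured one, and that will be the heart of the argument.

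First I would view the input $[m,\Theta(m),\Theta(m)]$ qLDPC code with linear energy barrier as a chain complex $C_2 \xrightarrow{\partial_2} C_1 \xrightarrow{\partial_1} C_0$ over $\mathbb{F}_2$. The goal of stage (a) is to produce a bounded-degree 2-dimensional cell complex $X$ with $O(m)$ cells whose cellular chain complex reproduces, up to bounded constants, the dimension, the systolic and cosystolic distances, and the energy barrier of the input. I would expect the procedure to replace each basis element of $C_1$ by a small gadget of 2-cells, glued so that each row of $\partial_1$ and each column of $\partial_2$ becomes a bounded-length closed walk in $X$, and then establish a chain-homotopy-like equivalence that transports all relevant invariants back and forth. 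This is the step I expect to be the main obstacle: one must simultaneously keep local degrees bounded, keep total size linear in $m$, and avoid losing the metric invariants (distance and energy barrier) when translating from an abstract three-term complex to a concrete 2-complex.

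Once $X$ is in hand, stage (b) is routine. I would subdivide $X$ isotropically by a factor $\ell := \Theta(m^{1/(D-2)})$, replacing each edge by an $\ell$-path and each 2-cell by an $\ell \times \ell$ grid. Standard facts about subdivision show that the CSS dimension of the associated code is unchanged, the distance is multiplied by $\Theta(\ell)$, the energy barrier is at least preserved, and the total cell count becomes $N = \Theta(m\ell^2) = \Theta(m^{D/(D-2)})$.

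For stage (c), I would invoke the geometric embedding theorem for bounded-degree 2-complexes into $\R^D$ used in \cite{portnoyLocalQuantumCodes2023,linGeometricallyLocalQuantum2023,LP-future}, producing a geometrically local placement of the subdivided complex in an $O(N)$-volume box in $\R^D$. Setting $n := N$, the bookkeeping yields $k = \Theta(m) = \Omega(n^{(D-2)/D})$, $d = \Theta(m\ell) = \Omega(n^{(D-1)/D})$, and $\mathcal{E} = \Omega(m) = \Omega(n^{(D-2)/D})$, matching the target parameters of Theorem~\ref{thm:main}. The whole argument therefore reduces to constructing and analyzing the 2D extraction procedure of stage (a); once that is in place, the rest is a calculation.
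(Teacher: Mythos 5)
Your proposal reproduces the paper's high-level architecture (extract a 2D structure, subdivide, embed via the Portnoy/Lin--Portnoy embedding theorem, then do the parameter bookkeeping, which you carry out correctly), but it leaves the two places where the actual mathematical content lives as black boxes, so as a proof it has a genuine gap. For stage (a) you say you ``would expect'' a gadget replacing each basis element of $C_1$ by a cluster of 2-cells so that rows of $\partial_1$ and columns of $\partial_2$ become bounded closed walks; this is not the paper's construction and no argument is given that any such gadget exists or preserves the invariants. The paper's mechanism is different and quite specific: since $H_xH_z^T=0$, every pair consisting of an $X$ check and a $Z$ check shares an \emph{even} number of common neighbor qubits in the Tanner graph, so these common neighbors can be paired up, and each pair spans a square face; this turns the Tanner graph itself into a square (subspace) complex. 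On top of that the paper must first split non-minimal stabilizer generators and (for unreasonable codes) add dummy faces, precisely so that the link of every check is connected --- a property your sketch never addresses but which is what the later distance argument hinges on.

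For stage (b) you assert that ``standard facts about subdivision'' give dimension preservation, distance growth by $\Theta(\ell)$, and preservation of the energy barrier. That is not standard here: naively subdividing a Tanner graph does not even yield a commuting stabilizer code (the paper's Figure~1), and once the face structure is used to subdivide each square as a surface-code patch, the claims about $k$, $d$, and $\mathcal{E}$ are exactly what the bulk of the paper is devoted to proving. Concretely, the paper builds a chain map $\mathcal{F}$ from the original complex $X$ to the subdivided complex $X_L$, shows $\mathcal{F}_1$ induces a bijection on cohomology (dimension), and then derives distance and energy barrier from small-set coboundary expansion of $X_L$, which in turn is reduced to local coboundary expansion of the interior regions $S$ and boundary regions $T$ of each subdivided face (proved via functional inequalities, and valid only because the links are connected). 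Without supplying these ingredients --- the pairing construction, the link-connectivity treatment, the chain map, and the local-to-global expansion argument --- your write-up is a correct plan and parameter calculation, but the theorem is not yet proved; the step you defer (``once that is in place, the rest is a calculation'') understates that the subdivision analysis itself, not just the 2D extraction, requires the new machinery.
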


Our method surpasses previous constructions in its generality, accommodating arbitrary quantum LDPC codes in arbitrary dimensions. The crucial advancement lies in the technique presented in Section~\ref{sec:2DStruc} and Appendix~\ref{sec:app}, which establishes a novel connection between an arbitrary quantum code and a 2D complex. Note that the resulting code family also admits a distance-rate tradeoff where $d=\Omega((n/k)^{\frac{D-1}{2}})$ and $\calE=\Omega((n/k)^{\frac{D-2}{2}})$ for any $k\geq\Omega(n^{\frac{D-2}{D}})$ through copying, as demonstrated in~\cite{linGeometricallyLocalQuantum2023}.

The geometrization process says that for every chain complex (or quantum CSS code), there is an associated simplicial/square complex, such that for many geometric operations\footnote{   Morally, we hope to say that this applies to all geometric operation of the simplicial complex. The caveat that let us retreat from that strong position is that the geometric operation has to respect certain even-odd conditions as we will see in \Cref{sec:app}.   Nevertheless, these even-odd conditions should be thought of as mild restrictions, and do not pose an issue for most studies on asymptotic behaviors.} performed on the simplicial/square complex,   such as subdivision and sliding of simplices,    there are corresponding operations on the chain complex. The construction of the geometrically local quantum code is a concrete application where we find the corresponding square complex of a good quantum LDPC code, then subdivide the quantum code in the same way we subdivide a square complex.

In algebraic topology, it is well-known that a chain complex can be obtained from a manifold or a CW complex (cellular complex), where a CW complex is said to have bounded degree if each cell is attached to $O(1)$ number of other cells.\footnote{ We explain what the $O(1)$ is respecting to. Usually, the discussion of bounded degree contains a sequence of CW complexes with an increasing number of cells. So $O(1)$ is respecting to this family, meaning that the degree is bounded by some constant independent of the number of cells.}. Conversely, all chain complexes with chosen bases can be realized by a CW complex \cite{9829}. However, the CW complex constructed from a bounded degree chain complex using this method does not have a bounded degree.  For certain applications, such as constructing manifolds with systolic freedom \cite{freedmanBuildingManifoldsQuantum2021} or performing weight reduction (refer to the beginning of page 2 in \cite{hastingsQuantumWeightReduction2021} for the context), it is important for the CW complex to have bounded degree. Therefore, our new technique in geometrizing arbitrary chain complexes fills a crucial gap in existing approaches, offering new avenues for these problems.

\subsection{Technical Overview}

It is known that a good quantum LDPC code $Q$ cannot have a geometrically local embedding, since its underlying Tanner graph $\calT(Q)$ is a good expander, where the Tanner graph is the tripartite graph with $X$ checks, $Z$ checks, and qubits as vertices, connected if a check acts on a qubit. Our target is to obtain another code $Q_L$ with a Tanner graph $\calT(Q_L)$ such that the Tanner graph has an embedding in $\R^D$, where the embedding has local neighbors and bounded density in $\R^D$, and $L$ is the subdivision parameter we will specify later.

To overcome the aforementioned barrier of embedding expanders, we can subdivide an expander $G$ to obtain a graph with a geometrically local embedding.
To be precise, we obtain the graph $G_L=(V_L,E_L)$ from $G$ by replacing every edge in $G$ with a length $L$ chain for some large enough $L$. However, directly subdividing the Tanner graph
$\calT(Q)$ does not yield a valid Tanner graph for a quantum code
 $Q_L$, as it must preserve the commutation relation between $Z$ and $X$ stabilizers, the readers can find an example in~\Cref{fig:1dsubdiv}. The core problem is that the commutation relation contains 2D information, so merely considering a 1D structure is insufficient for our needs.
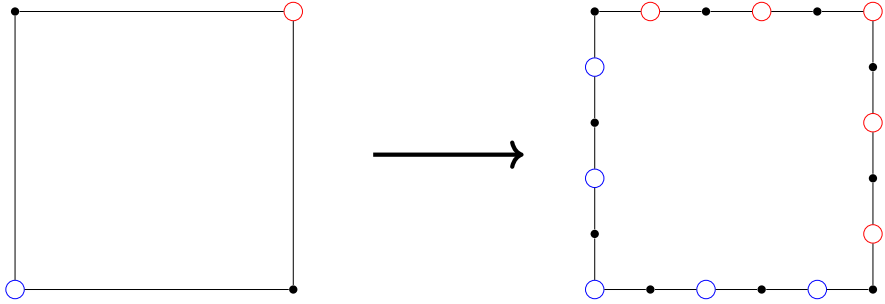
\begin{figure}
    \centering
    \begin{subfigure}[b]{0.25\textwidth}

        \centering
        \resizebox{\linewidth}{!}{
        \begin{tikzpicture}
        \node at (0,0) [circle,draw,blue](v00){};
        \node at (5,0) [circle,fill,inner sep=1.5pt](v50){};
        \node at (0,5) [circle,fill,inner sep=1.5pt](v05){};
        \node at (5,5) [circle,draw, red](v55){};
        \draw (v00)--(v05);
        \draw (v50)--(v00);
        \draw (v55)--(v05);
        \draw (v50)--(v55);
    \end{tikzpicture}
        }
    \end{subfigure}
    \qquad\tikz[baseline=-\baselineskip]\draw[ultra thick,->] (0,1.5) --  (2,1.5);\qquad
    \begin{subfigure}[b]{0.25\textwidth}

        \centering
        \resizebox{\linewidth}{!}{
        \begin{tikzpicture}
        \foreach \row in {0,1,2}
        {
            \foreach \column in {0,1,2}
            {
                \pgfmathtruncatemacro{\rw}{2*\row}
                \pgfmathtruncatemacro{\cl}{2*\column}
                \node at (2*\row,2*\column) [circle, draw,blue](v\rw\cl){};
                \pgfmathtruncatemacro{\cl}{2*\column+1}
                \node at (2*\row,2*\column+1) [circle,fill,inner sep=1.5pt](v\rw\cl) {};
                \pgfmathtruncatemacro{\rw}{2*\row+1}
                \node at (2*\row+1,2*\column+1) [circle, draw,red](v\rw\cl){};
                \pgfmathtruncatemacro{\cl}{2*\column}
                \node at (2*\row+1,2*\column) [circle,fill,inner sep=1.5pt](v\rw\cl) {};
            };
        };

    \foreach \column in {0,1,...,4}
        {
            \pgfmathtruncatemacro{\cl}{\column+1}
            \draw (v0\column)--(v0\cl);
            \draw (v5\column)--(v5\cl);
        }

    \foreach \row in {0,1,...,4}
        {
            \pgfmathtruncatemacro{\rw}{\row+1}
            \draw (v\row0)--(v\rw0);
            \draw (v\row5)--(v\rw5);
        }
    \filldraw[white] (0.5,0.5) rectangle (4.5,4.5);
    \end{tikzpicture}
        }
    \end{subfigure}
    \caption{A failed attempt if we only subdivide the tanner graph $\calT(Q)$ in 1D, taking one pair of $X$ and $Z$ check for example,  where blue, black, red vertices represent $X$ check, qubit, $Z$ check respectively. We can check that the $X$ and $Z$ check at the left-up corner do not commute since they only share one common qubit. Other assignments of checks and qubits will fail due to similar reasons.}
    \label{fig:1dsubdiv}
\end{figure}

Our construction resolves the problem by deriving a 2D structure $\calS(Q)$ from the Tanner graph $\calT(Q)$. Since the $X$ and $Z$ stabilizers of the code $Q$ commutes, in the Tanner graph $\calT(Q)$, each pair of $X$ and $Z$ stabilizer shares an even number of common neighbor qubits. By this observation, we can pair up the common neighbor qubits, and obtain a square face set $F$ where each face $f\in F$ consists of one $X$ check, one $Z$ check, and a pair of their common neighbor qubits in the Tanner graph. We can obtain a 2D square complex $\calS(Q)=(V,E,F)$ from the Tanner graph $\calT(Q)=(V,E)$ and include the additional face set $F$. For our final code construction, we will make some modifications to the face set $F$, and we will discuss the details in~\Cref{sec:2DStruc}.

 Now we subdivide the structure $\calS(Q)$ to obtain a Tanner graph $\calT(Q_L)$. Taking inspiration from the surface code construction~\cite{kitaevFaulttolerantQuantumComputation2003}, it is natural to subdivide the square face in $\calS(Q)$ to an $L\times L$ size grid and assign $X$ checks, $Z$ checks, qubits correspondingly as in surface codes, shown in~\Cref{fig:codediv}. It is known that after subdivision, the Tanner graph $\calT(Q_L)$ has a geometrically local embedding~\cite{portnoyLocalQuantumCodes2023,LP-future}.
 \begin{figure}
    \centering
    \begin{subfigure}[b]{0.25\textwidth}

        \centering
        \resizebox{\linewidth}{!}{
        \begin{tikzpicture}
        \node at (0,0) [circle,draw,blue](v00){};
        \node at (5,0) [circle,fill,inner sep=1.5pt](v50){};
        \node at (0,5) [circle,fill,inner sep=1.5pt](v05){};
        \node at (5,5) [circle,draw, red](v55){};
        \draw (v00)--(v05);
        \draw (v50)--(v00);
        \draw (v55)--(v05);
        \draw (v50)--(v55);
    \end{tikzpicture}
        }
    \end{subfigure}
    \qquad\tikz[baseline=-\baselineskip]\draw[ultra thick,->] (0,1.5) --  (2,1.5);\qquad
    \begin{subfigure}[b]{0.25\textwidth}

        \centering
        \resizebox{\linewidth}{!}{
        \begin{tikzpicture}
        \foreach \row in {0,1,2}
        {
            \foreach \column in {0,1,2}
            {
                \pgfmathtruncatemacro{\rw}{2*\row}
                \pgfmathtruncatemacro{\cl}{2*\column}
                \node at (2*\row,2*\column) [circle, draw,blue](v\rw\cl){};
                \pgfmathtruncatemacro{\cl}{2*\column+1}
                \node at (2*\row,2*\column+1) [circle,fill,inner sep=1.5pt](v\rw\cl) {};
                \pgfmathtruncatemacro{\rw}{2*\row+1}
                \node at (2*\row+1,2*\column+1) [circle, draw,red](v\rw\cl){};
                \pgfmathtruncatemacro{\cl}{2*\column}
                \node at (2*\row+1,2*\column) [circle,fill,inner sep=1.5pt](v\rw\cl) {};
            };
        };
    \foreach \row in {0,1,...,5}
    {
        \foreach \column in {0,1,...,4}
        {
            \pgfmathtruncatemacro{\cl}{\column+1}
            \draw (v\row\column)--(v\row\cl);
        }
    }

    \foreach \column in {0,1,...,5}
    {
        \foreach \row in {0,1,...,4}
        {
            \pgfmathtruncatemacro{\rw}{\row+1}
            \draw (v\row\column)--(v\rw\column);
        }
    }
    \end{tikzpicture}
        }
    \end{subfigure}
    \caption{Constructing new code by subdivision, where blue, black, red vertices represent $X$ checks, qubits, $Z$ checks respectively.}
    \label{fig:codediv}
\end{figure}
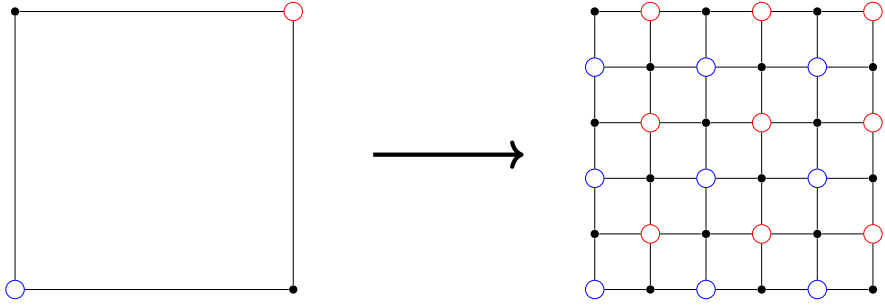

To analyze the properties of the new code $Q_L$, we relate it with the original code $Q$. Note that every CSS code has a corresponding chain complex. We use $X$ to denote the chain complex of the qLPDC code $Q$ and $X_L$ for our target subdivision code $Q_L$. For the ease of our analysis, we introduce a chain map $\calF$ from $X\colon\F_2^{X(0)}\to\F_2^{X(1)}\to\F_2^{X(2)} $ to $X_L\colon\F_2^{X_L(0)}\to\F_2^{X_L(1)}\to\F_2^{X_L(2)} $ shown in the following commutation diagram.
\[
    \begin{tikzcd}
    \F_2^{X(0)}\arrow[r,"\delta_0"]\arrow[d,"\calF_0"] &\F_2^{X(1)}\arrow[r,"\delta_1"] \arrow[d,"\calF_1"]&\F_2^{X(2)}\arrow[d,"\calF_2"]\\
    \F_2^{X_L(0)}\arrow[r,"\delta_0"] &\F_2^{X_L(1)}\arrow[r,"\delta_1"] &\F_2^{X_L(2)}
    \end{tikzcd}
\]
We will show how to construct such a chain map in~\Cref{sec:subdiv}.

In~\Cref{subsec:prop}, we will analyze the properties of our code. Our analysis has a similar flavor as the proof in~\cite{linGeometricallyLocalQuantum2023}, as we both used the expansion properties of the chain complex $X_L$ to obtain lower bounds for properties of code $Q_L$. The difference is that the original proof utilized the local product structure of the balanced product code, while our proof applies more generally to arbitrary local structure. 
The detailed analysis of the expansion of local structure can be found in~\Cref{sec:exp}.

\subsection{Related Work}

We briefly compare our construction to the recent constructions~\cite{portnoyLocalQuantumCodes2023,linGeometricallyLocalQuantum2023,williamsonLayerCodes2023}.
A detailed historical review of the work on geometrically local codes can be found in~\cite{linGeometricallyLocalQuantum2023}.

The first construction of (almost) good geometrically local codes is by Portnoy~\cite{portnoyLocalQuantumCodes2023}. It would first geometrize the code by mapping the code into a manifold based on the work of Freedman and Hastings~\cite{freedmanBuildingManifoldsQuantum2021}, take the nerve of the manifold to obtain a 2D simplicial complex, subdivide it, and embed the 2D simplicial complex into $\R^D$ using the work of Gromov and Guth~\cite{gromov2012generalizations}. Due to the use of~\cite{freedmanBuildingManifoldsQuantum2021}, this method only works for codes that have a sparse $\Z$-lift.

The following work of~\cite{linGeometricallyLocalQuantum2023} simplifies Portnoy's construction by directly applying subdivision on the code, avoiding the step of turning the code into a manifold. However, this method only works for balanced product codes~\cite{breuckmannBalancedProductQuantum2021} (also known as lifted product codes~\cite{panteleevQuantumLDPCCodes2022}). Our result generalizes~\cite{linGeometricallyLocalQuantum2023} from the specific balanced product code to any code.

The work on layer codes~\cite{williamsonLayerCodes2023} provides a different approach that applies to all initial quantum codes and has a straightforward local embedding in 3D. However, it currently only works in 3D, whereas our result can be applied to any embedding dimension.

We summarize the previous result and our result in the following table:
\begin{table}[H]
  \centering
  \caption{Comparison of the constructions}
  \label{table:comparison}
  \begin{tabular}{ccccc}
    \toprule
    & \textbf{BPT bound} & \textbf{Energy barrier} & \textbf{Arbitrary code} & \textbf{All dimensions} \\
    \midrule
    Portnoy & \checkmark $^\sharp$ & \checkmark $^\natural$ & Almost$^\flat$ & \checkmark \\
    Lin-Wills-Hsieh & \checkmark $^\sharp$ & \checkmark & Only balanced product & \checkmark \\
    Williamson-Baspin & \checkmark & \checkmark & \checkmark & Currently in 3D \\
    Our result & \checkmark &\checkmark &\checkmark &\checkmark\\
    \bottomrule
  \end{tabular}
  \flushleft

  $^\sharp$After using the optimal embedding \cite{LP-future}.
  $^\natural$Not stated, but it holds.
  $^\flat$Requires a sparse $\Z$-lift \cite[Definition 1.2.2]{freedmanBuildingManifoldsQuantum2021}.
\end{table}

\subsection{Discussions and Further Directions}

We also expect our construction to find more applications in both algebraic topology and coding theory. We point out several possible directions for future work.

\paragraph*{Geometrize and Embed Longer Chain Complexes:}

This work focuses on quantum CSS codes,
    which are 3-term chain complexes over $\F_2$,
    and endow them with 2D geometrical structures.
We expect a similar result holds in higher dimensions over arbitrary finite fields or $\Z$.
In particular, we can endow a $t+1$-term chain complex
    with a $t$-dimensional CW complex structure after pairing.
Furthermore, when the chain complex has bounded degree with bounded entry values,
    the CW complex also has bounded degree.
When combined with the embedding theorem for simplicial complexes
    this provides a way to embed a $t+1$-term chain complex in $\R^D$ with $D > t$,
    which likely saturates the generalization of the BPT bound.
Some further thoughts are expressed in \Cref{sec:app}.

\paragraph*{Weight Reduction:}

Given a qLDPC code, can one convert it into another qLDPC code with similar properties while having a smaller check weight?
This problem was first answered by Hastings' in \cite{hastingsQuantumWeightReduction2021}.
The intuition behind the work is elegantly articulated in the introduction of the paper. The reasoning is that there seem to be a general procedure that transforms quantum codes into manifolds. (Note that the reverse process is straightforward, as manifolds can always be converted into quantum codes.) Given that manifolds can naturally be refined such that each cell is only attached to a constant number of other cells, this suggests a method of weight reduction.

However, a challenge in implementing this idea is that the procedure for transforming a code into a manifold, as described in \cite{freedmanBuildingManifoldsQuantum2021}, does not work for all codes. Nevertheless, Hastings found an alternative approach to achieve weight reduction, as detailed in \cite{hastingsQuantumWeightReduction2021}.

We believe that given our procedure to endow an arbitrary quantum code with a 2D geometrical structure, we can fully realize Hastings' intuition.  We will describe the details in future work.

\paragraph*{Simpler Manifold Construction:}

As discussed in weight reduction, it is believed that there exists a general procedure for transforming a code into a manifold.
This procedure is valuable for achieving weight reduction and for constructing manifolds with systolic freedom, which was the main motivation in \cite{freedmanBuildingManifoldsQuantum2021}.

In \cite{freedmanBuildingManifoldsQuantum2021},
    the manifold is constructed by gluing the handlebodies.
To make sure the handlebody glues,
    one computes the obstructions and makes sure they vanish.
These are standard techniques in topology
    but computing the obstructions requires some work.

The 2D geometric structure described in this paper
    can facilitate showing the existence of the glue.
We can obtain a glue
    using the immersion of the 2D structure in $\R^5$ without self-intersections,
    which follows from a general-position argument.
This immersion leads to a natural framing
    and induces a $4$-dimensional CW complex (which can be turned into a simplicial) whose
    2-cells $\to$ 3-cells $\to$ 4-cells is isomorphic to the chain complex over $\F_2$
    (up to some extra 2-cells with a zero map to the 3-cells).
This complex can then be transformed into a $8$-dimensional manifold whose
    2-cells $\to$ 3-cells $\to$ 4-cells is isomorphic to the chain complex over $\F_2$
    (up to some extra 2-cells and 4-cells with zero maps to the 3-cells).
Further details will be described elsewhere.

\section{Preliminaries}

In this section, we will first introduce some basic facts about chain complexes and quantum CSS codes, and their relationship. We will also formally define a square complex, and how to embed it into the Euclidean space via subdivision.

\subsection{Chain Complexes}\label{sec:chain-complex}
Chain complexes offer an intuitive structure for studying quantum CSS codes. Within this framework, we can express the properties of the CSS code using the language of chain complexes, covering aspects such as dimension, distance, and energy barrier of a given code. We mainly consider chain complexes over the finite field $\F_2$.

\begin{definition}[Chain complex]
    A chain complex $X$ consists of a sequence of vector spaces $\F_2^{X(i)}$ generated by sets $X(i)$, along with linear maps $\delta_i\colon \F_2^{X(i)}\to \F_2^{X(i+1)}$ known as coboundary operators, where the coboundary operators satisfy
    \begin{align*}
        \delta_{i+1}\delta_i=0.
    \end{align*}
\end{definition}

By considering dual maps, one can also define the dual chain complex consisting of boundary operators. In our context, there is a canonical basis of $\F_2^{X(i)}$ labeled by the elements in $X(i)$. Under this basis, the boundary operator $\partial_i\colon \F_2^{X(i)}\to\F_2^{X(i-1)}$ can be written as $\partial_i=\delta_{i-1}^T$. The boundary operators will satisfy:
\begin{align*}
    \partial_{i-1}\partial_i=0.
\end{align*}

We introduce some standard definitions. Elements in the kernel of the (co)boundary operators are called (co)cycles:
\begin{align*}
    Z_i \coloneq \ker \partial_i=\{c_i\in\F_2^{X(i)}:\partial_ic_i=0\},\quad
    Z^i\coloneq\ker \delta_i=\{c_i\in\F_2^{X(i)}:\delta_ic_i=0\}.
\end{align*}
Elements in the image of the (co)boundary operators are called (co)boundaries:
\begin{align*}
    B_i\coloneq \Ima \partial_{i+1}=\{\partial_{i+1}c_{i+1}:c_{i+1}\in\F_2^{X(i+1)}\},\quad   B^i\coloneq \Ima \delta_{i-1}=\{\delta_{i-1}c_{i-1}:c_{i-1}\in\F_2^{X(i-1)}\}.
\end{align*}

A chain is called exact if $Z_i=B_i$ for all $i$. We can also define an exact cochain similarly.

\subsection{Quantum CSS Code}
A quantum CSS code $Q$ is defined by two classical codes $C_x, C_z$ represented by their parity check matrices $H_x\colon \F^n_2\to \F^{m_z}_2$ and $H_z\colon \F^n_2\to\F^{m_x}_2$ that satisfies $H_x H_z^T =0$. Here $n,m_x,m_z$ corresponds to the number of qubits, $X$ checks, and $Z$ checks respectively. It is well known that the CSS code naturally corresponds to a chain complex as follows:
\[
\begin{tikzcd}
    \F_2^{m_x}\arrow[r,"\delta_0=H_z^T"]&\F_2^n\arrow[r,"\delta_1=H_x"]&\F_2^{m_z}.
\end{tikzcd}
\]

The $X$ and $Z$ logical operators correspond to the code $C_x$ and $C_z$, and $X$ and $Z$ stabilizers correspond to the code $C_z^{\perp}$ and $C_x^{\perp}$. The code dimension is defined by $k=\dim C_x-\dim C_z^{\perp}=\dim C_z-\dim C_x^{\perp}$. The code distance $d=\min(d_x,d_z)$ where
\begin{align*}
    d_x=\min_{c_x\in C_x-C_z^\perp}|c_x|,\quad d_z=\min_{c_z\in C_z-C_x^\perp}|c_z|.
\end{align*}

We would also consider the energy gap $\calE$ of the system. The energy gap is the minimum energy required to change an all-zero codeword to a nontrivial codeword by flipping one bit at a time. In our context, this energy is related to the number of violated checks. For any vector $c\in \F_2^n$, we define its $X$ energy as $\epsilon_x(c_x)=|H_xc_x|$. A sequence of vectors $\gamma_{a\to b}=(c_0=a,c_1,\dots,c_t=b)$ constitutes a walk from $a$ to $b$ if for each $i\in[0,t]$, $|c_i-c_{i+1}|=1$. The $X$ energy of a walk is defined by $\epsilon_x(\gamma_{a\to b})=\max_{c\in\gamma_{a\to b}}\epsilon_x(c)$, and the $X$ energy gap of $\calE_x$ is defined by
\begin{align*}
    \calE_x=\min_{\gamma_{0\to c},c\in C_x-C_z^\perp}\epsilon_x(\gamma_{0\to c}).
\end{align*}
Similarly, we can define its $Z$ energy gap $\calE_z$. The code's energy gap is defined by $\calE=\min(\calE_x,\calE_z)$.

We say a quantum code is a low-density parity-check (LDPC) code if each check acts with a constant number of qubits, and each qubit is acted by a constant number of checks. We call a quantum LDPC code good if it has asymptotic linear dimension and distance.

Another common way to describe a quantum CSS code $Q$ is through its corresponding Tanner graph $\calT(Q)=(V=V_0\cup V_1\cup V_2, E=E_0\cup E_1)$, which is also a 1-simplicial complex. We would map each $X$ check to a vertex in $V_0$, every qubit to a vertex in $V_1$, and every $Z$ check to a vertex in $V_2$. $E_0$ consists of  edges between vertices in $V_0$ and $V_1$, and there is an edge between $v_0\in V_0$ and $v_1\in V_1$ iff $H_z(v_0,v_1)=1$ in the parity check matrix of $C_z$. We can define $E_1$ similarly for the parity check matrix $H_x$. From the Tanner graph, we would also use level 0, level 1, and level 2 vertices to refer to the $X$ checks, qubits, and $Z$ checks respectively.

In this paper, we also consider codes that have additional geometric structures.  Specifically, these codes should have an embedding in the lattice $\mathbb{Z}^D$: each qubit and each check correspond to a specific location in $\mathbb{Z}^D$. We characterize the embedding map as follows:
\begin{align*}
  I_{\code\to\euclid}: \Set(Q) \to \ZZ^D,
\end{align*}
where the $\Set(Q) = [m_x] \sqcup [n] \sqcup [m_z]$ is the set of the qubits and the checks' labels. Here $[n]$ is a set of size $n$ which labels the canonical basis vectors of $\F_2^n$
  and $\sqcup$ is the disjoint union.

We call an embedding `$a$-geometrically-local' if the Euclidean distance between each check and the qubit it interacts with is at most $a$ in the embedding. Formally speaking, if $H_{ij}$ is nonzero, then $|I_{\text{code}\to\text{euclid}}(i)-I_{\text{code}\to\text{euclid}}(j)| \leq a$, where $|\cdot|$ represents the Euclidean distance.
Additionally, the embedding is said to have density $b$ if the number of qubits and checks located at each lattice point in $\mathbb{Z}^D$ is at most $b$. Our goal is to obtain an embedding with constant parameters $a = \Theta(1)$ and $b = \Theta(1)$.

\subsection{Square complex and its subdivision}
In this section, we provide the formal definition of a square complex and its $L$ subdivision.
\begin{definition}[Square complex in 2D]
    A two-dimensional square complex $\calS=(V,E,F)$ consists of a vertice set $V$, an edge set $E$, and a face set $F$, which satisfy the following conditions:
    \begin{itemize}
        \item For every element $e\in E$, $e=\{v_0,v_1\}$, where $v_0,v_1\in V, v_0\neq v_1$.
        \item For every element $f\in F$, $f=\{e_0,e_1,e_2,e_3\}$, where the four edges in $E$ form a square, with four vertices in $V$.
    \end{itemize}
\end{definition}

We can obtain the $L$ subdivided complex $\calS_L=(V_L,E_L,F_L)$ from $\calS$ by dividing every square face $f\in F$ to an $L\times L$ grid, as shown in~\Cref{fig:complexdiv}. The definition of $V_L, E_L, F_L$ is direct from the figure.

\begin{figure}
    \centering
    \includegraphics{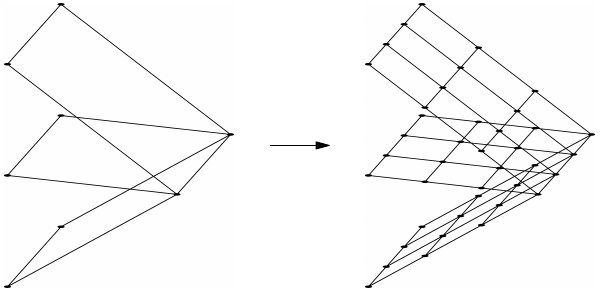}
    \caption{$L$ subdivision of square complex $\calS$ on each face $f$.(Figure 5 in~\cite{linGeometricallyLocalQuantum2023})}
    \label{fig:complexdiv}
\end{figure}
\subsection{Embedding via Subdivision}

To embed our subdivided code into the Euclidean space $\Z^D$, we would use the following embedding result from~\cite{portnoyLocalQuantumCodes2023}.
\begin{theorem}\label{thm:embedding}
For any $L$-subdivided square complex $\calS_L=(V_L, E_L, F_L)$ from a square complex $\calS=(V,E,F)$, there exists an embedding map $I_{\mathrm{square}_L\to \mathrm{euclid}}\colon V_L\to \Z^D$ with constant $a,b=\Theta(1)$ such that for $L=\Theta(|V|^{\frac{1}{D-2}}\polylog(|V|))$:
\begin{enumerate}
    \item Geometrically local: For all adjacent vertices on the complex $\{v_0,v_1\}\in E_L$, the distance between corresponding points in $\Z^D$ is bounded, i.e. $|I_{\mathrm{square}_L\to \mathrm{euclid}}(v_0)-I_{\mathrm{square}_L\to \mathrm{euclid}}(v_1)|\leq a$.
    \item Bounded density: The number of vertices at each point in $\Z^D$ is bounded,\\ i.e. $\forall x\in \Z^D, |I_{\mathrm{square}_L\to \mathrm{euclid}}^{-1}(x)|\leq b$.
\end{enumerate}
\end{theorem}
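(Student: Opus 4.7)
The plan is to pursue a volume-packing and random-routing argument of Gromov--Guth type. First count: since $\calS$ has bounded degree, $|E|, |F| = O(|V|)$, and $\calS_L$ has $\Theta(|V|L^2)$ vertices in total, dominated by the $(L-1)^2$ interior vertices of each of the $|F|$ faces. For bounded density inside a cubical region of $\Z^D$ of side $L' = \Theta(L)$, one needs $(L')^D \gtrsim |V|L^2$, which forces $L = \Omega(|V|^{1/(D-2)})$; the polylog slack in the statement will be consumed by the concentration step below.

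Next I would embed the original vertex set $V$ into a cube $Q = [0, cL]^D \cap \Z^D$ for a suitably large constant $c$; since $|V|/L^D = O(L^{-(D-2)}) = O(1)$, a uniformly random placement of $V$ already exerts essentially no density pressure and leaves room for everything else. For each unsubdivided edge $e = \{u,v\} \in E$ the subdivided copy in $\calS_L$ is a path of length $L$, and because both endpoints sit inside $Q$ a lattice path of length $O(L)$ realizing $e$ is always available and can be chosen randomly, for instance by an independent random lattice walk conditioned on its endpoints. For each face $f \in F$ I would then realize the interior $L \times L$ grid of $\calS_L$ as a thin lattice sheet spanning the four previously chosen edge paths, laid out combinatorially as an $L \times L$ grid graph.

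The main obstacle is controlling collisions between these sheets. A single face contributes $O(L^2)$ lattice occupancies, so summed over $O(|V|)$ faces the expected density at a given point of $Q$ is $O(|V|L^2/L^D) = O(1)$, matching the target. To turn this in-expectation bound into a worst-case density bound one randomizes the choice of sheet (e.g.\ by an independent small random offset per face), applies a Chernoff--Hoeffding estimate, and takes a union bound over the $O(L^D)$ lattice points; this costs the $\mathrm{polylog}(|V|)$ factor in $L$, and a Lov\'asz-local-lemma refinement can tighten the exponent if desired. The genuinely hard step is producing, for each face $f$, a lattice $L \times L$ sheet whose four boundary lattice paths coincide with the previously chosen edge routings while the sheet remains inside a ball of radius $O(L)$. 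This is a discrete disk-filling problem, and it is where the Gromov--Guth bounded-geometry machinery, together with the crucial observation that one only has to embed the $1$-skeleton of $\calS_L$ rather than the full $2$-complex topologically, comes in: for $D = 3$ a generic $2$-complex has no topological embedding, but its $1$-skeleton always does and the density counting still closes after subdivision, giving the stated theorem.
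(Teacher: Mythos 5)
First, a point of context: the paper does not prove \Cref{thm:embedding} at all --- it imports it wholesale from Portnoy~\cite{portnoyLocalQuantumCodes2023} (with the polylog removed in the forthcoming~\cite{LP-future}), and those proofs are themselves built on the Gromov--Guth generalizations of Kolmogorov--Barzdin. So your proposal cannot be judged against an in-paper argument; it has to stand on its own as a reconstruction of that external result, and as such it has a genuine gap precisely at the step you yourself flag as ``the genuinely hard step.'' Everything up to that point (the volume count forcing $L\gtrsim |V|^{1/(D-2)}$ inside a cube of side $\Theta(L)$, random placement of $V$, random lattice routing of the $L$-subdivided edges) is fine and is indeed the Kolmogorov--Barzdin part of the story. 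But the theorem's content is the 2-dimensional part: for every face you must produce a map of the $L\times L$ grid into the cube that (i) is constant-stretch, so that every edge of $E_L$ inside the face and along its attachment to the four pre-routed boundary paths has length $\le a=\Theta(1)$, and (ii) contributes $O(1)$ (or at worst polylog, to be cleaned up later) vertices to any single lattice point. Your proposal never constructs such a filling; it appeals to ``Gromov--Guth bounded-geometry machinery,'' which is exactly the black box being re-derived.

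The gap is not cosmetic, because the obvious fillings fail condition (ii). The standard way to span a closed lattice curve of length $O(L)$ by a constant-Lipschitz image of an $L\times L$ grid is a cone, but a cone over a length-$\Theta(L)$ boundary places $\Theta(L)$ grid vertices within $O(1)$ of its apex, so a single face already violates constant density by a factor $L$; no Chernoff bound over the $O(|V|)$ faces and no $\polylog$ slack in $L$ can absorb a per-face, per-point occupancy of $\Theta(L)$, since the slack only drives the \emph{average} density down by polylog factors. Relatedly, your in-expectation density estimate $O(|V|L^2/L^D)$ silently assumes each sheet is spread nearly uniformly over the cube, which is not automatic once the sheet is anchored to its four boundary paths: ``an independent small random offset per face'' does not randomize the interior of a deterministically chosen spanning surface, so you would need the filling itself to route through fresh random points at every scale (this is what the Gromov--Guth-type constructions actually do, and it is also where the polylog in $L$ is really spent), together with a per-face, per-point occupancy bound before any concentration inequality applies. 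Your remark that only the $1$-skeleton of $\calS_L$ needs to be mapped (no topological embedding of the $2$-cells) is correct and is indeed why $D=3$ is possible, but it does not by itself produce the required Lipschitz, low-multiplicity, boundary-compatible fillings. As it stands, the proposal reproduces the easy counting and the graph-embedding layer of Portnoy's theorem and defers its actual 2-complex content to the citation.
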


We are aware of an upcoming result by Lin and Portnoy~\cite{LP-future} that removes the polylog factor in the embedding. Thus, we will not include the polylog factor in our theorem statement below.

\section{2D Structure from Arbitrary QLDPC Code}\label{sec:2DStruc}

In this section, we will derive a 2D structure from a quantum LDPC code $Q$. This structure is crucial for our construction of geometrically local quantum codes.

To obtain the 2D structure, we utilize the commutation relation of $X$ and $Z$ checks, $H_x H_z^T=0$.
The commutation relation implies that
    for every $Z$ check $v_0$ and $X$ check $v_2$ in $\calT(Q)$,
    they share an even number of common neighbors $N(v_0,v_2)\subset V_1$,
    which allows us to pair up their common neighbors.
For each pair $\{v_1, v_1'\}$,
    we form a square with vertices $v_0, v_1, v_1', v_2$.
We denote the resulting face set obtained by considering all pairs of $v_0, v_2$ as $F$, and we define our resulting square complex as $\calS(Q)=(V,E,F)$. We will refer to these faces as full faces, and they are colored in green in~\Cref{fig:subdiv}.

For certain quantum codes with parity checks $H_x$ and $H_z$, we can stop here with $\calS(Q)$ as the desired 2D structure.
However, there are some pathological (or unreasonable) codes that require further treatments
    as we will describe them later.
We say a quantum code is reasonable~\cite{hastingsQuantumWeightReduction2021},
    if it doesn't have a $Z/X$ codeword that is a subset of a $Z/X$ stabilizer.
This means as long as the distance of the code is larger than the weight,
    the code is reasonable.
In particular, all good qLDPC codes are reasonable.

We say the stabilizer generators in $H_x$ and $H_z$ are `minimal'
    if those stabilizers do not contain smaller stabilizers.
Formally, a X-stabilizer $c_x \in C_z^\perp$ is minimal,
    if for all $c'_x \in C_z^\perp$ and $\operatorname{supp} c'_x \subseteq \operatorname{supp} c_x$,
    we have $c'_x = 0$ or $c_x$.
Notice that if a stabilizer generator $c_x \in C_z^\perp$ from $H_z$ is not minimal,
    by definition, there exist $c'_x \in C_z^\perp - 0$ and $\operatorname{supp} c'_x \subsetneq \operatorname{supp} c_x$.
That means we can replace the generator $c_x$ with $c'_x$ and $c_x - c'_x$,
    while maintaining the same code $C_x$ and $C_z$.
Thus, given $H_x$ and $H_z$, we can repeat this process until all generators are minimal.
(This process terminates in finite rounds, because the stabilizer weight strictly decreases when we split the generators.)
In particular, if the initial parity check matrices $H_x$ and $H_z$ are LDPC code,
    this process induces parity check matrices that are LDPC and are minimal.

The nice property of reasonable codes with minimal $H_x$ and $H_z$ is that
    the \emph{link} of a $Z/X$ check $v\in V_0\cup V_2$
    is always connected.
This property is used later to show the subdivided code has good parameters.
Let $N^F(v)$ be the faces that contain $v$, $N^F(v)=\{f\in F\mid v\in f\}$,
    and let $N^E(v)$ be the edges that contain $v$, $N^E(v)=\{e\in E\mid v\in e\}$.
The link of $v$ is a graph with `vertices' $e \in N^E(v)$
    and `edges' induced by $f \in N^F(v)$ which connects the two `vertices' $v \in e \in f$, $v \in e' \in f$.
Note that for a $Z$ check $v_0$,
    every qubit $v_1$ neighbor to $v_0$
        can be identified to the `vertex' $\{v_0, v_1\} \in N^E(v_0)$.
Furthermore, every check $v_2$ that share a qubit with $v_0$
    can be identified to a set of `edges' of the form $\{\{v_0, v_1\}, \{v_0, v_1'\}\} \in N^F(v_0)$ where $v_1, v_1'$ are the paired qubits in the common neighbors $N(v_0, v_2)$.
In particular, the endpoints of the `edges' corresponds to the shared qubits $N(v_0, v_2)$.
        
The following claim is highly related to~\cite[Lemma 6]{hastingsQuantumWeightReduction2021}.

\begin{claim}
    For a reasonable quantum code with minimal $H_x$ and $H_z$, the link of each $Z/X$ check $v\in V_0\cup V_2$ is connected.
\end{claim}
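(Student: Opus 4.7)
The plan is to argue by contradiction: a disconnected link will yield either a non-minimal stabilizer generator or a nontrivial codeword sitting inside a stabilizer. By symmetry I will describe the argument for a $Z$-check $v\in V_2$; the case $v\in V_0$ is dual, with $(H_x,C_x,C_z^\perp)$ playing the role of $(H_z,C_z,C_x^\perp)$. Suppose the link of $v$ is disconnected, so that its vertex set $N(v)$ admits a partition $N(v)=S_1\sqcup S_2$ with both parts nonempty and no link edge crossing it. Every link edge at $v$ arises from a pair $\{v_1,v_1'\}$ in the chosen pairing of some common-neighbor set $N(u,v)$ with $u\in V_0$ an $X$-check, so the non-crossing hypothesis forces every such pair to lie wholly inside $S_1$ or wholly inside $S_2$. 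Consequently $|N(u,v)\cap S_1|$ is even for every $X$-check $u$.

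Next, set $c=\mathbf{1}_{S_1}\in\F_2^n$. Because $S_1\subseteq N(v)$ one has $N(u)\cap S_1=N(u,v)\cap S_1$ for every $u\in V_0$, so $(H_z c)_u=|N(u)\cap S_1|$ is even. Hence $c\in\ker H_z=C_z$, and by construction $c\neq 0$ with $\operatorname{supp}(c)=S_1\subsetneq N(v)=\operatorname{supp}(r_v)$, where $r_v$ denotes the row of $H_x$ indexed by $v$, i.e.\ the $Z$-stabilizer generator in $C_x^\perp$ attached to $v$.

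I finish with a case analysis on whether $c$ lies in $C_x^\perp$. If $c\in C_x^\perp$, then $c$ and $r_v+c$ are both $Z$-stabilizers with supports $S_1$ and $S_2$ strictly contained in $N(v)$, so the generator $r_v=c+(r_v+c)$ splits as a sum of two $Z$-stabilizers of strictly smaller support, contradicting the minimality of $H_x$. If instead $c\notin C_x^\perp$, then $c\in C_z\setminus C_x^\perp$ is a nontrivial $Z$-codeword whose support sits inside the support of the $Z$-stabilizer $r_v$, contradicting reasonableness. Either way we reach a contradiction, so the link of $v$ is connected. The step I expect to need the most care is verifying that the purely combinatorial non-crossing condition on the pairing inside each $N(u,v)$ really does translate into the parity statement $|N(u,v)\cap S_1|\equiv 0\pmod 2$ that promotes $\mathbf{1}_{S_1}$ to an element of $C_z$; once this translation is in place, the rest is a short invocation of the minimality and reasonableness hypotheses, exactly paralleling Hastings' Lemma~6.
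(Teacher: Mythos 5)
Your proof is correct and follows essentially the same route as the paper's: the key step in both is that the pairing forces $|N(u,v)\cap S_1|$ to be even for every opposite-type check $u$, so the indicator of a union of link components lies in $C_z$, after which reasonableness and minimality finish the argument. The only difference is cosmetic—you argue by contradiction with a two-part partition and a case split on whether $c\in C_x^\perp$, while the paper argues directly that a single connected component must equal the whole link.
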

\begin{proof}
    Given a $Z$ check that corresponds to the vertex $v_0$,
    let $(V'_{link}, E'_{link}) \subseteq (V_{link}, E_{link}) = (N^E(v_0), N^F(v_0))$ be a connected component.
    We want to show that $(V'_{link}, E'_{link}) = (V_{link}, E_{link})$.
    Consider the $Z$ operator that act on the qubits that corresponds to $V'_{link}$, $\{v_1: \{v_0, v_1\} \in V'_{link}\}$.
    We claim that it is a logical operator.
    To show that, it suffices to show that it commutes with all $X$ checks $v_2$ that share a qubit with $v_0$.
    As discussed above, $v_2$ corresponds to a set of `edges' in $E''_{link} \subseteq E_{link}$, which acts on qubits that correspond to their endpoints.
    Since $(V'_{link}, E'_{link})$ is a connected component,
        the number of endpoints of $E''_{link}$ in $V'_{link}$ is even.
    Thus, the $Z$ operator is a logical operator.

    Because the code is reasonable, and the $Z$ operator is a subset of the $Z$ stabilizer $v_0$,
        the $Z$ operator is a stabilizer.
    Because $H_z$ is minimal, that means the operator is exactly $v_0$.
    That means $(V'_{link}, E'_{link}) = (V_{link}, E_{link})$ as desired.
\end{proof}

For unreasonable codes or codes with non minimal $H_x$ or $H_z$, however, the link of a check is not always connected.
Therefore, we will include dummy faces so that the link of every check becomes connected.
This is important for the subdivided code to have the desired distance property.

The corresponding 2D structure of an unreasonable code will be a relaxed notion of a square complex,
    which we call the square subspace complex.
A square subspace complex $\Tilde{\calS}=(\Tilde{V},\Tilde{E},\Tilde{F})$ also consists of
    vertices $\Tilde{V}$, edges $\Tilde{E}$, and faces $\Tilde{F}$,
    but it is no longer required to be downward closed,
    i.e. a face $f\in\Tilde{F}$ may contain an edge that is not in $\Tilde{E}$.
Generally, one can complete the square subspace complex to form a square complex,
    by including edges and vertices to make it downward close.
Thus, one can view a square subspace complex $\Tilde{\calS}$
    as a part of a square complex $\calS_u=(V_u,E_u,F_u)$,
    where $\Tilde{V}\subseteq V_u$, $\Tilde{E}\subseteq E_u$, and $\Tilde{F}\subseteq F_u$.

We now add dummy faces to $\calS(Q)$ to obtain the square subspace complex $\Tilde{\calS}(Q)$ which is the desired 2D structure.
The vertices and the edges are the same as before, $\Tilde{V} = V$ and $\Tilde{E} = E$,
    while there are additional faces $\Tilde{F} \supseteq F$.
The goal is to make the link of every check $v \in V_0 \cup V_2$ connected.
To do so, we simply add a dummy face $f$ to every pair of edges $e, e' \in N^E(v)$.
The dummy face consists of two existing edges $e, e' \in \Tilde{E}$
    and two imaginary edges $e_* = \{v_*,v_1\}, e_*' = \{v_*,v_1'\} \notin \Tilde{E}$, where $v_1,v_1' \in V_1$,
    and $v_*\notin \Tilde{V}$ is an imaginary vertex that is introduced to define $f$.
An example of the dummy face $f$ is shown in the brown part in~\Cref{fig:subdiv}.
It is clear that by adding faces for every pair of edges $e, e' \in N^E(v)$,
    the link of $v$ is connected.

As discussed, any good quantum LDPC code is reasonable and does not require the addition of dummy faces.
However, for simplicity, we will include dummy faces regardless of whether the code is reasonable to avoid considering two separate cases.
The proof in our paper relies solely on the property that the link is connected
    and can be easily adapted for good quantum LDPC codes without dummy faces.

\begin{figure}
    \centering
        \begin{tikzpicture}[square/.style={regular polygon,regular polygon sides=4}]
            \fill[green,nearly transparent] (-3,0)--(0,0.5)--(3,1)--(0,1.5)--cycle;
            \fill[green,nearly transparent] (-3,0)--(0,-0.5)--(3,-1)--(0,-1.5)--cycle;
            \fill[brown,nearly transparent] (-3,0)--(0,-0.5)--(3,0)--(0,0.5)--cycle;
            \node at (-3,2) [square,draw,blue](x1){};
            \node at (-3,0) [square,draw,blue](x2){};
            \node at (-3,-2) [square,draw,blue](x3){};
            \node at (0,2.5) [circle, draw] (q1) {};
            \node at (0,1.5) [circle, draw] (q2) {};
            \node at (0,0.5) [circle, draw] (q3) {};
            \node at (0,-0.5) [circle, draw] (q4) {};
            \node at (0,-1.5) [circle, draw] (q5) {};
            \node at (0,-2.5) [circle, draw] (q6) {};
            \node at (3,1) [square,draw,red] (z1) {};
            \node at (3,-1) [square,draw,red] (z2) {};
            \node at (3,0) [square,dashed,draw,red](z3){};
            \draw (x1)--(q1);
            \draw (x1)--(q2);
            \draw (x1)--(q4);
            \draw (x1)--(q6);
            \draw (x2)--(q2);
            \draw (x2)--(q3);
            \draw (x2)--(q4);
            \draw (x2)--(q5);
            \draw (x3)--(q3);
            \draw (x3)--(q5);
            \draw (x3)--(q6);
            \draw (x3)--(q1);
            \draw (z1)--(q1);
            \draw (z1)--(q2);
            \draw (z1)--(q3);
            \draw (z2)--(q4);
            \draw (z2)--(q5);
            \draw (z2)--(q6);
            \draw [dashed] (z3)--(q3);
            \draw [dashed] (z3)--(q4);
            \node at (0,-0.9) {$v_1'$};
            \node at (0,0) {$f$};
            \node at (0,0.9) {$v_1$};
            \node at (3.5,0) {$v_{*}$};
            \node at (1.5,0.5) {$e_*$};
            \node at (1.5,-0.5) {$e_*'$};
            \node at (-0.6,0.55) {$e$};
            \node at (-0.6,-0.55) {$e'$};
            \node at (-3,3.2) {\large{ $V_0$}};
            \node at (0,3.2) { \large{ $V_1$}};
            \node at (3,3.2) {\large{ $V_2$}};
        \end{tikzpicture}
        \caption{An example Tanner graph, where the blue, black, red vertices correspond to the $X$ checks, qubits, $Z$ checks respectively, we label them as $V_0,V_1,V_2$. We also include some of the face neighbors of one of the blue check. The green squares are the faces from $S(Q)$ which are downward closed, while the brown square is a dummy face $f$ which contains imaginary edges $e_*, e_*'$ and vertices $v_*$. }
        \label{fig:subdiv}

\end{figure}
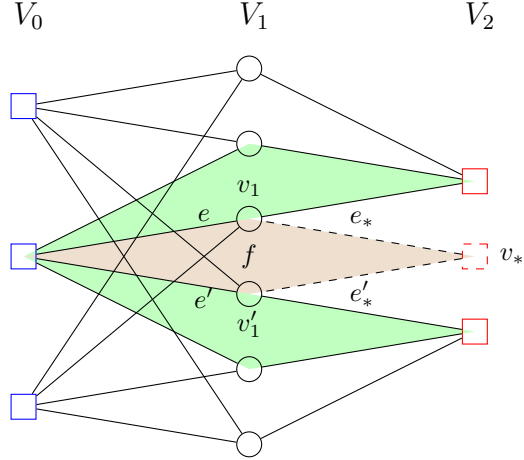

Through our construction in this section, we obtain a map $I_{\code\to \text{square}}$ that maps a qLDPC code (its corresponding chain complex $X$) to a square subspace complex $\Tilde{\calS}(X)$. By our construction, $\Tilde{\calS}(X)$ has bounded degree since every vertex is connected to a constant number of faces. We summarize the result in the following theorem:
\begin{theorem}
    Given an arbitrary bounded degree 3-term chain complex $X$ over $\F_2$,
    there exists an embedding to a bounded degree square subspace complex $I_{\code\to\mathrm{square}}: X\to \Tilde{\calS}(X)$. The square subspace complex has $\Tilde{\calS}(X)$ the same vertices and edges as the Tanner graph of $X$.
    It has a face structure where the link of its level 0 and level 2 vertices is connected.
\end{theorem}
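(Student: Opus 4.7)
The plan is to verify the theorem by walking through the construction already sketched in Section~\ref{sec:2DStruc} and checking each claimed property. First I would set $\Tilde{V} = V = V_0 \sqcup V_1 \sqcup V_2$ and $\Tilde{E} = E = E_0 \sqcup E_1$ coming directly from the Tanner graph $\calT(X)$, so that the second sentence of the theorem is true by definition. Then the content is in constructing $\Tilde{F}$ and verifying boundedness and connectivity.

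Next I would build $\Tilde{F}$ in two stages. For the full faces, I use the commutation relation $H_x H_z^T = 0$: for every pair $(v_0, v_2) \in V_0 \times V_2$ the common neighborhood $N(v_0,v_2) \subseteq V_1$ has even cardinality, so I fix an arbitrary perfect matching on $N(v_0,v_2)$ and, for each matched pair $\{v_1,v_1'\}$, include the full face $\{v_0,v_1,v_1',v_2\}$ whose four boundary edges all lie in $\Tilde{E}$. For the dummy faces, I iterate over each $v \in V_0 \cup V_2$ and over every unordered pair $\{e,e'\} \subseteq N^E(v)$: I introduce a fresh imaginary vertex $v_*$ and imaginary edges $e_* = \{v_*,v_1\}$, $e_*' = \{v_*,v_1'\}$ (where $v_1,v_1'$ are the qubit endpoints of $e,e'$) and include the four-edge face $\{e,e',e_*,e_*'\}$. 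Neither $v_*$ nor $e_*,e_*'$ is added to $\Tilde{V}$ or $\Tilde{E}$, so the result is a square subspace complex in the sense defined above.

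Bounded degree follows from the LDPC assumption on $X$. Each $v \in V_0 \cup V_2$ has $|N^E(v)| = O(1)$, hence at most $O(1)$ full faces (one per matched pair of common neighbors with some check on the other side, and there are only $O(1)$ such other-side checks since each qubit touches $O(1)$ checks) and $\binom{|N^E(v)|}{2} = O(1)$ dummy faces. Each qubit $v_1 \in V_1$ lies in $O(1)$ full faces, because such a face is determined by one of the $O(1)$ incident $X$-checks, one of the $O(1)$ incident $Z$-checks, and a choice of partner qubit; it lies in no dummy faces of any other qubit's generation other than those through its own incident edges, which again is $O(1)$. Thus every genuine vertex of $\Tilde{\calS}(X)$ meets $O(1)$ faces, which is the bounded-degree condition.

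The connectivity of the link at $v \in V_0 \cup V_2$ is then essentially automatic: recall the link has `vertices' $N^E(v)$ and `edges' induced by $N^F(v)$. Because I added a dummy face for \emph{every} pair $\{e,e'\} \subseteq N^E(v)$, the link already contains the complete graph on $N^E(v)$ as a subgraph of its edges, so it is connected (and even a clique). The construction therefore satisfies all three demands of the theorem, and packaging the assignment $X \mapsto \Tilde{\calS}(X)$ as the map $I_{\code \to \mathrm{square}}$ finishes the proof. The only step that needed any genuine input was the even-parity consequence of $H_x H_z^T = 0$ that made the full-face matching possible, so I do not expect a serious obstacle; the main thing to be careful about is that dummy faces are introduced \emph{per pair} of edges (not just enough to spanning-tree the link), since this is what gives the uniform, code-agnostic statement and keeps the construction canonical up to the choice of matchings.
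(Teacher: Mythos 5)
Your proposal is correct and follows essentially the same route as the paper: full faces from pairing the (even-sized) common neighborhoods guaranteed by $H_xH_z^T=0$, dummy faces with imaginary vertices/edges for every pair of edges at each check, bounded degree from the LDPC/bounded-degree assumption, and link connectivity because the dummy faces make the link a complete graph on $N^E(v)$. This matches the construction and justification given in Section~\ref{sec:2DStruc}, so there is nothing further to add.
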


Comparing with the previous result in~\cite{linGeometricallyLocalQuantum2023}, their construction can be seen as a special case of our construction, since their basis code from~\cite{panteleev2022asymptotically,dinurGoodQuantumLDPC2023} has a natural pairing for the qubits between the $X$ and $Z$ checks.
Our construction generalized their idea to an arbitrary 3-term chain complex and also provided a square subspace complex from the chain complex that might be of independent interest.

\section{Construct Geometrically Local Code from the 2D Structure}
\label{sec:subdiv}

In this section, we will utilize the subspace complex $\Tilde{\calS}(Q)$ from~\Cref{sec:2DStruc} to derive our geometrically local code construction. In summary, we would subdivide the faces $\Tilde{F}$ to obtain a new subspace complex $\Tilde{\calS}_L(Q)$. We will assign $Z$ checks, $X$ checks, and qubits to the vertices in $\Tilde{\calS}_L(Q)$ to obtain a Tanner graph $\calT(Q_L)$ of our subdivided code $Q_L$ in~\Cref{subsec:divqldpc}. We will show how to obtain a geometrically local embedding of $\calT(Q_L)$ through~\Cref{thm:embedding} in~\Cref{subsec:embed}, proving $Q_L$ is geometrically local.

Recall that each quantum CSS code $Q$ has a corresponding chain complex $X$. In the rest of our paper, we will specify our original code $Q$ as $Q(X)$, and the subdivided code $Q_L$ as $Q(X_L)$. We will also use level 0, level 1, and level 2 vertices to represent $X$ checks, qubits, and $Z$ checks respectively. We would also only consider the property of coboundary operators, as the boundary operators have the same proof by symmetry.

\subsection{Subdividing Arbitrary QLDPC Code}
\label{subsec:divqldpc}
In the 2D subspace complex $\Tilde{\calS}(Q)$, we have defined two types of faces as shown in~\Cref{fig:subdiv}: The full faces are colored green, and the dummy faces are colored brown. Taking an odd number $L$ as the subdivision parameter, we would first divide the full faces by mapping each square to a $L\times L$ grid as shown in~\Cref{subfig:fullface}. We label the vertices from $(0,0)$ to $(L,L)$ from down left to up right in the figure. Note that after dividing the full faces, for every dummy face, its two edges $e\in \Tilde{E}$ is subdivided to a $L$ chain. Depending on it being a dummy face attached to a level 0 or level 2 vertex, we would subdivide it to a $(L-1)\times(L-1)$ grid located on the square $(0,0)$ to $(L-1,L-1)$ or $(1,1)$ to $(L,L)$ respectively. Readers can refer to~\Cref{fig:squarediv} for an example. We can obtain a new subspace complex $\Tilde{\calS}_L(Q)$ through the process, where its vertex and edge set are clear from the context.\footnote{Actually, $\Tilde{\calS}_L(Q)$ is a square complex. We would not utilize its face structure, so we didn't define it.}

From the subdivided subspace complex $\Tilde{\calS}_L(Q)$, we can obtain a new chain complex $X_L$. For each subdivided face, we define the level 0, level 1, and level 2 vertices of the new chain complex $X_L$ as follows, under the convention that coordinates increase as going right and up in~\Cref{fig:squarediv}:
\begin{itemize}
    \item $X_L(0)$ are the set of vertices with $i,j$ both being even.
     \item $X_L(1)$ are the set of vertices with $i,j$ one being even, one being odd.
     \item $X_L(2)$ are the set of vertices with $i,j$ both being odd.
\end{itemize}
The corresponding vertices are colored blue, black, and red in~\Cref{fig:squarediv}.

The construction above provides us a Tanner graph $\calT(Q_L)$ with corresponding code $Q(X_L)$, with the coboundary operators $\delta_i$ defined by the adjacency matrices on the graph. It is easy to verify that $\delta_1\delta_0=0$, meaning that the $X$ and $Z$ stabilizers commute.

Now we proceed to define the chain map $\calF$ from $X$ to $X_L$. This chain map will relate the codewords of $Q(X)$ to the codewords of $Q(X_L)$. It will also play an important role when we are studying the properties of the code $Q(X_L)$ in~\Cref{subsec:prop}.  For the convenience of describing the chain map, we would first define some regions in the faces $\Tilde{F}$:

\begin{itemize}
    \item $S$ contains vertices with $0\leq i,j\leq L-1$,
    \item $T$ contains vertices with $0\leq i\leq L-1,j=L$ or $0\leq j\leq L-1,i=L$.
    \item $U$ contains the vertice $i=j=L$.
\end{itemize}
Note that for the dummy faces, the regions would only include valid vertices, as shown in~\Cref{fig:squarediv}.

\begin{figure}
    \centering
    \begin{subfigure}[b]{0.3\textwidth}

        \centering
        \resizebox{\linewidth}{!}{
        \begin{tikzpicture}
        \foreach \row in {0,1,2}
        {
            \foreach \column in {0,1,2}
            {
                \pgfmathtruncatemacro{\rw}{2*\row}
                \pgfmathtruncatemacro{\cl}{2*\column}
                \node at (2*\row,2*\column) [circle, draw,blue](v\rw\cl){};
                \pgfmathtruncatemacro{\cl}{2*\column+1}
                \node at (2*\row,2*\column+1) [circle,fill,inner sep=1pt](v\rw\cl) {};
                \pgfmathtruncatemacro{\rw}{2*\row+1}
                \node at (2*\row+1,2*\column+1) [circle, draw,red](v\rw\cl){};
                \pgfmathtruncatemacro{\cl}{2*\column}
                \node at (2*\row+1,2*\column) [circle,fill,inner sep=1pt](v\rw\cl) {};
            };
        };

    \foreach \row in {0,1,...,5}
    {
        \foreach \column in {0,1,...,4}
        {
            \pgfmathtruncatemacro{\cl}{\column+1}
            \draw (v\row\column)--(v\row\cl);
        }
    }
    \foreach \column in {0,1,...,5}
    {
        \foreach \row in {0,1,...,4}
        {
            \pgfmathtruncatemacro{\rw}{\row+1}
            \draw (v\row\column)--(v\rw\column);
        }
    }
    \draw[blue!40] (-0.3,-0.3) rectangle (4.3,4.3);
    \draw[black!40] (4.7,-0.3) rectangle (5.3,4.3);
    \draw[black!40] (-0.3,4.7) rectangle (4.3,5.3);
    \draw[red!40] (4.7,4.7) rectangle (5.3,5.3);
    \node at (3,-0.5){$S$};
    \node at (5.5,3) {$T$};
    \node at (3,5.5) {$T$};
    \node at  (5.5,5.5){$U$};
    \end{tikzpicture}
        }
    \caption{Subdivision of a full face}
    \label{subfig:fullface}
    \end{subfigure}
    \begin{subfigure}[b]{0.3\textwidth}
     \resizebox{\linewidth}{!}{
     \begin{tikzpicture}
        \foreach \row in {0,1,2}
        {
            \foreach \column in {0,1,2}
            {
                \pgfmathtruncatemacro{\rw}{2*\row}
                \pgfmathtruncatemacro{\cl}{2*\column}
                \node at (2*\row,2*\column) [circle, draw,blue](v\rw\cl){};
                \pgfmathtruncatemacro{\cl}{2*\column+1}
                \ifnum \cl=5

                \else
                    \node at (2*\row,2*\column+1) [circle,fill,inner sep=1pt](v\rw\cl) {};
                    \pgfmathtruncatemacro{\rw}{2*\row+1}

                    \ifnum \rw=5

                    \else
                        \node at (2*\row+1,2*\column+1) [circle, draw,red](v\rw\cl){};
                        \pgfmathtruncatemacro{\cl}{2*\column}
                        \node at (2*\row+1,2*\column) [circle,fill,inner sep=1pt](v\rw\cl) {};
                    \fi
                \fi
            };
        };
    \node at (0,5) [circle,fill,inner sep=1pt](v05){};
    \node at (5,0) [circle,fill,inner sep=1pt](v50){};
    \node at (1,4) [circle,fill,inner sep=1pt](v14){};
    \node at (3,4) [circle,fill,inner sep=1pt](v34){};

    \foreach \row in {0,1,...,4}
    {
        \foreach \column in {0,1,...,3}
        {
            \pgfmathtruncatemacro{\cl}{\column+1}
            \draw (v\row\column)--(v\row\cl);
        }
    }
    \foreach \column in {0,1,...,4}
    {
        \foreach \row in {0,1,...,3}
        {
            \pgfmathtruncatemacro{\rw}{\row+1}
            \draw (v\row\column)--(v\rw\column);
        }
    }

    \draw (v04)--(v05);
    \draw (v40)--(v50);
    \draw[blue!40] (-0.3,-0.3) rectangle (4.3,4.3);
    \draw[black!40] (4.7,-0.3) rectangle (5.3,0.3);
    \draw[black!40] (-0.3,4.7) rectangle (0.3,5.3);
    \node at (3,-0.5){$S$};
    \node at (5.5,0.5) {$T$};
    \node at (0,5.5) {$T$};

    \end{tikzpicture}}
    \caption{Dummy face of level 0 vertex}
    \end{subfigure}
     \begin{subfigure}[b]{0.3\textwidth}
     \resizebox{\linewidth}{!}{\begin{tikzpicture}
        \foreach \row in {0,1,2}
        {
            \foreach \column in {0,1,2}
            {
                \pgfmathtruncatemacro{\rw}{2*\row}
                \pgfmathtruncatemacro{\cl}{2*\column}
                \node at (2*\row,2*\column) [circle, draw,blue](v\rw\cl){};
                \pgfmathtruncatemacro{\cl}{2*\column+1}
                \node at (2*\row,2*\column+1) [circle,fill,inner sep=1pt](v\rw\cl) {};
                \pgfmathtruncatemacro{\rw}{2*\row+1}
                \node at (2*\row+1,2*\column+1) [circle, draw,red](v\rw\cl){};
                \pgfmathtruncatemacro{\cl}{2*\column}
                \node at (2*\row+1,2*\column) [circle,fill,inner sep=1pt](v\rw\cl) {};
            };
        };

    \foreach \row in {1,2,...,5}
    {
        \foreach \column in {1,2,...,4}
        {
            \pgfmathtruncatemacro{\cl}{\column+1}
            \draw (v\row\column)--(v\row\cl);
        }
    }
    \foreach \column in {1,2,...,5}
    {
        \foreach \row in {1,2,...,4}
        {
            \pgfmathtruncatemacro{\rw}{\row+1}
            \draw (v\row\column)--(v\rw\column);
        }
    }
    \draw (v00)--(v01);
    \draw (v00)--(v10);
    \draw (v50)--(v51);
    \draw (v05)--(v15);
    \draw[blue!40] (0.7,0.7) rectangle (4.3,4.3);
    \draw[black!40] (4.7,-0.3) rectangle (5.3,4.3);
    \draw[black!40] (-0.3,4.7) rectangle (4.3,5.3);
    \draw[red!40] (4.7,4.7) rectangle (5.3,5.3);
    \filldraw[white] (-0.5,-0.5) rectangle (0.5,4.5);
    \filldraw[white] (-0.5,-0.5) rectangle (4.5,0.5);
    \node at (3,0.5){$S$};
    \node at (5.5,3) {$T$};
    \node at (3,5.5) {$T$};
    \node at  (5.5,5.5){$U$};
    \end{tikzpicture}}
    \caption{Dummy face of level 2 vertex}
    \end{subfigure}
    \caption{The subdivision of different faces,  where the blue vertices stand for level 0 vertices, black vertices stand for level 1 vertices, and red vertices stand for level 2 vertices.  We also labeled the image of chain map $\calF$ as $S,T,U$ regions.}
    \label{fig:squarediv}
\end{figure}
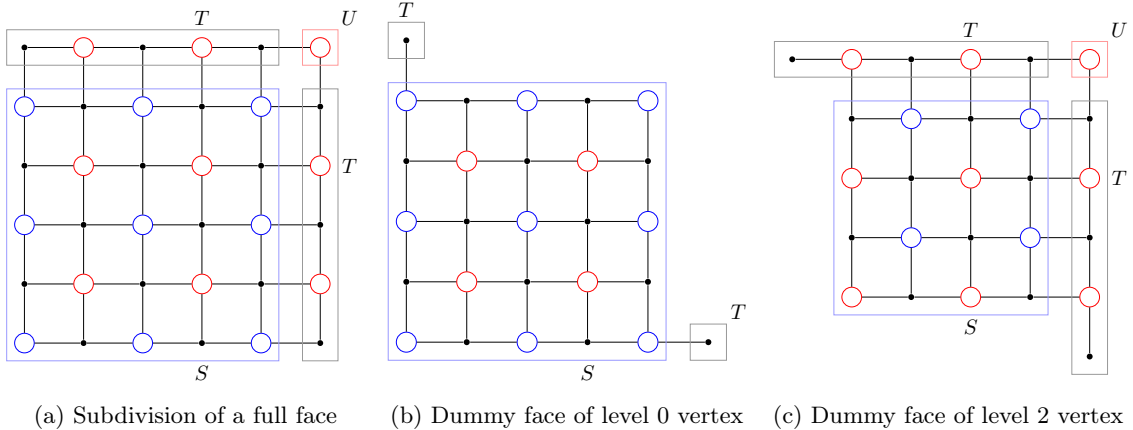

We define the maps $\calF_i\colon \F_2^{X(i)}\to \F_2^{X_L(i)},i\in\{0,1,2\}$:
\begin{itemize}
    \item Given $\Tilde{c}_0\in \F_2^{X(0)}$, we define $\calF_0(\Tilde{c}_0)$ by repeating the value $\Tilde{c}_0(v_0)$ at each component $S_{v_0}$ corresponding to $v_0\in X(0)$, including the level 0 dummy faces. For level 2 dummy faces, we set them to 0.
    \item Given $\Tilde{c}_1\in \F_2^{X(1)}$,  we define $\calF_1(\Tilde{c}_1)$ by repeating the value $\Tilde{c}_1(v_1)$ at each component $T_{v_1}$ corresponding to $v_1\in X(1)$, and the values in other part to $0$.
    \item Given $\Tilde{c}_2\in \F_2^{X(2)}$, we define $\calF_2(\Tilde{c}_2)$ by setting the value $\Tilde{c}_2(v_2)$ a the corresponding vertex $U_{v_2}$, and 0 elsewhere.
\end{itemize}

We have the following theorem for the maps $\calF_i$.
\begin{theorem}
    The maps $\calF_i(i=0,1,2)$ form a chain map between $X$ and $X_L$. Equivalently, we have the following commutation diagram:
    \[
    \begin{tikzcd}
    \F_2^{X(0)}\arrow[r,"\delta_0"]\arrow[d,"\calF_0"] &\F_2^{X(1)}\arrow[r,"\delta_1"] \arrow[d,"\calF_1"]&\F_2^{X(2)}\arrow[d,"\calF_2"]\\
    \F_2^{X_L(0)}\arrow[r,"\delta_0"] &\F_2^{X_L(1)}\arrow[r,"\delta_1"] &\F_2^{X_L(2)}
    \end{tikzcd}
\]

\end{theorem}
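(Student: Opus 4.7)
The plan is to exploit linearity to reduce to indicator vectors: writing $\calF_i(\mathbf{1}_v) = \mathbf{1}_{\sigma(v)}$ where $\sigma(v_0)=S_{v_0}$, $\sigma(v_1)=T_{v_1}$, and $\sigma(v_2)=U_{v_2}$, the identity $\delta_i\calF_i = \calF_{i+1}\delta_i$ becomes the pointwise statement that at every level-$(i+1)$ vertex $w \in X_L(i+1)$, the parity of level-$i$ neighbors of $w$ lying in $\sigma(v)$ equals the parity of $v' \sim v$ in the original Tanner graph with $w \in \sigma(v')$.

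To verify this for $i=0$, I classify each $w \in X_L(1)$ by its geometric position in the subdivided 2D structure into four classes: (i) strict interior of some face's grid; (ii) interior of a subdivided top or right edge of a face, sitting over an original edge of type $(v_1,v_2)$; (iii) interior of a subdivided left or bottom edge, over an edge of type $(v_0,v_1)$; and (iv) a qubit corner $w=v_1$. In cases (i) and (iii), the two level-$0$ neighbors of $w$ either both lie inside a single face's $S$ region or both sit on the chain of a single $(v_0,v_1)$ edge, so their parity into any $S_{v_0}$ is zero, and simultaneously $w$ lies in no $T$ strip, so both sides vanish. In case (iv), the level-$0$ neighbors of $v_1$ are in bijection with the X-check neighbors $v_0'$ of $v_1$ in the original Tanner graph (one step along each edge $(v_0',v_1)$), each belonging to $S_{v_0'}$; this yields $\mathbf{1}[v_0 \sim v_1]$ on the LHS, and since $v_1 \in T_{v_1'}$ only for $v_1'=v_1$, the RHS also equals $\mathbf{1}[v_0 \sim v_1]$.

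The decisive case is (ii). Here the level-$0$ neighbors of $w$ are obtained by stepping once ``into'' each face containing the original edge $(v_1,v_2)$, with each step landing in the $S$ region of that face's X-check corner. By the face construction of \Cref{sec:2DStruc}, for each X-check $v_0' \sim v_1$ (with the fixed $v_2 \sim v_1$), the pairing on $N(v_0',v_2)$ produces exactly one face with $v_0', v_1, v_2$ among its corners. Hence the LHS equals $\mathbf{1}[v_0 \sim v_1]$, matching the unique RHS contribution from $v_1$, the single qubit $v_1'$ with $w \in T_{v_1'}$.

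The second commutation square $\delta_1\calF_1 = \calF_2\delta_1$ is handled by a symmetric case analysis on level-$2$ vertices of $X_L$, swapping the roles of X-checks and Z-checks and using $U_{v_2}$ as the cell image. Dummy faces are accommodated in the same framework, since their truncated $S$ and $T$ (and absent $U$) subregions contribute as expected and were added precisely to keep the link structure well-behaved. I expect the main obstacle to be the bookkeeping in case (ii), where one must use the pairing on common qubit neighbors to match face counts with Tanner-graph adjacency; once that is settled, the remaining cases follow by routine neighbor-counting in the subdivided grid.
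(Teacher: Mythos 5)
Your proposal is correct and in substance follows the same route as the paper's own proof: a direct verification of the two commuting squares using (a) constancy of $\calF_0$ on the $S$-regions so interior contributions cancel, (b) the key pairing fact that for each X-check $v_0'$ adjacent to $v_1$ there is exactly one face with corners $v_0',v_1,v_2$ attached to the subdivided $(v_1,v_2)$ branch, and (c) the one-to-one correspondence between $T$-branch endpoints at $U_{v_2}$ and original edges, which is exactly the paper's region-by-region ($S$, $S$--$T$, $T$--$U$) argument recast as a pointwise check over classified target vertices. The only minor imprecision is describing the second square as a ``symmetric'' case analysis---the construction is not symmetric under swapping $X$ and $Z$ since $\calF_0$ spreads over two-dimensional $S$-regions while $\calF_2$ is supported on the single vertex $U_{v_2}$---but the analogous (in fact easier) check at level-2 vertices goes through as you indicate, so this does not constitute a gap.
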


\begin{proof}
    By our construction of $\calF$, we can see that inside each region $S$, $\delta_0\calF_0(c)|_S=0$. Since $\calF_1(\delta_0 c)$ would only have nonzero images in $T$, we can check that the commutation diagram holds inside each region $S$.

We proceed to check the commutation relation between different regions $S,T,U$.  Consider the neighbor structure of the regions in $X_L$. Note that for the regions we specified in~\Cref{fig:squarediv}, the $\delta$ operator would be applied in the direction $S\to T\to U$.

We would first verify the commutation relation for the coboundary operators $\delta$ between $U$ and $T$, i.e. $\delta_1\calF_1=\calF_2 \delta_1$. For each region $U$, consider its preimage $u$ in $X(2)$, we can see that for each $v_1\in X(1)$ that is connected to $v$, we have exactly one region $T_{v_1}$ in $X_L$ that is connected to $U$. Since the level 1 vertex that connects to $U$ in $T_v$ will be assigned value $c_1(v)$, this implies that $\delta_1\calF_1=\calF_2 \delta_1$ for the part of $\delta_1$ implied by the edges between $T$ and $U$.

Now we verify the commutation relation for coboundary operators $\delta$ implied by the edges between the region $S$ and $T$. We first note that all the level 1 vertices in $S$ and level 2 vertices in $T$ are assigned to 0, and each level 2 vertex in $T$ connects to two level 1 vertices in $T$ with the same value, thus we have $\delta_1\calF_1=\calF_2 \delta_1$. We can observe that the region $T$ has the structure as shown in~\Cref{fig:Tstruct}, which has a center level 1 vertex and several branches. For the center level 1 vertex of $T$, we observe that each $\delta_0$ edge from a level 0 vertex connected to it would also correspond to an original $\delta_0$ edge in $X$. When we are considering the branches of $T$, we can ignore the dummy faces, since the level 0 dummy faces are not connected to $T$, while level 2 dummy faces would be assigned to all zero. For each branch in $T$, we can observe that it corresponds to an edge from a level 1 vertex to a level 2 vertex in $X$. From our construction, each level 0 vertex $v_0\in X(0)$ would have exactly one face $f$ in $S_{v_0}$ that is connected to this branch, and the level 0 vertices that connect to level 1 vertices in this branch are all assigned as $c_0(v_0)$. Since each level 1 vertex are assigned as $c_1(v_1)=(\delta_0 c_0)(v_1)$ we have $\delta_0\calF_0=\calF_1\delta_0$.
\end{proof}

\begin{figure}
    \centering
    \begin{subfigure}[b]{0.5\textwidth}
        \resizebox{\linewidth}{!}{
        \begin{tikzpicture}
            \foreach \row in {0,1,2}
            {
                \node at (0,2*\row) [circle,fill,inner sep=1.5pt] (v0\row){};
                \node at (0,2*\row+1) [circle, draw,red] (u0\row){};
                \node at (0,-2*\row) [circle,fill,inner sep=1.5pt] (-v0\row){};
                \node at (0,-2*\row-1) [circle, draw,red] (-u0\row){};
                \node at (2*\row,0) [circle,fill,inner sep=1.5pt] (w\row0){};
                 \node at (2*\row+1,0) [circle, draw,red] (x\row0){};
                \node at (-2*\row,0) [circle,fill,inner sep=1.5pt] (-w\row0){};
                \node at (-2*\row-1,0) [circle, draw,red] (-x\row0){};
            }
            \draw (v00)--(u00);
            \draw (-v00)--(-u00);
            \draw (w00)--(x00);
            \draw (-w00)--(-x00);
            \draw (u00)--(v01);
            \draw (-u00)--(-v01);
            \draw (x00)--(w10);
            \draw (-x00)--(-w10);
            \draw (v01)--(u01);
            \draw (-v01)--(-u01);
            \draw (w10)--(x10);
            \draw (-w10)--(-x10);
            \draw (u01)--(v02);
            \draw (-u01)--(-v02);
            \draw (x10)--(w20);
            \draw (-x10)--(-w20);
            \node at (0,5) [circle,fill, white, inner sep=5pt] {};
            \node at (0,-5) [circle,fill, white, inner sep=5pt] {};
            \node at (5,0) [circle,fill, white, inner sep=5pt] {};
            \node at (-5,0) [circle,fill, white, inner sep=5pt] {};
       \end{tikzpicture}
        }
    \end{subfigure}
    \label{fig:Tstruct}
    \caption{An example structure of the $T$ region, where we use black nodes for level 1 vertices and red nodes for level 2 vertices.}
\end{figure}
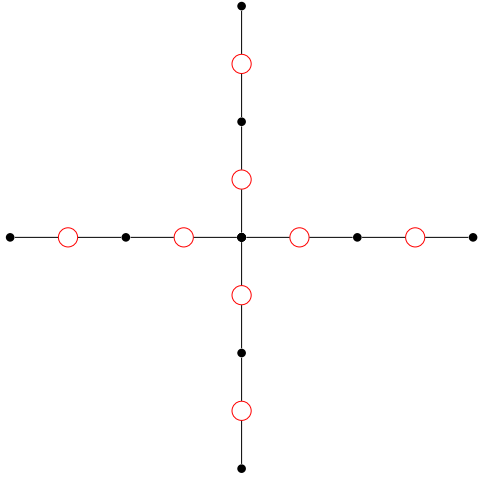

\subsection{Geometrically Local Embedding}
\label{subsec:embed}
In~\Cref{subsec:divqldpc}, we have constructed the code $Q(X_L)$ together with its corresponding 2D subspace complex $\Tilde{\calS}_L(Q)$. We can construct the map $I_{\code_L\to\text{square}_L}$ by mapping the chain complex $X_L$ to its corresponding 2D subspace complex $\Tilde{\calS}_L(Q)$, identifying $\Tilde{V}_L=X_L(0)\sqcup X_L(1)\sqcup X_L(2)$, and edges $\Tilde{E}$ with nonzero entries in (co)boundary operators.

In this section, we will show that the code $Q(X_L)$ has an geometrically local embedding. We want to apply the map from~\Cref{thm:embedding} to obtain the embedding of the code. Note that while $\Tilde{\calS}$ is not strictly a square complex, we can apply~\Cref{thm:embedding} to its underlying complex, i.e. its downward closed completion.
Let ${\calS}_u(Q)=(V_u,E_u,F_u)$ be the downward closed completion of $\Tilde{\calS}(Q)$
    and $\calS_L(Q)=(V_L,E_L,F_L)$ be its subdivision with parameter $L$.
It is easy to see that $\Tilde{\calS}_L(Q)=(\Tilde{V}_L,\Tilde{E}_L,\Tilde{F}_L)$ is a subspace of $\calS_L(Q)$.

If we set $L=\Theta(|V_{u}|^{\frac{1}{D-2}})$ for the map $I_{\text{square}_L\to \text{euclid}}$ from~\Cref{thm:embedding}, we can obtain a geometrically local and bounded density embedding for $V_L$ in $\Z^D$. Now we verify that $I_{\text{square}_L\to \text{euclid}}|_{\Tilde{V}_L}$ is a geometrically and bounded density local map. It is easy to verify that the bounded density property still holds, since $\Tilde{V}_L$ is a subset of $V_L$, thus for any $x$, its preimage won't increase. For the geometrically local property, since $\Tilde{E}_L\subset E_L$, thus for any neighbor $(v_0,v_1)\in \Tilde{E}_L$, $|f(v_0)-f(v_1)|\leq a$.

By embedding the vertices $\Tilde{V}_L$ in $\Z^D$, we also obtained a geometrically local embedding for vertices in $X_L(i)$, since $\Tilde{V}_L=X_L(0)\sqcup X_L(1)\sqcup X_L(2)$, thus the code $Q(X_L)$ indeed have a geometrically local embedding $I_{\code_L\to\euclid}=I_{\text{square}_L\to \text{euclid}}\circ I_{\code_L\to\text{square}_L}$.

\section{Properties of the Subdivided Code}
\label{subsec:prop}

In this section, we will introduce some results from ~\cite{linGeometricallyLocalQuantum2023} that will provide us the bounds on the dimension, distance, and energy barrier of our code $Q(X_L)$. Though our construction is different from their construction, we observe that some of their proofs can be applied more generally to our setting. Thus in this section, we will first introduce some theorems from~\cite{linGeometricallyLocalQuantum2023}, and discuss the changes that should be made in the proof of our construction. Interested readers can refer to their paper for the full proof.

We first summarize the properties of our code in the following theorem:

\begin{theorem}[Formal Statement of~\Cref{thm:main}]\label{thm:formal}
    Given a quantum LDPC code $Q$ on $n$ qubits, with dimension $k=\Theta(n)$ and distance $d=\Theta(n)$, for any dimension $D\geq 3$, there exists a $D$ dimension geometrically local subdivided code $Q_L$ on $N=\Theta(n^{\frac{D}{D-2}})$ qubits, with dimension $k=\Omega(N^{\frac{D-2}{D}})$, distance $d=\Omega(N^{\frac{D-1}{D}})$. Moreover, if the quantum LDPC code has energy barrier $\calE=\Theta(n)$, our new code has energy barrier $\calE=\Omega(N^{\frac{D-2}{D}})$.
\end{theorem}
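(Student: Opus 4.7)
The plan is to follow the framework developed in~\cite{linGeometricallyLocalQuantum2023}, using the chain map $\calF : X \to X_L$ constructed in~\Cref{subsec:divqldpc} as the main bridge between the original qLDPC code $Q(X)$ and the subdivided code $Q(X_L)$. The chain map is designed so that codewords of $Q(X)$ lift to codewords of $Q(X_L)$ with weights blown up by the subdivision parameter $L$. Conversely, one constructs a coarsening map that sends low-weight cocycles of $Q(X_L)$ back to cocycles of $Q(X)$, so that the homology of the two codes is essentially the same up to a rescaling by $L$.

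First, I would establish the dimension bound. With $L = \Theta(n^{1/(D-2)})$ and the total number of qubits $N = \Theta(n L^2) = \Theta(n^{D/(D-2)})$, one has $n = \Theta(N^{(D-2)/D})$. The chain map $\calF$ induces a map on $\F_2$-cohomology: lifting a nonzero cohomology class of $X$ along $\calF_1$ produces a cocycle of $X_L$ of weight $\Theta(L\,d(Q))$ which cannot be a coboundary, because coarsening would then yield a coboundary in $X$. This gives $k(Q_L) \geq k(Q) = \Theta(n) = \Omega(N^{(D-2)/D})$, and simultaneously provides the distance upper bound side of the argument.

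For the distance lower bound, the strategy is to show that any nontrivial cocycle $c \in Z^1(X_L) \setminus B^1(X_L)$ falls into one of two cases. Either it coarsens to a nontrivial cocycle in $X$, in which case its weight is at least $\Omega(L \cdot d(Q)) = \Omega(N^{(D-1)/D})$, or it coarsens to something trivial in $X$, in which case its weight must be large by a local expansion argument on the subdivided faces. The energy-barrier bound follows from the analogous argument for walks: one coarsens each intermediate configuration along a walk in $X_L$ to obtain a walk in $X$ that witnesses a nontrivial class, and the maximum syndrome along the coarsened walk gives a lower bound of $\Omega(n) = \Omega(N^{(D-2)/D})$ on the syndrome along the original walk.

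The main obstacle is the local expansion argument for the second case. In~\cite{linGeometricallyLocalQuantum2023} the balanced product structure makes the 2D local geometry essentially a product of two expanders, yielding clean expansion estimates on pieces of cocycles. In our setting the 2D structure $\Tilde{\calS}(Q)$ is obtained from the arbitrary Tanner graph of $Q$ by the pairing construction of~\Cref{sec:2DStruc}, and the only guarantee on the local geometry is the connectedness of the link at each level-0 and level-2 vertex. The key technical lemma I expect to need, which is announced to be developed in~\Cref{sec:exp}, is that if a cocycle on $\Tilde{\calS}_L(Q)$ is confined to a bounded neighborhood of one check vertex, then the link connectedness together with the standard $L \times L$ grid expansion inside each face forces it to be a coboundary on that star. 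Once this local expansion statement is in place, one decomposes any low-weight cocycle (or walk) in $X_L$ into a global part pulled back from $X$ and a sum of local pieces supported near individual check vertices, and the three bounds assemble by following the bookkeeping in~\cite{linGeometricallyLocalQuantum2023} essentially verbatim.
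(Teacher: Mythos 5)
Your overall route is the same as the paper's: set $L=\Theta(n^{1/(D-2)})$ so that $N=\Theta(nL^2)$ (\Cref{clm:size}), obtain the dimension from the chain map $\calF$ of \Cref{subsec:divqldpc} (\Cref{lem:dim}; note the paper gets a bijection on cohomology, not just the injection you describe, though the lower bound suffices here), obtain locality from \Cref{thm:embedding}, and obtain distance and energy barrier by establishing small-set (co)boundary expansion of $X_L$ with parameters $\Theta(1/L)$ through a local-to-global cleaning argument following \cite{linGeometricallyLocalQuantum2023}. However, the key local lemma you isolate is weaker than what this assembly needs. ``A cocycle confined to the star of a check is a coboundary on that star'' is a qualitative statement; the local-to-global theorem and \Cref{thm:dis} require quantitative $(\beta_i,\eta_i)$-coboundary expansion \emph{with boundary control} for the local regions $S$ and $T$, with $\beta_i=\Theta(1/L)$ and $\eta_i=\Theta(1)$, at level $0$ for both $S$ and $T$ and also at level $1$ for $S$ (\Cref{thm:level0} and \Cref{thm:level1}). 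Moreover this expansion is not ``$L\times L$ grid expansion inside each face plus link connectedness'' applied face by face: the paper's proof couples the different faces of a star through the seam $\calM$, via functional inequalities for the generalized repetition codes $D(w)$ centered on the seam, and the parameter $\eta$ (bounding the weight pushed to the boundary $S_\partial$, $T_\partial$) is exactly what allows errors to be moved onto the $T$ regions without spoiling the bookkeeping. Without the explicit $1/L$ constants, neither the factor-$L$ gain in the distance nor the barrier bound assembles.

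Second, your energy-barrier argument by coarsening each configuration of a walk does not work as stated: coarsening does not send single bit flips to single bit flips, and, more seriously, the coarse syndrome of an intermediate configuration is not controlled by its fine syndrome (a configuration that is non-constant along a $T$ chain can have a small syndrome in $X_L$ while any read-off of coarse values produces a large or ambiguous syndrome in $X$); relating the two is precisely what the quantitative cleaning does. The paper instead derives the barrier directly from the small-set expansion parameters, $\calE \ge \alpha\beta|X_L(1)| = \Omega(L^{D-2}) = \Omega(N^{\frac{D-2}{D}})$ (\Cref{thm:dis}). So once the quantitative local expansion of \Cref{sec:exp} is in place, you should route the energy barrier through $\beta$ rather than through walk coarsening; with these two repairs your plan coincides with the paper's proof.
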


  We will define some expansion properties of the complex $X_L$ in this section, and use these properties to show the bounds of the code dimension, distance, and energy barrier. We will prove these expansion properties in~\Cref{sec:exp}.

The following estimation on the size of our new chain complex $X_L$ turns out to be useful:
\begin{claim}\label{clm:size}
  Given the maximum degree $\Delta$ of the chain complex $X$, we have the following bound for the complex $X_L$
 \begin{align*}
      \frac{L^2}{4}|X(i)|\leq |X_L(i)|\leq \frac{\Delta^2L^2}{2}(|X(0)|+2|X(i)|+|X(2)|).
 \end{align*}
 In particular, if $\Delta = \Theta(1)$, we have $|X_L(i)|=\Theta(L^2|X(i)|)$.
\end{claim}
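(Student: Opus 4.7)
The plan is to prove the two bounds by counting level-$i$ vertices face by face in $\Tilde{\calS}_L(Q)$. First I would do a per-face count: each full face subdivides into an $(L+1)\times(L+1)$ grid of vertices, and since $L$ is odd each coordinate direction has exactly $(L+1)/2$ even and $(L+1)/2$ odd entries, so a full face contributes exactly $((L+1)/2)^2$ level-$0$ vertices, $2((L+1)/2)^2$ level-$1$ vertices, and $((L+1)/2)^2$ level-$2$ vertices. Dummy faces yield slightly smaller but still $\Omega(L^2)$ counts at each level.

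For the lower bound I would attach to each $v \in X(i)$ a set $R_v$ of at least $L^2/4$ level-$i$ vertices in $X_L(i)$ with the $R_v$'s pairwise disjoint. When $i = 0$, I pick any one face $f$ incident to $v_0$ and let $R_{v_0}$ be the $((L+1)/2)^2 \geq L^2/4$ level-$0$ vertices in the $S$-region of $f$ as defined in~\Cref{subsec:divqldpc}; disjointness follows because each face has a unique level-$0$ corner and the edges incident to that corner cannot belong to another face with a different level-$0$ corner. A symmetric partition around the level-$2$ corner of a face handles $i = 2$ by an identical argument. For $i = 1$, each face has at least $L^2/2$ level-$1$ vertices in its strict interior, and these can be split equally between the two qubit corners of the face, giving each qubit $\geq L^2/4$ disjoint level-$1$ vertices. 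Summing over $v$ gives the claimed lower bound.

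For the upper bound I would first control $|\Tilde{F}|$ via the degree bound. Each $v \in V_0 \cup V_2$ is the distinguished corner of at most $\binom{\deg v}{2} \leq \Delta^2/2$ full faces (a full face is determined by $v$ together with a chosen pair of qubits in $N(v)$) and of at most $\binom{\Delta}{2} \leq \Delta^2/2$ dummy faces, so $|\Tilde{F}| \leq \Delta^2 (|X(0)| + |X(2)|)$. Since each face contributes at most $(L+1)^2 \leq 2L^2$ vertices of any level, $|X_L(i)| \leq 2L^2 |\Tilde{F}|$, which fits inside the stated upper bound. For $\Delta = \Theta(1)$, both bounds then scale as $\Theta(L^2 |X(i)|)$ because $|X(0)|, |X(1)|, |X(2)|$ are mutually comparable up to constants in any bounded-degree complex.

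The main technical delicacy is the disjointness argument for the lower bound, which requires carefully tracking vertices that are shared on boundaries between adjacent faces. I would handle this either by restricting to strict interior vertices of each face and absorbing the $O(L)$ boundary loss into the constant, or by verifying directly that $S$-regions attached to distinct level-$0$ corners stay disjoint throughout the subspace complex. A related minor point is that dummy faces have asymmetric grids around their special corner, but they still contain the relevant full $S$-region (respectively the symmetric region) so the same assignment scheme goes through unchanged.
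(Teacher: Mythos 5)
Your overall strategy---bound the number of faces of $\Tilde{\calS}(Q)$ via the degree bound and multiply by per-face vertex counts, and for the lower bound assign to each original vertex a disjoint block of level-$i$ vertices of one incident face---is the same as the paper's, and your per-face counts and the level-0/level-2 disjointness argument are sound. However, two steps do not deliver the inequalities as stated. For the upper bound you count full faces only through their check corners, getting $|\Tilde{F}| \leq \Delta^2(|X(0)|+|X(2)|)$ and hence $|X_L(i)| \leq 2\Delta^2 L^2(|X(0)|+|X(2)|)$; this does \emph{not} fit inside $\frac{\Delta^2L^2}{2}(|X(0)|+2|X(i)|+|X(2)|)$---for $i=0$ the containment would require $|X(0)|+3|X(2)|\leq 0$. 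The $2|X(i)|$ term in the claim arises by bounding the full faces through their incidence with level-$i$ vertices: every full face contains an original level-$i$ vertex (its level-$i$ corner), and each level-$i$ vertex lies in at most $\Delta^2$ full faces (at most one per path $v_0\to v_1\to v_2$ through it), so $|F|\leq \Delta^2|X(i)|$, while dummy faces contribute at most $\frac{\Delta^2}{2}(|X(0)|+|X(2)|)$; combining with at most $L^2$ level-$i$ vertices per subdivided face (the per-face level-$i$ count is at most $(L+1)^2/2\leq L^2$ for $L\geq 3$) reproduces the stated bound. This is the paper's route. Your weaker bound only yields the $\Theta(L^2|X(i)|)$ conclusion after invoking $|X(0)|\asymp|X(1)|\asymp|X(2)|$, so the asymptotic statement survives, but the displayed inequality is not proved as written. (Also, your justification that a full face at a check $v$ is ``determined by a chosen pair of qubits in $N(v)$'' is not quite right---the same pair can be paired under several $Z$ checks---though the numerical bound $\Delta^2/2$ still holds via the path count.)

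For the lower bound at level 1, a subdivided face has $(L-1)^2/2$, not at least $L^2/2$, level-1 vertices in its strict interior, so splitting interiors between the two qubit corners gives only $(L-1)^2/4 < L^2/4$ per qubit, and ``absorbing the boundary loss into the constant'' cannot recover the stated constant $\frac{L^2}{4}$. The fix is to additionally award each qubit $v_1$ the level-1 vertices on the subdivided original edges incident to $v_1$ (and $v_1$ itself); this stays consistent because any face containing such an edge has $v_1$ as one of its qubit corners, so no level-1 vertex is assigned to two distinct qubits, and the total then clears $L^2/4$. (Your level-0/level-2 regions already include the subdivided edges at the distinguished corner, which is why those cases reach $((L+1)/2)^2\geq L^2/4$.) Finally, both your argument and the paper's implicitly use that every vertex of $X(i)$ lies in at least one face---guaranteed by the dummy faces for checks of degree at least two and for qubits adjacent to such checks---which is worth stating once.
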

This claim can be verified by counting the size of $\Tilde{F}$ of our square subspace complex $\Tilde{\calS}(Q)$. Recall that $\Tilde{F}$ consists of two parts, the full faces $F$ that we obtained by pairing qubits and the dummy faces $F_D$ we added for local edge connectivity. It is easy to see that $|F_D|$ is bounded by $\frac{\Delta^2}{2}(|X(0)|+|X(2)|)$, since for each $v\in X(0)$ or $X(2)$, we can add one dummy face between every pair of edges. To bound the size of the normal face set $|F|$, we observe that for each path $v_0\to v_1\to v_2$, where $v_i\in X(i)$ and is connected in the graph $\calT(Q)$, they can determine exactly one normal face in $F$ by our pairing algorithm. Since there is at most $\Delta$ edges from a given vertex $v_0$ to $X(1)$, and $\Delta$ edges for each $v_1$ to $X(2)$, thus for each $v_0\in X(0)$ it can have at most $\Delta^2$ neighboring full faces. Similar arguments also apply to vertices in $X(1)$ and $X(2)$. We can see that after the subdivision, the number of level-$i$ vertices in $X_L$ is increased at most $(L+1)^2/4$ times, giving our upper bound.

For general good qLDPC codes, we will have that $\Delta$ is a constant, and $|X(i)|$ are of the same order ($\frac{1}{\Delta} |X(j)| \le |X(i)| \le \Delta |X(j)|$ for any $i, j \in \{0, 1, 2\}$). Thus, we have $|X_L(i)|=\Theta(L^2|X(i)|)$.

\begin{remark}
    Note that if our code is reasonable, that is we do not have the dummy face set $F_D$, we can directly obtain that $|X_L(i)|\leq \Delta^2L^2|X(i)|$.
    We can also observe that for each $v$, its face neighborhood size $|N^F(v)|$ is bounded by $\Delta^2+\Delta^2/2=\frac{3\Delta^2}{2}$.

\end{remark}

For the dimension of the code $Q(X_L)$, we have the following lemma,
\begin{lemma}[Lemma 4.2 in~\cite{linGeometricallyLocalQuantum2023}]\label{lem:dim}
    If the code $Q(X)$ has dimension $k$, then the code $Q(X_L)$ has dimension $k$.
\end{lemma}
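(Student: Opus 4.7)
The plan is to identify the code dimension with the dimension of the middle cohomology of the associated chain complex, so that the lemma reduces to showing the chain map $\mathcal{F}$ induces an isomorphism between $\ker \delta_1^X / \im \delta_0^X$ and $\ker \delta_1^{X_L} / \im \delta_0^{X_L}$. I would argue injectivity and surjectivity of this induced map separately.

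For injectivity, suppose $c \in \ker \delta_1^X$ satisfies $\mathcal{F}_1(c) = \delta_0^{X_L} \tilde c_0$ for some $\tilde c_0 \in \F_2^{X_L(0)}$. By the definition of $\mathcal{F}_1$, the cochain $\mathcal{F}_1(c)$ takes the constant value $c(v_1)$ on each $T_{v_1}$ region and vanishes elsewhere. In particular, $(\delta_0^{X_L}\tilde c_0)|_{S_{v_0}} = 0$ for every level-0 vertex $v_0$, so $\tilde c_0$ is constant on the connected grid $S_{v_0}$; call this constant $c_0(v_0)$. Tracing $\delta_0^{X_L}\tilde c_0$ through the branches of each $T_{v_1}$ region then yields $\delta_0^X c_0 = c$, so $c$ is already a coboundary in $X$.

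For surjectivity, given $\tilde c \in \ker \delta_1^{X_L}$, the goal is to construct $c \in \ker \delta_1^X$ and $\tilde c_0 \in \F_2^{X_L(0)}$ with $\tilde c = \mathcal{F}_1(c) + \delta_0^{X_L}\tilde c_0$. Each subdivided face is a planar surface-code-type grid on a disk, which has vanishing middle cohomology; hence within each face, $\tilde c$ can be coboundary-reduced to a representative supported on the $T$ regions and constant along each $T_{v_1}$. These face-local reductions define $c$ via its value on each $T_{v_1}$, and the connected link property of every level-0 and level-2 check ensures that the face-local choices of $\tilde c_0$ can be consistently assembled around each check into a single global $\tilde c_0$.

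The main obstacle is the global patching step in the surjectivity argument: the per-face reductions are routine since each face is a disk, but the corrections on adjacent faces must agree along shared edges and be globally consistent around each check vertex. This is precisely the role of the connected-link condition established earlier in Section~\ref{sec:2DStruc}, which allows local constant adjustments of $\tilde c_0$ to be propagated coherently around each check. Dummy faces contribute no extra cohomology because $\mathcal{F}$ carries no data into them, yet they are essential for enforcing connected links and thereby enabling the patching. A symmetric argument (swapping the roles of $\delta_0$ and $\delta_1$, and of $S$ and $U$) handles the analogous statement for the dual chain complex if needed.
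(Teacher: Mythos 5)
Your overall strategy is the same as the paper's: identify $k$ with $\dim Z^1/B^1$ and show that $\calF_1$ induces an isomorphism $Z^1(X)/B^1(X)\to Z^1(X_L)/B^1(X_L)$, using the chain-map property together with exactness of the local subdivided patches. Your injectivity half is essentially right: $\calF_1(c)$ vanishes on the $S$ regions, so $\tilde c_0$ is forced to be constant on each $S_{v_0}$ (the constancy propagates within each face's grid and across faces through the shared subdivided edges at $v_0$), and evaluating $\delta_0^{X_L}\tilde c_0$ at the central qubit vertex of each $T_{v_1}$ gives $\delta_0^X c_0=c$.

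The gap is in the surjectivity patching step. A genuinely ``face-local'' reduction does not exist, because the level-$0$ vertices on a subdivided edge $\{v_0,v_1\}$ are shared by \emph{all} faces containing that edge, so corrections chosen independently in adjacent faces overwrite each other on the seams; and the connected-link property does not by itself repair this. Face-local solutions are each determined only up to a constant, so gluing them requires consistency around every \emph{cycle} of the link, and links generically contain cycles (two checks sharing $2t$ qubits produce $t$ parallel faces between the same pair of edges); ``propagating local constant adjustments around each check'' therefore needs an argument you do not give. The paper's mechanism is different and avoids the problem: one cleans on the entire disjoint region $S_{v_0}$ at once (the union of the $S$-quadrants of all faces at $v_0$, glued along the subdivided edges, together with the dummy-face bulks), using exactness of $X_L|_S$, a property of the whole local complex that is verified directly; the residual cocycle is then supported on the $T$ regions, and the cocycle conditions at the level-$2$ vertices along each branch force it to be constant on each $T_{v_1}$ with a common value at the center, so no link connectivity is needed. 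Indeed, the dummy faces and connected links are introduced in \Cref{sec:2DStruc} for the distance/expansion bounds, not for this dimension lemma, so attributing the patching to them misidentifies the operative ingredient. Two smaller omissions: you never verify that the extracted $c$ satisfies $\delta_1^X c=0$ (it follows from the cocycle condition of the cleaned representative at the $U_{v_2}$ vertices, since each branch end adjacent to $v_2$ carries the value $c(v_1)$ exactly once per edge $\{v_1,v_2\}$), and ``dummy faces contribute no extra cohomology because $\calF$ carries no data into them'' is not a reason --- that they add no cohomology must be checked by the same local exactness applied to the dummy-face patches.
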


To prove this lemma, we have to show that $\calF_1$ induces a bijection between the codewords (equivalent classes) $Z^1(X)/B^1(X)$ and $Z^1(X_L)/B^1(X_L)$ by mapping $[\Tilde{c}_1]$ to $[\calF_1(\Tilde{c}_1)]$. The proof of the theorem mainly used the fact that $\calF$ and $\delta$ commute to help us find images/preimages of cocycles and coboundaries. When it comes to finding the preimage under $\calF_1$ of a cocycle $c_1\in Z^1(X_L)$, we would also use that $X_L|_S$ is exact, which can be verified in our construction.

To obtain the distance and energy barrier lower bound of the code $Q(X_L)$, we would use the coboundary expansion properties of the chain complex $X_L$. It is helpful to introduce the definition of small-set coboundary expansion from the high dimensional expander literature.

\begin{definition} [Small-Set (Co)Boundary Expansion]
  We say that $X: \F_2^{X(0)} \xrightarrow{\delta_0} \F_2^{X(1)} \xrightarrow{\delta_1} \F_2^{X(2)}$ is a $(\alpha,\beta,\gamma)$-small-set boundary expander if
  \begin{align*}
    \forall c_1 \in \F_2^{X(1)}, \abs{c_1} \le \alpha |X(1)|:
    \exists c_2 \in \F_2^{X(2)}, \beta \abs{c_1 + \partial_2 c_2} \le \abs{\partial_1 c_1}, \gamma \abs{c_2} \le \abs{c_1}.
  \end{align*}

  Similarly, $X$ is a $(\alpha,\beta,\gamma)$-small-set coboundary expander if
  \begin{align*}
    \forall c_1 \in \F_2^{X(1)}, \abs{c_1} \le \alpha |X(1)|:
    \exists c_0 \in \F_2^{X(0)}, \beta \abs{c_1 + \delta_0 c_0} \le \abs{\delta_1 c_1}, \gamma \abs{c_0} \le \abs{c_1}.
  \end{align*}
\end{definition}
By the symmetry between the $X$ and $Z$ vertices in our construction, in the following parts, we would only prove the coboundary expansion properties of the code, as the boundary expansion proof would be the same.

It is known that we can relate the distance and energy barrier of a code $Q(X_L)$ to the small set (co)boundary expansion parameters of its corresponding complex $X_L$ through the following theorem:

\begin{theorem}[Implict in Section 2.4 of~\cite{linGeometricallyLocalQuantum2023}]\label{thm:dis}
If the complex $X_L$ has $(\alpha,\beta,\gamma)$-small-set boundary and coboundary expansion, the corresponding code has distance $d\geq \alpha|X_L(1)| $ and energy barrier $\calE\geq \alpha\beta|X_L(1)|$.
\end{theorem}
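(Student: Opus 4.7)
By the X/Z symmetry of the CSS construction, it is enough to bound $d_x$ and $\calE_x$ using the small-set coboundary expansion of $X_L$; the Z side is identical using boundary expansion. The distance bound is a one-line consequence of the expansion axiom, while the energy barrier requires tracking a ``distance to coboundary'' function along a walk, so that is where the real work lies.

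\textbf{Distance.} Let $c_1 \in Z^1(X_L) \setminus B^1(X_L)$ be a nontrivial X-logical of minimum weight, so $\delta_1 c_1 = 0$. Suppose, for contradiction, that $|c_1| \le \alpha\,|X_L(1)|$. Then the $(\alpha,\beta,\gamma)$-small-set coboundary expansion of $X_L$ supplies some $c_0 \in \F_2^{X_L(0)}$ with $\beta\,|c_1 + \delta_0 c_0| \le |\delta_1 c_1| = 0$, forcing $c_1 = \delta_0 c_0 \in B^1(X_L)$. This contradicts the choice of $c_1$, so $d_x \ge \alpha\,|X_L(1)|$.

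\textbf{Energy barrier.} Fix any nontrivial X-logical $c \in C_x - C_z^\perp$ and any walk $\gamma_{0 \to c} = (a_0 = 0, a_1, \ldots, a_t = c)$ with $|a_i - a_{i-1}| = 1$. Define
\[
    d_B(a) \;:=\; \min_{c_0 \in \F_2^{X_L(0)}} |a + \delta_0 c_0|,
\]
the Hamming distance from $a$ to the coboundary subspace $B^1(X_L)$. The key observation is that $d_B$ is $1$-Lipschitz along the walk: for any single bit flip and any fixed $c_0$, the weight $|a + \delta_0 c_0|$ changes by exactly $\pm 1$, so $|d_B(a_i) - d_B(a_{i-1})| \le 1$. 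Since $d_B(a_0) = 0$ and $d_B(a_t) \ge d_x \ge \alpha\,|X_L(1)|$ by the distance bound above, there exists an index $i^*$ with $d_B(a_{i^*}) = \lfloor \alpha\,|X_L(1)| \rfloor$.

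At this index, pick a minimizer $c_0^* $ so that $a^* := a_{i^*} + \delta_0 c_0^*$ has weight $|a^*| = d_B(a_{i^*}) \le \alpha\,|X_L(1)|$. Apply small-set coboundary expansion to $a^*$: there exists $c_0^{**}$ with
\[
    \beta \,|a^* + \delta_0 c_0^{**}| \;\le\; |\delta_1 a^*| \;=\; |\delta_1 a_{i^*}| \;=\; \epsilon_x(a_{i^*}),
\]
where the middle equality uses $\delta_1 \delta_0 = 0$. The left-hand side is at least $\beta \, d_B(a_{i^*})$ by definition of $d_B$, so
\[
    \epsilon_x(a_{i^*}) \;\ge\; \beta \, d_B(a_{i^*}) \;=\; \beta \,\lfloor \alpha\,|X_L(1)| \rfloor \;\ge\; \alpha\beta\,|X_L(1)| - \beta.
\]
This holds for every walk, so $\calE_x \ge \alpha\beta\,|X_L(1)|$ up to the lower-order additive constant (absorbed into the asymptotic $\Omega$ of \Cref{thm:formal}). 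The Z side is symmetric with $\partial$ in place of $\delta$ and the boundary expansion hypothesis.

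\textbf{Main obstacle.} The only nontrivial step is verifying that $d_B$ is $1$-Lipschitz and then locating an index $i^*$ at which $d_B(a_{i^*})$ lies in the small-set regime $[0, \alpha|X_L(1)|]$ while still being $\Theta(\alpha|X_L(1)|)$; the expansion axiom then does the rest. The parameter $\gamma$ of the expander definition does not enter this proof — it is only needed for claims of the form ``the witness coboundary is controlled,'' which are not required here but play a role in the separate expansion estimates of \Cref{sec:exp}.
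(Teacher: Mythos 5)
Your proposal is correct and is exactly the ``translation'' the paper alludes to (the paper gives no explicit argument, citing it as implicit in Section 2.4 of the Lin--Wills--Hsieh paper): the distance bound from applying coboundary expansion to a putative small cocycle, and the energy-barrier bound from tracking the $1$-Lipschitz distance to $B^1(X_L)$ along the walk and invoking expansion at an intermediate index. The only blemish is the additive rounding loss of $\beta$ (your bound is $\beta\lfloor\alpha|X_L(1)|\rfloor$ rather than $\alpha\beta|X_L(1)|$), which you flag and which is immaterial for how the theorem is used asymptotically.
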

The proof of the theorem is merely a translation between languages of coding theory and expanders. Thus the converse statement is also true. That is, if we have a qLDPC code $Q(X)$ with distance $d=\Omega(n)$ and energy barrier $\calE=\Omega(n)$, the corresponding complex $X$ has $(\alpha,\beta,\gamma)$-small-set (co)boundary expansion, where $\alpha,\beta=\Theta(1)$.

To obtain an optimal lower bound of the code distance and energy barrier, we would try to prove the complex $X_L$ has $(\alpha,\beta,\gamma)$-small-set coboundary expansion, where $\alpha,\beta,\gamma=\Theta(1/L)$. However, directly proving the coboundary expansion bounds turns out to be difficult. It is shown in~\cite{linGeometricallyLocalQuantum2023} that we can first prove local coboundary expansion properties, and show the global small-set coboundary expansion properties by `cleaning' the errors inside the surfaces and `pushing' the rest of them to the boundary.

To formally describe the `cleaning' and `pushing' process, we would use the definition of a chain complex with a boundary. The chain complex $Y\colon \F_2^{Y(0)} \xrightarrow{\delta_0} \F_2^{Y(1)} \xrightarrow{\delta_{1}} \F_2^{Y(2)}$ is extended to have the additional boundary structure $Y_{\partial}$, $Y\cup Y_{\partial}\colon\F_2^{Y(0) \cup Y_\partial(0)} \xrightarrow{\delta_0'} \F_2^{Y(1) \cup Y_\partial(1)} \xrightarrow{\delta_{1}'} \F_2^{Y(2) \cup Y_\partial(2)}$.\footnote{Note that in the examples we study, $\delta_{1}\delta_0 = 0$ but $\delta_{1}'\delta_0' \ne 0$. The maps $\delta_i'$ do not, and do not need to, form a chain complex.}

Let us first define the following coboundary expansion properties of the local chain complexes with boundaries:
\begin{definition}
    A chain complex $Y$ with boundary $Y_{\partial}$ is a $(\beta_i,\eta_i)$-coboundary expander at level $i$ if for all $\hat{f}_i\in\F_2^{Y(i)}$, there exists $f_i\in \hat{f}_i+B^i\subset \F_2^{Y(i)}$ such that
    \begin{align*}
    (1)\,|f_i|_{\intt}\leq |\hat{f}_i|_{\intt},\quad (2)\,\beta_i|f_i|_{\intt}\leq |\delta_i\hat{f}_i|_{\intt},\quad (3)\,\eta_i|\delta_i f_i|_{\partial}\leq |\delta_i f_i|_{\intt}.
\end{align*}
where $|\cdot|_{\intt}$ is the Hamming weight of the vector restricted to the space $\F_2^{Y(i)}$, and $|\cdot|_{\partial}$ is the Hamming weight of the vector restricted to the space $\F_2^{Y_{\partial}(i)}$.
\end{definition}
The inequality (2) is the standard coboundary expansion of the chain complex $Y$, and the inequality (3) is introduced for the cleaning process.

For our regions $S$ and $T$, they can be viewed as a chain complex with boundary $S_{\partial}$ and $T_{\partial}$, where  $S_{\partial}$ and $T_{\partial}$ correspond to vertices in the connected regions $T$ and $U$ respectively. The coboundary operators would include edges between $S$ and $S_{\partial}$, and $T$ and $T_{\partial}$. Their structure can be seen in~\Cref{fig:boundary}.

Note that $T$ would have the structure of a generalized repetition code in~\cite{linGeometricallyLocalQuantum2023}. From the Tanner graph perspective, the repetition code can be represented as a 1D chain of bits connected by a 1D chain of check vertices which require the neighboring bits to be the same.
The generalized repetition code is similar,
  except that now there could be one branching check at the center. We provide an example in~\Cref{fig:genrep}, where we use level 1 vertices as checks and level 2 vertices as bits for consistency with the $T$ structure. A generalized repetition code is said to have length $L$ if there are $L$ level 1 vertices on the path from one boundary to another boundary.
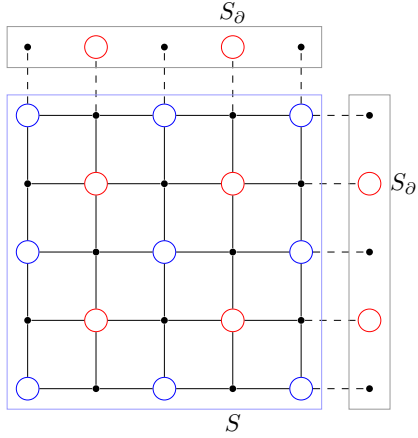
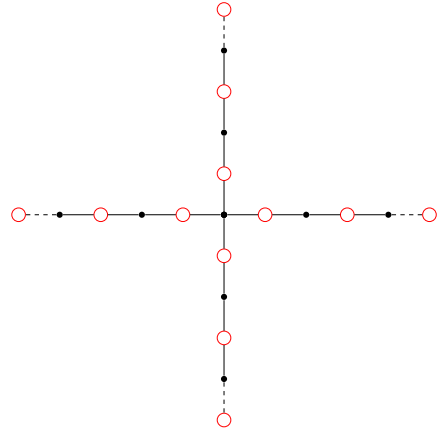
\begin{figure}
    \centering
    \begin{subfigure}[b]{0.35\textwidth}
        \centering
        \resizebox{\linewidth}{!}{
        \begin{tikzpicture}
        \foreach \row in {0,1,2}
        {
            \foreach \column in {0,1,2}
            {
                \pgfmathtruncatemacro{\rw}{2*\row}
                \pgfmathtruncatemacro{\cl}{2*\column}
                \node at (2*\row,2*\column) [circle, draw,blue](v\rw\cl){};
                \pgfmathtruncatemacro{\cl}{2*\column+1}
                \node at (2*\row,2*\column+1) [circle,fill,inner sep=1pt](v\rw\cl) {};
                \pgfmathtruncatemacro{\rw}{2*\row+1}
                \node at (2*\row+1,2*\column+1) [circle, draw,red](v\rw\cl){};
                \pgfmathtruncatemacro{\cl}{2*\column}
                \node at (2*\row+1,2*\column) [circle,fill,inner sep=1pt](v\rw\cl) {};
            };
        };

    \foreach \row in {0,1,...,4}
    {
        \foreach \column in {0,1,...,3}
        {
            \pgfmathtruncatemacro{\cl}{\column+1}
            \draw (v\row\column)--(v\row\cl);
        }
    }
    \foreach \column in {0,1,...,4}
    {
        \foreach \row in {0,1,...,3}
        {
            \pgfmathtruncatemacro{\rw}{\row+1}
            \draw (v\row\column)--(v\rw\column);
        }
    }
    \foreach \column in {0,1,...,4}
    {
        \draw[dashed] (v4\column)--(v5\column);
        \draw[dashed] (v\column4)--(v\column5);
    }
    \draw[blue!40] (-0.3,-0.3) rectangle (4.3,4.3);
    \draw[black!40] (4.7,-0.3) rectangle (5.3,4.3);
    \draw[black!40] (-0.3,4.7) rectangle (4.3,5.3);
    \filldraw[white] (4.5,4.5) rectangle (5.5,5.5);
    \node at (3,-0.5){$S$};
    \node at (5.5,3) {$S_{\partial}$};
    \node at (3,5.5) {$S_{\partial}$};
    \end{tikzpicture}
        }
    \caption{the figure of one face of $S$ with boundary $S_{\partial}$}
    \end{subfigure}
    \qquad\qquad\qquad
    \begin{subfigure}[b]{0.35\textwidth}
        \centering
        \resizebox{\linewidth}{!}{
        \begin{tikzpicture}
            \foreach \row in {0,1,2}
            {
                \node at (0,2*\row) [circle,fill,inner sep=1.5pt] (v0\row){};
                \node at (0,2*\row+1) [circle, draw,red] (u0\row){};
                \node at (0,-2*\row) [circle,fill,inner sep=1.5pt] (-v0\row){};
                \node at (0,-2*\row-1) [circle, draw,red] (-u0\row){};
                \node at (2*\row,0) [circle,fill,inner sep=1.5pt] (w\row0){};
                 \node at (2*\row+1,0) [circle, draw,red] (x\row0){};
                \node at (-2*\row,0) [circle,fill,inner sep=1.5pt] (-w\row0){};
                \node at (-2*\row-1,0) [circle, draw,red] (-x\row0){};
            }
            \draw (v00)--(u00);
            \draw (-v00)--(-u00);
            \draw (w00)--(x00);
            \draw (-w00)--(-x00);
            \draw (u00)--(v01);
            \draw (-u00)--(-v01);
            \draw (x00)--(w10);
            \draw (-x00)--(-w10);
            \draw (v01)--(u01);
            \draw (-v01)--(-u01);
            \draw (w10)--(x10);
            \draw (-w10)--(-x10);
            \draw (u01)--(v02);
            \draw (-u01)--(-v02);
            \draw (x10)--(w20);
            \draw (-x10)--(-w20);
             \draw[dashed] (v02)--(u02);
            \draw[dashed] (-v02)--(-u02);
            \draw[dashed] (w20)--(x20);
            \draw[dashed] (-w20)--(-x20);
       \end{tikzpicture}
        }
        \caption{Example of $T$ and $T_{\partial}$ with $L=5$}
        \label{fig:genrep}
    \end{subfigure}
    \caption{Examples of $S$ and $T$ with boundaries, and the level 0, 1, 2 vertices follows our coloring convention. Note that, since the dummy faces only have `imaginary edges', they do not contribute to the boundary $S_\partial$.}
    \label{fig:boundary}
\end{figure}

We can relate local expansion to global expansion through the following theorem:
\begin{theorem}[Theorem 4.4 in~\cite{linGeometricallyLocalQuantum2023}]
    If $X$ has $(\alpha_{\mathrm{qLDPC}},\beta_{\mathrm{qLDPC}},\gamma_{\mathrm{qLDPC}})$-small-set coboundary expansion with $\alpha_{\mathrm{qLDPC}},\beta_{\mathrm{qLDPC}},\gamma_{\mathrm{qLDPC}}=\Theta(1)$, and for each local complex $S$ and $T$, they are $(\beta_i,\eta_i)$-coboundary expanders with $\beta_i=\Theta(1/L),\eta_i=\Theta(1)$(for $S$ we take $i=0,1$, and for $T$ we only consider $i=0$), we have that the complex $X_L$ has $(\alpha,\beta,\gamma)$-small-set coboundary expansion with $\alpha,\beta,\gamma=\Theta(1/L)$.
\end{theorem}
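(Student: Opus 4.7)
The plan is a three-stage \emph{clean-and-push} argument that mirrors the structure of the hypothesis: use the local expanders on $S$ and $T$ to reduce an arbitrary small $c_1\in\F_2^{X_L(1)}$ to something representable on the original chain complex $X$, invoke the global small-set coboundary expansion of $X$, and pull the resulting correction back through the chain map $\calF$ from Section~\ref{sec:subdiv}. By the $X/Z$ symmetry of the construction, the dual argument settles the boundary version, so I focus on the coboundary case. Fix $c_1\in\F_2^{X_L(1)}$ with $|c_1|\le\alpha|X_L(1)|$ and $\alpha=\Theta(1/L)$ sufficiently small. In the first stage, for each face region $S_f$ I apply the level-$1$ coboundary-expander hypothesis to the restriction $c_1|_{S_f}$, obtaining a local correction $g^{(f)}_0\in\F_2^{S_f(0)}$; setting $c_1^{(1)}:=c_1+\delta_0\sum_f g^{(f)}_0$, property~(2) of the local expander gives $\beta_1\,|c_1^{(1)}|_{S_f^{\intt}}\le|\delta_1 c_1|_{S_f^{\intt}}$, while property~(3) controls the $S_f/T$ interface syndrome by the interior syndrome up to $\eta_1^{-1}$. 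Summing over faces, $c_1^{(1)}$ is supported essentially inside the $T$-regions, with total weight $O(L\cdot|\delta_1 c_1|)$ on the $S$-side.

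For Stage~2, for each edge region $T_{v_1}$ indexed by $v_1\in X(1)$ the restriction $c_1^{(1)}|_{T_{v_1}}$ is a level-$0$ vector on the generalized-repetition complex $T_{v_1}$ whose only two codewords are $0^L$ and $1^L$. I would define $\tilde c_1(v_1)\in\F_2$ to be the closer codeword, which is unambiguous whenever the local weight is below $L/2$. Using Claim~\ref{clm:size} to relate $|X_L(i)|$ and $|X(i)|$ and the Cheeger-type bound $\beta_0=\Theta(1/L)$ on $T$, the small-set hypothesis $|c_1|\le\alpha|X_L(1)|$ with $\alpha=\Theta(1/L)$ forces the ambiguous $T$-regions to be rare and yields $|\tilde c_1|\le\alpha_{\mathrm{qLDPC}}|X(1)|$. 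Property~(3) at level $0$ of $T$ (with $\eta_0=\Theta(1)$) then converts the $U$-vertex syndromes of $c_1^{(1)}$ into a bound $|\delta_1\tilde c_1|=O(|\delta_1 c_1|)$. Applying the $(\alpha_{\mathrm{qLDPC}},\beta_{\mathrm{qLDPC}},\gamma_{\mathrm{qLDPC}})$-small-set coboundary expansion of $X$ to $\tilde c_1$ now produces $\tilde c_0\in\F_2^{X(0)}$ with $\beta_{\mathrm{qLDPC}}|\tilde c_1+\delta_0\tilde c_0|\le|\delta_1\tilde c_1|$ and $\gamma_{\mathrm{qLDPC}}|\tilde c_0|\le|\tilde c_1|$.

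For Stage~3, set $c_0:=\sum_f g^{(f)}_0+\calF_0(\tilde c_0)$. Using the chain-map identity $\delta_0\calF_0=\calF_1\delta_0$,
\[
c_1+\delta_0 c_0 \;=\; c_1^{(1)}+\calF_1(\delta_0\tilde c_0),
\]
which on each $T_{v_1}$ is either a local rounding error, when $(\delta_0\tilde c_0)(v_1)=\tilde c_1(v_1)$, or an $\Theta(L)$-weight discrepancy otherwise. The match-case errors are bounded per-region by $L\cdot|\delta_1 c_1|_{T_{v_1}^{\intt}}$ via the Cheeger inequality for the repetition code, summing to $O(L|\delta_1 c_1|)$; the mismatch count is at most $|\tilde c_1+\delta_0\tilde c_0|=O(|\delta_1 c_1|)$, contributing another $O(L|\delta_1 c_1|)$. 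Together with the Stage~1 bound on $S$, this gives $\beta\,|c_1+\delta_0 c_0|\le|\delta_1 c_1|$ with $\beta=\Theta(1/L)$. The size bound $\gamma|c_0|\le|c_1|$ with $\gamma=\Theta(1/L)$ follows from $|g^{(f)}_0|=O(|c_1|_{S_f})$ per touched face (choosing the local preimage of minimal weight) and $|\calF_0(\tilde c_0)|=\Theta(L^2)\,|\tilde c_0|$ by the definition of $\calF_0$.

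The hardest step will be the distillation in Stage~2: one must account carefully for the $T$-regions where nearest-codeword rounding is ambiguous, and show that the interplay of $\beta_0=\Theta(1/L)$ with $\eta_0=\Theta(1)$ keeps both the count of ambiguous regions and the excess $|\delta_1\tilde c_1|-O(|\delta_1 c_1|)$ within constant factors of what Stage~1 already pays. Every other estimate is essentially a linear combination of Stage~1 or Stage~3 bounds scaled by the $O(L^2)$ coming from local patch sizes via Claim~\ref{clm:size}.
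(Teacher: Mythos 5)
Your overall architecture is exactly the paper's (and \cite{linGeometricallyLocalQuantum2023}'s) clean-and-push argument: clean each $S$ region with its level-$1$ expansion, distill a vector $\tilde c_1\in\F_2^{X(1)}$ from the $T$ regions, apply the small-set coboundary expansion of $X$, and pull the correction back through the chain map using $\delta_0\calF_0=\calF_1\delta_0$, with the boundary case by symmetry. However, there is a concrete gap in your Stage~2 weight bound. You claim that $|c_1|\le\alpha|X_L(1)|$ with $\alpha=\Theta(1/L)$ already forces $|\tilde c_1|\le\alpha_{\mathrm{qLDPC}}|X(1)|$, but $\tilde c_1$ is distilled from $c_1^{(1)}=c_1+\delta_0\sum_f g_0^{(f)}$, and the local $(\beta_1,\eta_1)$ property constrains only the \emph{residual} $f_1$ inside $S$ — it says nothing about the level-$0$ witness $g_0^{(f)}$ or about the weight its global coboundary pushes onto the adjacent $T$ regions. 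In particular, the kernel of the interior coboundary map on a connected $S_{v_0}$ contains the all-ones vector, so an admissible choice of $g_0^{(v_0)}$ can flip every adjacent $T_{v_1}$ wholesale (this is precisely the gauge that shifts $\tilde c_1$ by $\delta_0 1_{v_0}$ in $X$), and more generally a correction of weight far exceeding $|c_1|_{S_{v_0}}$ can dump $\Theta(L)$ onto a branch. Hence ``ambiguous rounding is rare'' does not follow from $|c_1|$ being small; one must fix the gauge of each local correction and use the connectivity/coboundary expansion of $S$ (an isoperimetric argument on the glued quadrants around the seam) to show that flipping the rounding of even one adjacent $T$ region costs $\Omega(L)$ of $|c_1|_{S_{v_0}}$, which is exactly the ingredient the paper's sketch refers to when it says the size of $\tilde c_1$ is bounded using ``the coboundary expansion of $S$.'' Without this, the step $|\tilde c_1|\le\alpha_{\mathrm{qLDPC}}|X(1)|$ — the hypothesis needed to invoke the expansion of $X$ — is unsupported.

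Two smaller points: (i) your bound $|g_0^{(f)}|=O(|c_1|_{S_f})$ for the minimal-weight local preimage is an overclaim; the correct generic bound is $O(L\,|c_1|_{S_f})$ (isoperimetry on an $L\times L$ patch), which still suffices for $\gamma=\Theta(1/L)$ since only $|c_0|\le O(L)|c_1|$ is needed; (ii) in Stage~3 the ``match-case'' rounding error on $T_{v_1}$ is controlled by the domain walls of $c_1^{(1)}|_{T_{v_1}}$, and the checks interior to $T$ also touch qubits of the adjacent $S$ regions, so you must route this through the $\eta_1$ boundary-syndrome control of Stage~1 (you gesture at this, but it should be stated where the Cheeger bound is applied). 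With the Stage~2 gauge-fixing argument supplied, the rest of your plan goes through along the same lines as the cited proof.
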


To prove this theorem, we would also relate $c_1\in \F_2^{X_L(1)}$ back to some codeword in $X(1)$. We briefly review the cleaning process for the special case when $c_1$ is a cocycle and have cosystolic distance $|c_1|\leq\alpha |X_L(1)|$, where $\alpha=\Theta((\eta_0^S/\Delta^2L)\alpha_{\mathrm{qLDPC}})$. In our parameter setting, we can obtain that $\alpha=\Theta(1/L)$. For our special case, we can prove that $c_1\in B^1(X_L)$, i.e. it is a coboundary. We would first remove the error inside each region $S$ using some coboundary  $\delta_0 c_0^S$ by the local expansion property of $S$. We can check that the codeword in $T$ is the image of some $\Tilde{c}_1\in Z^1(X)$ under $\calF_1$.  The size of $\Tilde{c}_1$ can be bounded by our estimation of size $|X_L(1)|$ and the coboundary expansion of $S$, thus we can prove $\Tilde{c}_1\in B^1(X)$ by the coboundary expansion properties of $X$. Finally, since $\calF$ commutes with $\delta$. we proved $c_1\in B^1(X_L)$. The full proof for general case is similar, and we need extra cleaning for $S$ and $T$ using their local expansion properties.

We will prove the local chain complexes $S$ and $T$ are good coboundary expanders in~\Cref{sec:exp}. Note that the local chain complexes are different from~\cite{linGeometricallyLocalQuantum2023}. We proved similar expansion bounds for our modified local complex $S$, without using the product structure in their proof, and our proof can be generalized to any edge-connected local complex.

We summarize our result in the following two theorems:
\begin{theorem}
    \label{thm:level0}
    Every local complex $S$ with the boundary $ S_\partial$ is a $(\beta^S_0,\eta^S_0)$-coboundary expander at level 0, and every local complex $T$ with boundary $ T_\partial$ is a $(\beta^T_0,\eta^T_0)$-coboundary expander at level 0. We have that $\beta_0^S,\beta_0^T=\Theta(1/L)$, $\eta_0^S,\eta_0^T=\Theta(1)$.
\end{theorem}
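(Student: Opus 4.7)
My plan is to decouple the two statements and prove each by reducing to a classical isoperimetric inequality on the appropriate low-dimensional local model, while using the link-connectivity of each level-0 and level-2 vertex (guaranteed by the dummy-face completion in \Cref{sec:2DStruc}) to bridge from single-face analyses to the full local complexes. This is essentially the same strategy as in \cite{linGeometricallyLocalQuantum2023}, but without invoking a product structure, so the combinatorial part of the $S$ argument is where I expect the real work to lie.

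For the $T$ case, I first observe that the local complex $T_{v_1}$ around a qubit $v_1 \in X(1)$ is a star with $\deg(v_1) = O(1)$ arms, each an alternating chain of $\Theta(L)$ level-1 vertices (checks) and level-2 vertices (bits) terminating at a $U$-vertex on the boundary $T_\partial$. After using the allowed modification within $\hat{f}_0 + B^0$ to replace $\hat{f}_0$ by a representative $f_0$ of minimal interior weight, the one-dimensional isoperimetric inequality on each arm guarantees at least one triggered check per connected component of the reduced $f_0$; summing across arms yields $|\delta_0 f_0|_\intt \ge \Omega(|f_0|_\intt/L)$, so $\beta_0^T = \Theta(1/L)$. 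The boundary-to-interior ratio $\eta_0^T = \Theta(1)$ is then immediate since each arm contributes at most one vertex to $T_\partial$ against $\Theta(L)$ triggered interior vertices per arm.

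For the $S$ case, I describe $S_{v_0}$ around an $X$-check $v_0 \in X(0)$ as a union of $\deg(v_0) = O(1)$ two-dimensional square-lattice patches, each of size $\Theta(L) \times \Theta(L)$, glued at $v_0$ along the pairing of qubits encoded by $F$. On a single patch the level-0-to-level-1 subgraph is a standard surface-code-style grid, so the 2D isoperimetric inequality (after reducing $\hat{f}_0$ to a smaller representative $f_0$ via $B^0$) gives $|\delta_0 f_0|_\intt \ge \Omega(\sqrt{|f_0|_\intt})$, which is at least $\Omega(|f_0|_\intt/L)$ because $|f_0|_\intt \le |S_{v_0}(0)| = O(L^2)$. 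The $O(L)$-versus-$\Theta(L^2)$ geometric split between the patch boundary and interior then yields $\eta_0^S = \Theta(1)$. The global $\beta_0^S$ bound comes from summing patchwise contributions and invoking the link connectivity of $v_0$ to prevent pathological cancellations between neighbouring patches.

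The hard part will be this last step of the $S$ argument: lifting the single-patch 2D isoperimetric inequality to a global bound without a product decomposition. The link connectivity of $v_0$, made available by the dummy-face construction, is the geometric input that replaces the product structure used in \cite{linGeometricallyLocalQuantum2023}; concretely, I will need a local-to-global combinatorial lemma stating that the single-patch bound is preserved up to a multiplicative constant when patches are glued along a connected link. This lemma will occupy the bulk of \Cref{sec:exp} and is the step most sensitive to how the modification $\hat{f}_0 + B^0$ interacts with the boundary $S_\partial$ in the definition of a coboundary expander at level 0.
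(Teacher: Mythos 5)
Your treatment of $T$ is fine in substance (the paper itself just reuses the generalized-repetition-code result, Lemma~\ref{lem:repcode}, via Claim~\ref{clm:fineq}), although your justification of $\eta_0^T$ is not right as stated: there need not be $\Theta(L)$ violated interior checks per arm, and the correct mechanism is the single complement permitted by $B^0$ together with a per-component count. The genuine gap is in the $S$ part. The per-patch bound $|\delta_0 f_0|_{\intt}\geq \Omega\bigl(\sqrt{|f_0|_{\intt}}\bigr)$ is false as written: at level $0$ the freedom $\hat f_0+B^0$ amounts to at most one global complement on all of $S$, not a complement per patch, so a chain that equals $1$ on one face's entire $\Theta(L)\times\Theta(L)$ patch and $0$ elsewhere has \emph{zero} coboundary in the interior of that patch. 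For exactly these nearly-constant-per-patch configurations all violated checks sit on the seam $\calM$ shared between patches, so the ``pathological cancellations between neighbouring patches'' that you defer to an unproven local-to-global lemma are not a corner case to be controlled by a soft connectivity appeal --- quantifying them \emph{is} the theorem, and your proposal explicitly leaves that lemma unproved. The same looseness affects $\eta_0^S$: condition (3) compares boundary violations of $\delta_0 f_0$ to interior violations, not the sizes of the boundary and interior vertex sets, so the ``$O(L)$ versus $\Theta(L^2)$ split'' does not by itself give $\eta_0^S=\Theta(1)$.

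For comparison, the paper's proof in \Cref{sec:exp} never uses a two-dimensional isoperimetric inequality per patch. It establishes a $(C,C^\partial)$ functional (Poincar\'e-type) inequality for the graph with boundary induced by $S$ (Lemma~\ref{lem:funineq}): the edge set of $S$ is decomposed into generalized repetition codes $D(w)$, one through each seam vertex $w\in\calM$, the one-dimensional functional inequality is applied to each $D(w)$, and a triangle-inequality/averaging argument routes an arbitrary pair of vertices lying in different faces through intermediate points on the seam (this is precisely where connectedness of the link, i.e.\ the fact that all faces of $S$ meet along $\calM$, is used), yielding $\sum_{x,y\in V}|g(x)-g(y)|\leq O(L)\sum_{w\in\calM}\sum_{x,y\in D(w)}|g(x)-g(y)|$; a parallel argument through the seam handles $C^\partial$, and Claim~\ref{clm:fineq} then converts $(C,C^\partial)=\Theta(1/L)$ into $(\beta_0,\eta_0)=(\Theta(1/L),\Theta(1))$. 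If you want to rescue your route, your gluing lemma would have to show that disagreements between nearly-constant patches force $\Omega(L)$ violations along the shared seam branches per unit of weight $L^2$, at which point you will essentially have rederived the paper's seam decomposition; as submitted, the proposal does not prove the statement.
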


\begin{theorem}
\label{thm:level1}
     Every local complex $S$ with the boundary $ S_\partial$ is a $(\beta^S_1,\eta^S_1)$-coboundary expander at level 1. We have that $\beta_1^S=\Theta(1/L)$, $\eta_1^S=\Theta(1)$.
\end{theorem}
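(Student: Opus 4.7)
My plan is to view the local complex $S$ as a standard surface-code patch on an $L\times L$ sub-grid and run the corresponding decoder analysis. Concretely, $S$ consists of the grid positions $(i,j)$ with $0\le i,j\le L-1$, where level-$0$ vertices (both coordinates even) play the role of $X$-checks, level-$1$ vertices (mixed parity) are qubits, and level-$2$ vertices (both odd) are $Z$-checks; the attached boundary $S_\partial$ at $i=L$ or $j=L$ makes $(S,S_\partial)$ topologically a disk relative to an L-shaped arc on its boundary, so $H^1(S,S_\partial)=0$. Consequently, the coset $\hat f_1+B^1$ is determined by the extended syndrome $\delta_1'\hat f_1$, and a minimum-weight representative exists.

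Given $\hat f_1\in\F_2^{S(1)}$ with syndrome $\sigma=\delta_1'\hat f_1$ split into $\sigma_\intt\in\F_2^{S(2)}$ and $\sigma_\partial\in\F_2^{S_\partial(2)}$, I would take $f_1$ to be a minimum-weight representative of $\hat f_1+B^1$, so that condition (1), $|f_1|_\intt\le|\hat f_1|_\intt$, is immediate. For condition (2) I would invoke the standard matching argument on a disk of diameter $L$: every defect in $\sigma$ can be matched to another defect or to the rough boundary $S_\partial$ by a path of length at most $L$, so the minimum weight is at most $L\cdot(|\sigma|_\intt+|\sigma|_\partial)$, and to reduce this to a bound in terms of $|\sigma|_\intt$ alone one combines it with condition (3) proved next.

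For condition (3), $\eta_1^S|\delta_1'f_1|_\partial\le|\delta_1'f_1|_\intt$, the key local observation is that any qubit in $S(1)$ whose $\delta_1'$-image hits a boundary level-$2$ vertex lies in the outermost row or column of $S$, and a short case analysis of the few such positions shows that such a qubit always also contributes to at least one interior level-$2$ vertex. Hence in any $f_1$ every boundary defect is accompanied by a nearby interior defect up to cancellations, and for a minimum-weight representative I expect these cancellations to be constant-bounded, giving $|\sigma|_\partial\le c\,|\sigma|_\intt$ for an absolute constant $c$. The main obstacle I anticipate is ruling out systematic cancellations of the interior contributions: this will be handled by a local modification argument using the non-vanishing of $\delta_1'\delta_0$ at boundary-adjacent level-$0$ vertices, which allows one to flip any persistent boundary defect at the cost of only $O(1)$ extra weight, contradicting minimality unless a matching interior defect is present. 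Combining conditions (2) and (3) then yields $\beta_1^S=\Theta(1/L)$ and $\eta_1^S=\Theta(1)$, as desired.
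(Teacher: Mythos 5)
Your proposal analyzes the wrong object. The local complex $S$ in Theorem~\ref{thm:level1} is not a single $L\times L$ surface-code patch: it is the union of the $S$-regions of \emph{all} faces in $N^F(v)$ around a check $v$, glued along the seam $\calM$ (the subdivided original edges incident to $v$), as in \Cref{fig:seam}. This object is branched rather than a disk --- between two seam branches there can be several faces, and more than two sheets can meet along a single branch --- and it also contains dummy-face regions. The entire difficulty of the level-1 statement, and the reason the paper treats it separately, is the behavior of errors at and near the seam, where a level-1 vertex is shared by many faces and the usual planar matching/decoder picture breaks down. Your identification ``$S=\{(i,j):0\le i,j\le L-1\}$'' with $H^1(S,S_\partial)=0$ and a defect-matching bound on a disk of diameter $L$ simply does not describe this complex, so the core of your argument does not apply. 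The paper instead decomposes $\operatorname{supp}\hat f_1$ into clusters and distinguishes clusters that touch the seam from those that do not, pushing seam-adjacent errors onto qubits next to the seam where the structure locally matches the generalized surface code of~\cite{linGeometricallyLocalQuantum2023}; nothing in your proposal plays this role.

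Even granting the single-patch model, two steps are not actually established. First, your condition (3) rests on ``I expect these cancellations to be constant-bounded'' plus a sketched local-modification argument; this is precisely the content that needs proof, and note also that a minimum-weight coset representative need not satisfy (3), so the representative must be chosen to balance (1)--(3) simultaneously. Second, your route to condition (2) bounds the minimal weight by $L(|\sigma|_{\intt}+|\sigma|_{\partial})$ and then invokes (3) to eliminate $|\sigma|_{\partial}$; but (2) is stated with $|\delta_1\hat f_1|_{\intt}$ while (3) controls the boundary syndrome of the \emph{chosen} $f_1$, and since $\delta_1'\delta_0'\neq 0$ the boundary syndromes of $\hat f_1$ and $f_1$ differ, so the two inequalities do not chain together as written. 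To repair the proposal you would need to work with the actual branched complex (seam plus multiple sheets plus dummy faces) and give the cluster/pushing analysis, or some substitute for it, explicitly.
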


We are ready to prove~\Cref{thm:formal}.
\begin{proof}[Proof of~\Cref{thm:formal}]
     Recall in~\Cref{subsec:embed}, we set $L=\Theta(|V_{u}|^{\frac{1}{D-2}})$. It is easy to verify that $|V_{u}|=\Theta(|X(0)|+|X(1)|+|X(2)|)$ since we only add at most $\Delta^2/2$ of dummy faces to the structure for each check, thus $|X(i)|=\Theta(L^{D-2})$. By Claim~\ref{clm:size}, we have $|X_L(i)|=\Theta(L^{D})$. We would apply the map  $I_{\text{square}_L\to \text{euclid}}$ from~\Cref{thm:embedding} to our subdivided complex $X_L$ to obtain a geometrically local code.
    Let $N=|X_L(1)|$, by Lemma~\ref{lem:dim} and~\Cref{thm:dis}, we have that the corresponding code has dimension $k=\Omega(N^{\frac{D-2}{D}})$, distance $d=\Omega(N^{\frac{D-1}{D}})$, and energy barrier $\calE=\Omega(N^{\frac{D-2}{D}})$.
\end{proof}

\section{Expansion Analysis of local structures }\label{sec:exp}
In this section, we prove that every $S$ is a $(\beta_0=\Theta(1/L),\eta_0=\Theta(1))$-coboundary expander at level 0, $(\beta_1=\Theta(1/L),\eta_1=\Theta(1))$-coboundary expander at level 1, and every $T$ is a $(\beta_0=\Theta(1/L),\eta_0=\Theta(1))$-coboundary expander at level 0.

\subsection{Proof of~\Cref{thm:level0}}
 We would prove the level 0 expansion properties of $S$ and $T$ using functional inequalities. To be specific, we will study the functional inequality on a graph with boundary $G=(V, E, V^\partial)$. Besides the vertex set $V$ and the edge set $E\subset V\times V$, the graph has an additional boundary vertex set $V^\partial$.
 Let us first recall the definition of functional inequalities from~\cite{linGeometricallyLocalQuantum2023}.
\begin{definition}
We say a graph with boundary $G=(V,E,V^\partial)$ satisfies $(C,C^\partial)$-functional inequalities if for all functions $g\colon V\cup V^\partial\to \zo$:
\begin{align*}
    \sum_{\{x,y\}\in E}|g(x)-g(y)|&\geq\frac{C}{|V|} \sum_{x,y\in V}|g(x)-g(y)|,\\
    \sum_{\{x,y\}\in E}|g(x)-g(y)|&\geq\frac{C^\partial}{|V^\partial|} \sum_{x\in V^\partial,y\in V}|g(x)-g(y)|.
\end{align*}
\end{definition}

We can obtain the level 0 expansion properties of a graph through the following claim.
\begin{claim}[Claim 5.10 in~\cite{linGeometricallyLocalQuantum2023}]
\label{clm:fineq}
    If a graph with boundary satisfies the $(C,C^\partial)$ functional inequality, the corresponding complex is a $\left(\beta_0=C,\eta_0=\frac{C^\partial|V|}{2|V^\partial|}\right)$-coboundary expander at level 0.
\end{claim}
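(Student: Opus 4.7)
The plan is to derive the two nontrivial coboundary expansion conditions (2) and (3) from the given functional inequalities by evaluating them on the natural test function. Because $B^0$ is trivial at level $0$, there is no freedom in choosing $f_0$: we must set $f_0 = \hat{f}_0$, which makes condition (1) hold as an equality. The remaining work is numerical verification of (2) and (3).

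For condition (2), I would extend $f_0$ to a function $g = \bar{f}_0$ on $V \cup V^\partial$ by setting $\bar{f}_0 \equiv 0$ on the boundary, and apply the first functional inequality to this $g$. Using the elementary identity $\sum_{x,y\in V}|g(x)-g(y)| = 2|f_0|_{\intt}(|V|-|f_0|_{\intt})$, the right-hand side becomes at least $C|f_0|_{\intt}$ whenever $|f_0|_{\intt}\leq |V|/2$, while the left-hand side equals the coboundary weight. After absorbing or bounding the boundary contribution, one concludes $|\delta_0 f_0|_{\intt} \geq C |f_0|_{\intt}$, i.e.\ $\beta_0 = C$.

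For condition (3), apply the second functional inequality with the same $g = \bar{f}_0$. Because $g$ vanishes on $V^\partial$, the right-hand side evaluates to $\frac{C^\partial}{|V^\partial|}\cdot |V^\partial|\cdot |f_0|_{\intt} = C^\partial |f_0|_{\intt}$. Combined with the upper bound on $|\delta_0 f_0|_\partial$ coming from the total number of boundary edges (which is at most a constant multiple of $|V^\partial|$), rearranging yields $\eta_0 |\delta_0 f_0|_\partial \leq |\delta_0 f_0|_{\intt}$ with $\eta_0 = \frac{C^\partial |V|}{2|V^\partial|}$; the factor $\frac{|V|}{2|V^\partial|}$ emerges precisely from the ratio of internal size to boundary size.

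The main obstacle is properly splitting the LHS $\sum_{\{x,y\}\in E}|g(x)-g(y)|$ of the functional inequality into the internal part $|\delta_0 f_0|_{\intt}$ and the boundary part $|\delta_0 f_0|_\partial$, and handling the case split on $|f_0|_{\intt}$: the Cheeger-type estimate is clean for $|f_0|_{\intt}\leq |V|/2$ but degenerates for larger subsets, where one must either invoke the boundary functional inequality or argue by complementation. Tracking the exact constant $\tfrac{1}{2}$ so that $\eta_0$ matches $\frac{C^\partial |V|}{2|V^\partial|}$ is the final bookkeeping step.
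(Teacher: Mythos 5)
This claim is imported from \cite{linGeometricallyLocalQuantum2023} (their Claim 5.10) rather than reproved here, so I am comparing your sketch against the argument it has to be. Two of your steps do not survive scrutiny. First, the assertion that $B^0$ is trivial, so that $f_0=\hat f_0$ is forced, would make the statement false for the very complexes it is applied to: every interior level-$1$ cell of $S$ or $T$ has exactly two interior level-$0$ neighbours, so the all-ones vector $\hat f_0=\mathbf 1$ has $|\delta_0\hat f_0|_{\intt}=0$ while $|\hat f_0|_{\intt}=|V|$, and condition (2) cannot hold for any $\beta_0>0$ if you must keep $f_0=\hat f_0$. The convention here is the augmented one: the constant vector counts as a level-$0$ coboundary, and the proof takes $f_0$ to be the lighter of $\hat f_0$ and $\hat f_0+\mathbf 1$; this is precisely why condition (1) is stated as an inequality, and it is the ``complementation'' you mention at the very end, which contradicts your opening claim that there is no freedom. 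The case $|\hat f_0|>|V|/2$ cannot instead be rescued by the boundary inequality, as the $\hat f_0=\mathbf 1$ example shows.

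Second, extending $g$ by $0$ on $V^\partial$ does not do what you want. With that extension the left-hand side of either functional inequality equals $|\delta_0 f_0|_{\intt}+|\delta_0 f_0|_{\partial}$, and since this sits on the side being bounded from below you cannot ``absorb'' the boundary term: you only learn $|\delta_0 f_0|_{\intt}+|\delta_0 f_0|_{\partial}\ge C|f_0|_{\intt}$, which is weaker than (2); and for (3), the two facts $|\delta_0 f_0|_{\intt}+|\delta_0 f_0|_{\partial}\ge C^\partial|f_0|_{\intt}$ and $|\delta_0 f_0|_{\partial}=O(|V^\partial|)$ imply no comparison whatsoever between $|\delta_0 f_0|_{\partial}$ and $|\delta_0 f_0|_{\intt}$. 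The needed choice is to copy values across the boundary: each boundary vertex meets the interior through a single edge (the dashed edges in \Cref{fig:boundary}), so setting $g$ on $V^\partial$ equal to the value at its interior neighbour kills every boundary-edge contribution and makes the left-hand side exactly $|\delta_0 f_0|_{\intt}$ (note also $|\delta_0\hat f_0|_{\intt}=|\delta_0 f_0|_{\intt}$ because $\mathbf 1$ has vanishing interior coboundary). Then, with $|f_0|_{\intt}\le|V|/2$, the first inequality gives $|\delta_0 f_0|_{\intt}\ge \frac{C}{|V|}\cdot 2|f_0|_{\intt}\,(|V|-|f_0|_{\intt})\ge C|f_0|_{\intt}$, i.e.\ $\beta_0=C$; and in the second inequality each boundary vertex adjacent to the support of $f_0$ contributes $|V|-|f_0|_{\intt}\ge|V|/2$ to $\sum_{x\in V^\partial,y\in V}|g(x)-g(y)|$, so that sum is at least $\tfrac{|V|}{2}\,|\delta_0 f_0|_{\partial}$ and one obtains $|\delta_0 f_0|_{\intt}\ge\frac{C^\partial|V|}{2|V^\partial|}\,|\delta_0 f_0|_{\partial}$, which is exactly the stated $\eta_0$. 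Your skeleton (testing the inequalities on an indicator extension, the identity for $\sum_{x,y\in V}|g(x)-g(y)|$, the $|V|/2$ bookkeeping) is right, but the complementation and the value-matching boundary extension are essential ideas, not optional refinements.
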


Since the $T$ components are still generalized repetition codes in our new construction, its expansion results and functional inequalities still apply. We summarize the results in~\cite{linGeometricallyLocalQuantum2023} here.

\begin{lemma}[Lemma 5.2 and Claim 5.9 in~\cite{linGeometricallyLocalQuantum2023}]\label{lem:repcode}
    The graph of generalized repetition code satisfies $(C^{rep}=1/L,C^{rep,\partial}=1/L)$-functional inequalities and the code is a $(\beta_0=2/L,\eta_0=1)$-coboundary expander at level 0.
\end{lemma}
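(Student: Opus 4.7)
The plan is to observe that the graph of the generalized repetition code is a tree --- a central level 1 vertex connected to $d = O(1)$ branches, each of which is an alternating path of level 1 and level 2 vertices of total length $O(L)$, terminating at a boundary vertex in $V^\partial$. I would establish the two functional inequalities directly via a standard path-averaging argument on this tree, then translate them into the coboundary expansion bounds at level 0 through Claim~\ref{clm:fineq}.

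For the first inequality, I use that in a tree any two vertices $x,y$ are joined by a unique path $P(x,y)$, so by telescoping $|g(x)-g(y)| \le \sum_{e \in P(x,y)} |g(u_e)-g(v_e)|$. Summing over ordered pairs and swapping the order of summation gives
\[ \sum_{x,y \in V} |g(x)-g(y)| \le \sum_{e \in E} |g(u_e)-g(v_e)| \cdot M_e, \]
where $M_e = 2|A_e||B_e|$ and $A_e, B_e$ are the two components obtained by removing $e$. Since the tree has diameter $O(L)$ and $|V| = O(L)$, we have $M_e \le |V|^2/2 = O(L\cdot|V|)$, which yields the first inequality with $C = \Theta(1/L)$. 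The boundary inequality follows by the same scheme, restricting one endpoint to $V^\partial$; the bound $M^\partial_e \le |V^\partial|\cdot|V|$ together with $|V^\partial| = d = O(1)$ yields $C^\partial = \Theta(1/L)$.

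The coboundary expansion is then a direct substitution into Claim~\ref{clm:fineq}: $\beta_0 = C = \Theta(1/L)$ and $\eta_0 = C^\partial|V|/(2|V^\partial|) = \Theta(1)$ since $|V^\partial| = O(1)$ and $|V| = O(L)$. Extracting the precise constants $\beta_0 = 2/L$ and $\eta_0 = 1$ is a matter of careful bookkeeping of the exact sizes, not of new ideas. I expect the only obstacle to be ensuring that the convention for $|V|$ and $|V^\partial|$ here matches that of Claim~\ref{clm:fineq}; since the tree under consideration is identical to the one treated in~\cite{linGeometricallyLocalQuantum2023}, the proof ultimately reduces to invoking their Lemma 5.2 and Claim 5.9 directly.
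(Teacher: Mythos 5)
Your proposal is sound, but note what the paper actually does here: it offers no proof of Lemma~\ref{lem:repcode} at all — the statement is imported verbatim from \cite{linGeometricallyLocalQuantum2023} (their Lemma 5.2 and Claim 5.9), the only remark in the text being that the $T$ regions of the new construction are still generalized repetition codes, so the quoted bounds carry over unchanged. That is precisely the fallback in your final sentence, so in that sense you match the paper; what you add is a self-contained derivation, and your path-averaging/telescoping argument on the spider tree (central check, $O(1)$ branches of length $O(L)$, boundary vertices at the leaf ends) is a correct route to the two functional inequalities and, via Claim~\ref{clm:fineq}, to $\beta_0=\Theta(1/L)$ and $\eta_0=\Theta(1)$. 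One caveat on the constants: the crude bound $M_e\le |V|^2/2$ only gives $C=2/|V|$, and since the spider may have up to $\Delta$ branches with roughly $L/2$ bits each, $|V|$ can exceed $2L$, so this bound alone does not recover $C^{rep}=1/L$ — it is weaker than stated whenever $\Delta>4$. To get the advertised constants you need the spider structure rather than a diameter bound: for every edge $e$, the smaller side of the cut lies entirely within one branch, so $\min(|A_e|,|B_e|)=O(L)$ and $M_e=O(L\,|V|)$, with the analogous refinement for $M_e^\partial$; alternatively one just cites the original lemma, whose normalization conventions (note that Claim~\ref{clm:fineq} gives $\beta_0=C$, yet the lemma states $\beta_0=2/L$ with $C^{rep}=1/L$) also account for the factor-of-two bookkeeping you rightly flagged as the main thing to check.
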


Note that we would also consider the case when the generalized repetition code only has boundary vertices on some of its ends. It is easy to check that Lemma~\ref{lem:repcode} still holds in this case.

We provide an example of the local structure $S$ in~\Cref{fig:seam}. Please note that in this section, for better exposition of the functional inequalities, the vertices, edges, and faces in the figures will represent the vertices in $X(0)$, $X(1)$, $X(2)$ respectively.
We would call the subgraph whose edges branch from the center of $S$ the seam of $S$, as the $\calM$ part in~\Cref{fig:seam}. For the local complex $S$ in the new construction, we state its difference with the generalized face code as follows:
\begin{itemize}
    \item The new component $S$ does not have a product structure of two repetition codes, there is a subdivided face between any two branches on the seam $\calM$.
    \item Between two branches on the seam, there could be multiple subdivided faces.
\end{itemize}

\begin{figure}
    \centering
      \begin{subfigure}[b]{0.4\textwidth}

        \centering
        \resizebox{\linewidth}{!}{
        \begin{tikzpicture}
    \def\size{5}
    \foreach \row in {0,1,...,\size}
    {
        \draw ({\row*cos(90)},{\row*sin(90)})--({\size*cos(330)+\row*cos(90)},{\size*sin(330)+\row*sin(90)});
        \draw ({\row*cos(90)},{\row*sin(90)})--({\size*cos(210)+\row*cos(90)},{\size*sin(210)+\row*sin(90)});
        \draw ({\row*cos(210)},{\row*sin(210)})--({\size*cos(90)+\row*cos(210)},{\size*sin(90)+\row*sin(210)});
        \draw ({\row*cos(210)},{\row*sin(210)})--({\size*cos(330)+\row*cos(210)},{\size*sin(330)+\row*sin(210)});
        \draw ({\row*cos(330)},{\row*sin(330)})--({\size*cos(90)+\row*cos(330)},{\size*sin(90)+\row*sin(330)});
        \draw ({\row*cos(330)},{\row*sin(330)})--({\size*cos(210)+\row*cos(330)},{\size*sin(210)+\row*sin(330)});
    }
    \draw[line width=2pt,color=magenta] (0,0)--({\size*cos(90)},{\size*sin(90)});
    \draw[line width=2pt,color=magenta] (0,0)--({\size*cos(210)},{\size*sin(210)});
    \draw[line width=2pt,color=magenta] (0,0)--({\size*cos(330)},{\size*sin(330)});
    \draw[line width=2pt,color=orange] (0,2)--({\size*cos(210)},{\size*sin(210)+2});
    \draw[line width=2pt,color=orange] (0,2)--({\size*cos(330)},{\size*sin(330)+2});
    \node at (0,2) [fill, circle, inner sep=3pt]{};
    \node at (0.4,2.3) {\Large $w$};
    \node at (5.2,-0.5) {\Large $D(w)$};
    \node at (5,-2.5) {\Large \textcolor{magenta}{$\calM$}};

\end{tikzpicture}

        }
    \end{subfigure}
    \caption{An example of $S$ with three faces (without drawing the boundary $S_{\partial}$). We call the magenta part as the seam of $S$, denoted by $\calM$. We also marked one generalized repetition code $D(w)$ under consideration in orange, with center $w$. We would use vertices, edges, and faces in the graph for level 0, 1, 2 vertices of $S$ respectively.}
    \label{fig:seam}
\end{figure}
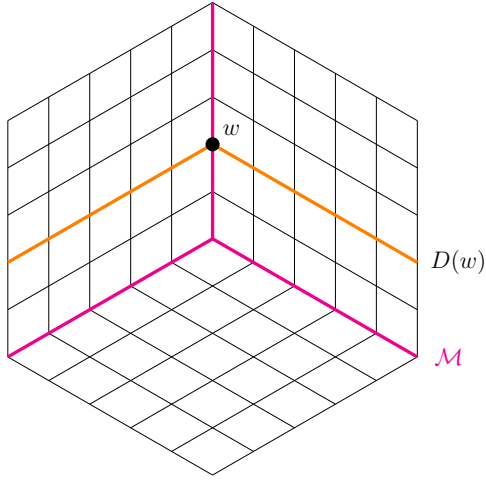

 When proving the level 0 expansion of the generalized face code, \cite{linGeometricallyLocalQuantum2023} utilized its product structure. In the following section, we will generalize the result to our local complex $S$. Note that the complex $S$ together with its boundary $S_{\partial}$ naturally induces a graph with boundary. 
 \begin{lemma}\label{lem:funineq}
     The graph with boundary induced by $S$ satisfies $(C=\Theta(1/L), C^{\partial}=\Theta(1/L))$ functional inequality.
 \end{lemma}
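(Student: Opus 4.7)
My plan is to reduce the functional inequality on the $F$-face union $S$ to the single-face case via face-majority references, so that the known grid Poincar\'e inequality supplies the main estimate. I will rely on: (i) bounded degree gives $F=O(\Delta^2)$, so the seam is a tree with $O(1)$ rays of length at most $L$; (ii) each full face, being an $L\times L$ grid, satisfies the functional inequality with $C_{\text{face}}=\Theta(1/L)$, provable by a row/column decomposition using Lemma~\ref{lem:repcode}; and (iii) the grid isoperimetric bound: for $A\subseteq V_f$ with $|A|\le|V_f|/2$, $|\partial_f A|\ge\Omega(|A|/L)$.

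For same-face pairs, I invoke (ii) face by face and sum; since every edge lies in $O(\Delta)$ faces, this yields $\sum_f\sum_{x,y\in V_f}|g(x)-g(y)|\le O(L^3)\,E(g)$.

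For cross-face pairs $(x,y)$ with $x\in V_{f_1}$, $y\in V_{f_2}$, $f_1\ne f_2$, I introduce the face majorities $\bar g_f\in\{0,1\}$ and apply
\[
|g(x)-g(y)|\ \le\ |g(x)-\bar g_{f_1}|+|\bar g_{f_1}-\bar g_{f_2}|+|\bar g_{f_2}-g(y)|.
\]
The first and third terms sum to $O(L^2)\sum_f\sum_{x\in V_f}|g(x)-\bar g_f|$; by (iii), the inner sum is the $V_f$-minority of $g$, bounded by $O(L)\,E_f(g)$, so these terms total $O(L^3)\,E(g)$. The central term $|V_{f_1}||V_{f_2}||\bar g_{f_1}-\bar g_{f_2}|$ vanishes unless some pair of faces has different majorities. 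When such a discrepancy exists, connectivity of the link of $v_0$ (from the earlier claim) produces link-adjacent $f_1,f_2$ with $\bar g_{f_1}\ne\bar g_{f_2}$, and the shared seam ray of length $L$ then forces $\Omega(L)$ flipping edges, i.e.\ $E(g)\ge\Omega(L)$. Thus the central term is at most $F^2 L^4\le O(L^3)\,E(g)$. Combining, $\sum_{x,y\in V}|g(x)-g(y)|\le O(L^3)\,E(g)=O(L\,|V|)\,E(g)$, which is $C=\Theta(1/L)$. The boundary inequality follows from the same argument with one endpoint in $V^\partial$; since $|V^\partial|=\Theta(LF)=\Theta(L)$, this yields $C^\partial=\Theta(1/L)$.

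The step I expect to be most delicate is the majority-transition estimate, i.e.\ showing that link-adjacent faces with different majorities force $\Omega(L)$ flipping edges, via a careful count along the shared seam ray combined with the grid isoperimetric bound on each face. It is essential here that $F=O(1)$, so that the $O(L^4)$ contribution from the central term can be fully absorbed by a single $\Omega(L)$ energy event rather than having to be split across many face-pairs; this is where bounded degree of the input chain complex enters in an essential way.
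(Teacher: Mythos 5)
Your interior bound is essentially correct, but it takes a genuinely different route from the paper. The paper never introduces face majorities: it decomposes the edge sum of $S$ into the generalized repetition codes $D(w)$ centered at the seam vertices $w\in\calM$, applies Lemma~\ref{lem:repcode} to each $D(w)$, and then transfers pair sums between faces by a triangle inequality through an intermediate vertex $z$ aligned with $x$ and $y$ along rows/columns crossing the seam (its constants enter through $|N^F(v)|\le 2\Delta^2$, just as your $F=O(\Delta^2)$ does). Your majority/isoperimetry argument also works for the interior inequality: the step you flag as delicate does go through, e.g.\ for two faces glued along a seam ray, at most $|A_1|/L+|A_2|/L$ of the $L$ cross-seam rows can avoid containing both a minority vertex of $f_1$ and a minority vertex of $f_2$, and combining this with the per-face isoperimetric bound (iii) forces $\Omega(L)$ bichromatic edges whenever link-adjacent faces have different majorities; link connectivity (guaranteed by the dummy faces, or by reasonableness plus minimality) then lets you charge the central term to this single $\Omega(L)$ event. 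So for $C$ your proposal is a valid alternative to the paper's seam decomposition.

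The boundary constant, however, does not follow ``from the same argument,'' and as written your accounting only gives $C^{\partial}=\Omega(1/L^2)$. Since $|V^\partial|=\Theta(L)$, you need $\sum_{x\in V^\partial,y\in V}|g(x)-g(y)|\le O(L^2)\sum_{\{x,y\}\in E}|g(x)-g(y)|$. In the triangle decomposition through majorities, the central term and the term $\sum_{x\in V^\partial,y\in V}|\bar g_{f(y)}-g(y)|$ do scale down correctly because there are now only $O(L^3)$ pairs, but the remaining term equals $|V|\cdot\sum_{x\in V^\partial}|g(x)-\bar g_{f(x)}|$ with $|V|=\Theta(L^2)$, so you must show that the number of boundary vertices disagreeing with the majority of their face is $O\bigl(\sum_{\{x,y\}\in E}|g(x)-g(y)|\bigr)$, with no factor of $L$. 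Your tool (iii) only bounds the full face minority by $O(L)$ times the face energy, which is off by exactly a factor of $L$ here (and the lemma is tight, so this loss cannot be absorbed elsewhere). What is needed, and true, is a sharper statement: the minority restricted to the outer row/column of $f$ along which $S_\partial$ attaches has size $O(E_f(g))$ --- every column meeting such a minority vertex either contains a bichromatic edge or lies entirely in the minority, and the number of all-minority columns is at most $|A|/L=O(E_f(g))$ by (iii); adding the flips on the attaching edges then controls the disagreeing boundary vertices. This extra lemma is short but genuinely missing from your proposal; the paper sidesteps the issue by using the boundary constant $C^{rep,\partial}$ of the repetition codes $D(w)$, which carries the boundary information line by line.
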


 As shown by Claim~\ref{clm:fineq}, if we prove the lemma beyond, we can show that $S$ is a $(\beta_0=\Theta(1/L),\eta_0=\Theta(1))$-coboundary expander at level 0. To simplify the formulae, we will use $L'=\frac{L+1}{2}$ in this section.
\begin{proof}[Proof of Lemma~\ref{lem:funineq}]
To prove the functional inequality of $S$, i.e.
\begin{align*}
     \sum_{\{x,y\}\in E}|g(x)-g(y)|&\geq\frac{C}{|V|} \sum_{x,y\in V}|g(x)-g(y)|,
\end{align*}
we first show
\begin{align}\label{eq:1}
    \sum_{\{x,y\}\in E}|g(x)-g(y)|&\geq \sum_{w\in \calM}\frac{C^{rep}}{\Delta L'}\sum_{x,y\in D(w)}|g(x)-g(y)|,
\end{align}
then show
\begin{align}\label{eq:2}
    \sum_{w\in\calM}\sum_{x,y\in D(w)}|g(x)-g(y)|&\geq \frac{1}{|N^F(v)|^2L'}\sum_{x,y\in V}|g(x)-g(y)|,
\end{align}
where we recall that $N^F(v)$ is the set of faces containing $v$.

Combining the two inequalities together, we obtain
\begin{align*}
     \sum_{\{x,y\}\in E}|g(x)-g(y)|&\geq \frac{C^{rep}}{\Delta L'} \sum_{w\in \calM}\sum_{x,y\in D(w)}|g(x)-g(y)|\\
     &\geq  \frac{C^{rep}}{|N^F(v)|^2\Delta L'^2} \sum_{x,y\in V}|g(x)-g(y)|.
\end{align*}
Since $|N^F(v)|L'^2\leq |V|\leq |N^F(v)|(L'+1)^2\leq |N^F(v)|L^2$, we can observe that if we set $C=C^{rep}/2\Delta^3$,
\begin{align*}
    \frac{C}{|V|}\leq \frac{C^{rep}}{2\Delta^3|N^F(v)|L'^2}\leq \frac{C^{rep}}{|N^F(v)|^2\Delta L'^2},
\end{align*}
where the second inequality is by $|N^F(v)|\leq 2\Delta^2$. Therefore $S$ satisfies functional inequality with $C=C^{rep}/2\Delta^3=\Omega(1/L)$.

To show the first inequality~\eqref{eq:1}, we observe that the sum on the left-hand side over the edges can be decomposed to summing over the generalized repetition code $D(w)$ with center $w$ in the seam $\calM$, as shown in the orange part of~\Cref{fig:seam}. We would apply the functional inequality of generalized repetition code to each $D(w)$.

 From the observation above, we can obtain the following inequality
\begin{align*}
    \sum_{\{x,y\}\in E}|g(x)-g(y)|&=\sum_{w\in \calM}\sum_{\{x,y\}\in E(D(w))}|g(x)-g(y)|\\
    &\geq \sum_{w\in \calM}\frac{C^{rep}}{\Delta L'}\sum_{x,y\in D(w)}|g(x)-g(y)|,
\end{align*}
where the inequality is by the functional inequality of the generalized repetition code.

To show the second inequality~\eqref{eq:2}
\begin{align*}
    \sum_{w\in\calM}\sum_{x,y\in D(w)}|g(x)-g(y)|&\geq \frac{1}{|N^F(v)|^2L'}\sum_{x,y}|g(x)-g(y)|,
\end{align*}
we can first prove that for two neighboring faces $f_1, f_2$,
\begin{align}
    \sum_{x\in f_1,y\in f_2}|g(x)-g(y)|&\leq \sum_{z\in f_2}L'\left(\sum_{x\in f_1\cap D(w_{z1})}|g(x)-g(z)|+\sum_{y\in f_2\cap D(w_{z2})}|f(z)-f(y)|\right)\nonumber\\
    &\leq L'\left(\sum_{w\in e_1}\sum_{x\in D(w)\cap f_1, z\in D(w)\cap f_2}|g(x)-g(z)|+\sum_{w\in e_2}\sum_{y,z\in D(w)\cap f_2}|g(y)-g(z)|\right),\label{eq:3}
\end{align}
where $z$ and $x$ are on the same horizontal line, and $z$ and $y$ are on the same vertical line. Readers can refer to~\Cref{fig:xyz} for the definition of the notations used in the inequality. The first line is based on the triangular inequality, and the second line is by reorganizing the sum order.

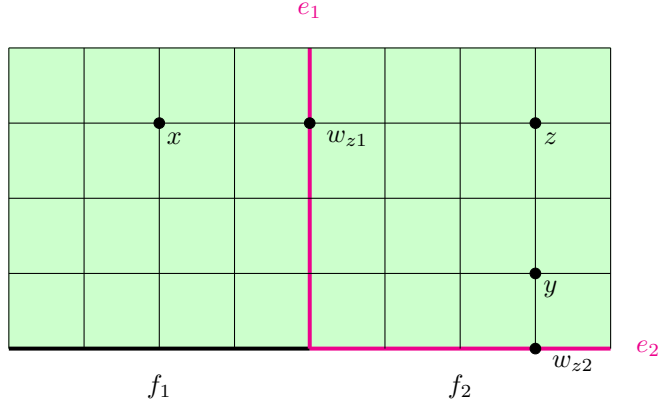
\begin{figure}

    \centering
        \begin{tikzpicture}
            \filldraw[green!20] (-4,0) rectangle (4,4);
            \draw[line width=1.5pt](-\rows,0)--(0,0);
            \draw[line width=1.5pt,color=magenta] (0,0)--(\rows,0);
            \draw[line width=1.5pt,color=magenta] (0,0)--(0,\rows);
            \foreach \row in {1,2,...,\rows}{
                \draw (-\rows,\row)--(\rows,\row);
                \draw (\row,0)--(\row,\rows);
                \draw (-\row,0)--(-\row,\rows);
            }
            \filldraw (-2,3) circle (2pt);
            \node at (-1.8,2.8) {$x$};
            \filldraw (3,3) circle (2pt);
            \node at (3.2,2.8) {$z$};
            \filldraw (3,1) circle (2pt);
            \node at (3.2,0.8) {$y$};
            \filldraw (0,3) circle (2pt);
            \node at (0.5,2.8) {$w_{z1}$};
            \filldraw (3,0) circle (2pt);
            \node at (3.5,-0.2) {$w_{z2}$};

            \node at (-2,-0.5) {{$f_1$}};
            \node at (2,-0.5) {{$f_2$}};
            \node at (0,4.5) {\textcolor{magenta}{$e_1$}};
            \node at (4.5,0) {\textcolor{magenta}{$e_2$}};
        \end{tikzpicture}

    \caption{The relation of $x,y,z$ we are considering when computing $C$ in the functional inequality. Please note that now each edge corresponds to a level 1 vertex, and each vertex corresponds to a level 0 vertex. The thickened edges are the seam of local structure $S$. We marked the $e_1$ and $e_2$ in our formula with red, and the faces $f_1$,$f_2$ with green.
    }
    \label{fig:xyz}
\end{figure}

 For each pair of faces, we can obtain the same bound. If we sum all possible face pairs together, we will sum over all possible $e_1, e_2$ in the seam $\calM$ on the right-hand side of the inequality~\eqref{eq:3}, giving us the following inequality:
\begin{align*}
    \sum_{f,f'\in N^F(v)}\sum_{x\in f,y\in f'}|g(x)-g(y)|\leq |N^F(v)|^2L'\sum_{w\in\calM}\sum_{x,y\in D(w)}|g(x)-g(y)|.
\end{align*}

Since the summation on the left hand side is same as the summation over $x, y \in V$, this implies inequality~\eqref{eq:2}.

For the bound on $C^{\partial}$, we can first assume that every face $f$ has the boundary vertices. Since every local complex $S$ has at least one face with boundary, by adding boundary vertices to dummy faces, the $|V^\partial|$ would increase by at most a constant factor, and the sum $\sum_{x\in V^\partial,y\in V}|g(x)-g(y)|$ will only increase. Thus if we prove a lower bound for the full boundary case, we can get general lower bounds on $C^{\partial}$ by losing a constant factor.

The proof has a similar flavor as the proof above. For the boundary vertices $x\in f_1^\partial$ and vertices $y\in f_1$, by triangular inequality, we have
\begin{align}\label{eq:4}
        \sum_{x\in f_1^\partial,y\in f_1}|g(x)-g(y)|\leq \sum_{z\in f_1}\left(L'|g(x)-g(z)|+\sum_{y\in f_1}|g(y)-g(z)|\right),
\end{align}
where $z$ and $x$ are on the same horizontal/vertical line, and $z$ and $y$ are on the same vertical/horizontal line.

\begin{figure}
    \centering
     \begin{tikzpicture}
            \filldraw [orange!20] (-4.1,4.2) rectangle (0.1,4.8);
            \filldraw [orange!20] (-4.8, -0.1) rectangle (-4.2,4.1);
            \filldraw[green!20] (-4,0) rectangle (4,4);
            \draw[line width=1.5pt,color=magenta](\rows,0)--(0,0);
            \draw[line width=1.5pt,color=magenta] (0,0)--(-\rows,0);
            \draw[line width=1.5pt,color=magenta] (0,0)--(0,\rows);
            \draw[line width=1.5pt,color=magenta,dashed] (0,4)--(0,4.5);
            \draw[line width=1.5pt,color=magenta,dashed] (-4,0)--(-4.5,0);
            \foreach \row in {1,2,...,\rows}{
                \draw (-\rows,\row)--(\rows,\row);
                \draw (\row,0)--(\row,\rows);
                \draw (-\row,0)--(-\row,\rows);
                \draw[dashed] (-\row,\rows)--(-\row,{\rows+0.5});
                \draw [dashed] (-\rows,\row)--({-\rows-0.5},\row);
            }
            \filldraw (-2,3) circle (2pt);
            \node at (-1.8,2.8) {$z$};
            \filldraw (3,3) circle (2pt);
            \node at (3.2,2.8) {$y$};
            \filldraw (-2,4.5) circle (2pt);
            \node at (-1.8,4.3) {$x$};
            \filldraw (0,3) circle (2pt);
            \node at (0.5,2.8) {$w_{z1}$};
            \filldraw (-2,0) circle (2pt);
            \node at (-1.5,-0.3) {$w_{z2}$};

            \node at (-2,-0.5) {{$f_1$}};
            \node at (2,-0.5) {{$f_2$}};
            \node at (0.3,4.5) {\textcolor{magenta}{$e_1$}};
            \node at (-4.2,-0.3) {\textcolor{magenta}{$e_2$}};
            \node at (4.2,-0.3) {\textcolor{magenta}{$e_3$}};
            \node at (-4.5,4.5) {\textcolor{orange}{$f_1^{\partial}$}};
        \end{tikzpicture}
    \caption{The relation of $x,y,z$ we are considering when computing $C^\partial$. Note that here $w_{z1}$ will correspond to the first term in the inequality~\eqref{eq:4}, and $w_{z2}$ will correspond to the second term. The case when $x\in f_1^\partial,y\in f_1$ is similar.}
    \label{fig:xyzbound}
\end{figure}
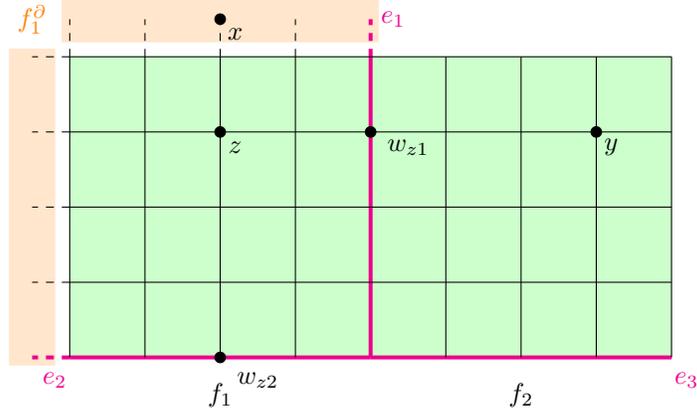

For $x\in f_1^\partial$ and $y\in f_2$ on neighboring faces $f_1$ and $f_2$, combining the inequality above, we have that
\begin{align*}
    \sum_{x\in f_1^\partial,y\in f_2}|g(x)-g(y)|&\leq\frac{1}{L'^2}\left(L'^2\sum_{x\in f_1^\partial,z\in f_1}|g(x)-g(z)|+2L'\sum_{z\in f_1,y\in f_2}|g(z)-g(y)|\right)\\
    &= \sum_{x\in f_1^\partial,z\in f_1}|g(x)-g(z)|+\frac{2}{L'}\sum_{z\in f_1,y\in f_2}|g(z)-g(y)|,
\end{align*}
where the first inequality is by the triangular inequality over all $z\in f_1$.
For the first summation term on the right-hand side, we apply inequality~\eqref{eq:4}, and for the second summation term, we apply inequality~\eqref{eq:3}. Combining the two inequalities, we have that
\begin{align*}
     \sum_{x\in f_1^\partial,y\in f_2}|g(x)-g(y)|&\leq L'\sum_{w\in E(f_1)}\sum_{\substack{x\in D(w)\cap f_1^\partial\\z\in D(w)\cap f_1}}|g(x)-g(z)|+2\sum_{w\in e_1}\sum_{z,y\in D(w)}|g(z)-g(y)|\\&+2\sum_{w\in e_3}\sum_{y,z\in D(w)\cap f_2}|g(y)-g(z)|,
\end{align*}
where we used $E(f_1)$ to denote the seam boundary of face $f_1$. For example, the branches $e_1$ and $e_2$ form the seam boundary of $f_1$ in~\Cref{fig:xyzbound}. Readers can refer to the figure for the relation between $x,y,z$.  The first inequality comes from triangular inequality, the second inequality is based on the previous inequality~\eqref{eq:3} on $f_1$ and $f_2$.

Summing all $(f_i^\partial,f_j)$ pairs, we have that
\begin{align*}
    \sum_{x\in V^\partial,y\in V}|g(x)-g(y)|&\leq |N^F(v)|^2L'\sum_{w\in\calM}\sum_{x\in D(w)^\partial y\in D(w)}|g(x)-g(y)|+3|N^F(v)|^2\sum_{w\in\calM}\sum_{x,y\in D(w)}|g(x)-g(y)|\\
    &\leq \frac{|N^F(v)|^2L'\Delta}{C^{rep,\partial}} \sum_{\{x,y\}\in E}|g(x)-g(y)|+\frac{3|N^F(v)|^2L'\Delta}{C^{rep}}\sum_{\{x,y\}\in E}|g(x)-g(y)|,
\end{align*}
 Since $|V^\partial|= 3|N^F(v)|L'\leq 6\Delta^2 L'$, we have that $C^\partial=\Omega(1/L)$.
\end{proof}

\begin{remark}
    Note that the result also holds for the subdivision of reasonable codes without dummy faces.
    The proof utilizes the fact that the link of a check is connected, which is automatically satisfied by the reasonable code.
\end{remark}

\subsection{Proof of~\Cref{thm:level1}}
In Section 5.4.2 of~\cite{linGeometricallyLocalQuantum2023}, the authors proved that the generalized surface code is also a $(\beta_1=\frac{2}{3L},\eta_1=\frac{L-1}{4L}$)-coboundary expander at level 1. The proof of level 1 expansion did not utilize the product structure of the generalized surface code in~\cite{linGeometricallyLocalQuantum2023}, we provide a sketch here, and refer interested readers to the original paper for detail.

For each $\hat{f}_1\in\F_2^{X(1)\cap S}$, we can decompose the support of $\hat{f}_1$ into different clusters $\hat{f}_{1,i}$ with disjoint support and violated checks $\delta_1\hat{f}_{1,i}$. The clusters have the following possible cases:
\begin{itemize}
    \item The cluster has no violated checks. The following cases will exclude this case.
    \item The cluster is not connected to the seam or the boundary.
    \item The cluster is connected to the boundary but not the seam.
    \item The cluster is connected to the seam.
\end{itemize}

For the first three cases, we can easily find some $f_{1,i}\in \hat{f}_{1,i}+B^1$ that satisfies the $(\beta_1,\eta_1)$ coboundary expansion. For the Case 4 when the cluster is connected to the seam, we would first push the path to the qubits next to the seam and set the final $f_{1,i}$ similarly as in Section 5.4.2. Note that the new complex has the same local structure as the generalized surface code around each seam, thus we can get a similar coboundary expansion result for $S$ in our construction following the similar reasoning.

\printbibliography

\appendix

\section{Alternative Geometric Structures from Codes and Chain Complexes based on Nerve-like Construction}
\label{sec:app}

In \Cref{sec:2DStruc},
    we described a way to extract a 2D structure from any quantum LDPC code,
    which is the key observation in this paper.
In this section,
    we describe a similar geometrization process
    which applies to bounded degree chain complexes of arbitrary length and over arbitrary commutaive ring.

\subsection{Simplicial Complex from Chain Complex}

For the sake of being general,
    we introduce a more general notion of chain complex than the one discussed in \Cref{sec:chain-complex}.
These type of chain complexes can be induced naturally from algebraic topology.

\begin{definition}[Chain complex]
    Let $R$ be a commutative ring.
    A chain complex $X$ is a sequence of free $R$-modules $R^{X(i)}$ generated by sets $X(i)$, along with $R$-linear maps $\partial_i\colon R^{X(i)}\to R^{X(i-1)}$ known as boundary operators, where the boundary operators satisfy
    \begin{equation*}
        \partial_{i-1}\partial_i=0.\footnote{
        We choose the boundary operators over the coboundary operators only for cosmetic reasons, as this choice leads to a simpler sign assignments that we will encounter later.
        }
    \end{equation*}
\end{definition}
When $R = \F_2$, this reduces to the previous definition.
Other common choices of $R$ include $\F_p$ and $\ZZ$.
A less common choice of $R$ in the context of coding theory is $\ZZ / m \ZZ$ for some composite number $m$.
When $R$ is a field, a free $R$-module is exactly a vector space.

Nerve is the idea in category theory which treats
    ``objects'' as 0-simplices,
    ``arrows'' as 1-simplices,
    and ``composition of arrows'' as 2-simplices, etc.
For example, given three abstract objects $x, y, z$,
    with two arrows $f: x\to y$ and $g: y\to z$,
    they compose into another arrow $g f: x\to z$
    as shown in \Cref{fig:nerve}.
In the language of nerve,
    $x, y, z$ correspond to three 0-simplices,
    $x \xrightarrow[f]{} y, y \xrightarrow[g]{} z, x \xrightarrow[gf]{} z$ correspond to three 1-simplices,
    and the relation between $f, g, g f$,
        which will be denote as $x \xrightarrow[f]{} y \xrightarrow[g]{} z$,
        corresponds to a 2-simplex.

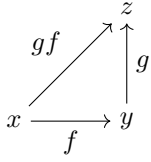
\begin{figure}
    \centering
    \begin{tikzpicture}[scale=1.5]
        \node (x) at (0,0) {$x$};
        \node (y) at (1,0) {$y$};
        \node (z) at (1,1) {$z$};
        \draw[->] (x) --node[auto,swap]{$f$} (y);
        \draw[->] (y) --node[auto,swap]{$g$} (z);
        \draw[->] (x) --node[auto]{$g f$} (z);
    \end{tikzpicture}
    \caption{The nerve construction from category theory.}
    \label{fig:nerve}
\end{figure}

We now apply this concept of nerve to a chain complex which will give the desired geometric structure.
The key is to view the chain complex as a collection of arrows.
The objects correspond to the basis of the free modules $\bigcup_i X(i)$.
The simple arrows correspond to the boundary maps $\partial_i: R^{X(i)} \to R^{X(i-1)}$.
Specifically, there is an arrow $x_i \to x_{i-1}$ for $x_i \in X(i), x_{i-1} \in X(i-1)$
    iff the corresponding entry $(x_i, x_{i-1})$ in $\partial_i$ is nonzero.
Finally, The full set of arrows consists of all possible compositions of the simple arrows.

Using the objects and the arrows described above,
    the nerve construction gives a $t$-dimensional simplicial complex,
    when the given chain complex has $t+1$ terms.
In particular, each $s$-dimensional simplex
    corresponds to a sequence of arrows $x_i \to x_{i-1} \to ... \to x_{i-s}$,
    where the entry $(x_j, x_{j-1})$ is nonzero for $\partial_j$.
We sometime include the values of the entries which will be denoted as
    $x_i \xrightarrow{r_i} x_{i-1} \xrightarrow{r_{i-1}} ... \xrightarrow{r_{i-s+1}} x_{i-s}$.
It is straightforward to see that if the chain complex has bounded degree,
    the resulting simplicial complex also has bounded degree,
    because there are only bounded number of ways to combine the arrows.

\subsection{Subdivide the Chain Complex Using the Structure of the Simplicial Complex}

We now demonstrate that this simplicial complex can be viewed as a geometrical realization of the chain complex
    by focusing on the subdivision process.
The subdivision process can be understood through two steps.
The first step is the ability to subdivide each $s$-simplex (with values) $x_i \xrightarrow{r_i} x_{i-1} \xrightarrow{r_{i-1}} ... \xrightarrow{r_{i-s+1}} x_{i-s}$.
The second step is to show that if we subdivide all simplices in the simplicial complex, the resulting subdivided simplices attach nicely and gives a new chain complex.
In a more abstract language,
    if we view individual simplices as local objects
    and the full simplex as a global object,
    the preceding description says that
    to subdivide a global object,
    all we need is the ability to subdivide a local object (the first step)
    and make sure the subdivided local components glue into a new global object (the second step).
The second step can be checked straightforwardly, so we will focus on the first step.

An intuitive way to think about $x_s \to x_{s-1} \to ... \to x_0$
    and its subdivision is to think of it as a $s$-simplex in a $s$-dimensional cell
    as shown in \Cref{fig:nerve-geometrical}.
In this context, $x_i$ corresponds to one of the $i$-cells,
    and the cells contain each other.
There is a sense that the $s$-simplex is like an incomplete cell within this larger $s$-dimensional cell,
    which causes it to have nontraditional boundaries.
In particular, the example in the figure is a $2$-simplex,
    which has three types of boundary, $2 \to 1$, $1 \to 0$, and $2 \to 0$.

\begin{figure}
    \centering
    \includegraphics[width=0.5\textwidth]{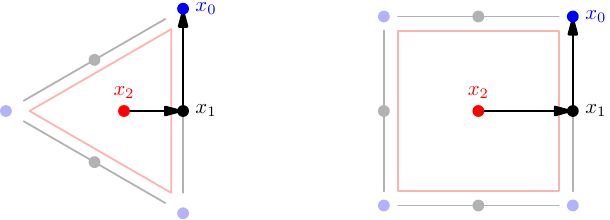}
    \caption{Viewing $x_2 \to x_1 \to x_0$ as an incomplete $2$-cell which is a part of a triangle or a square.}
    \label{fig:nerve-geometrical}
\end{figure}

The geometric subdivision should respect these boundary types
    as shown on the left of \Cref{fig:subdivision-general}.
We see that the bottom boundary is always the $2 \to 1$ type,
    the right boundary is always the $1 \to 0$ type,
    and the diagonal boundary is always the $2 \to 0$ type.
This geometric subdivision then naturally induces a family of arrows
    as shown on the right of \Cref{fig:subdivision-general}
    where the arrows are induced from the containment relation.
We denote the arrow structure of the subdivision
    as $Y_s \to Y_{s-1} \to ... \to Y_0$.

\begin{figure}
    \centering
    \includegraphics[width=0.97\textwidth]{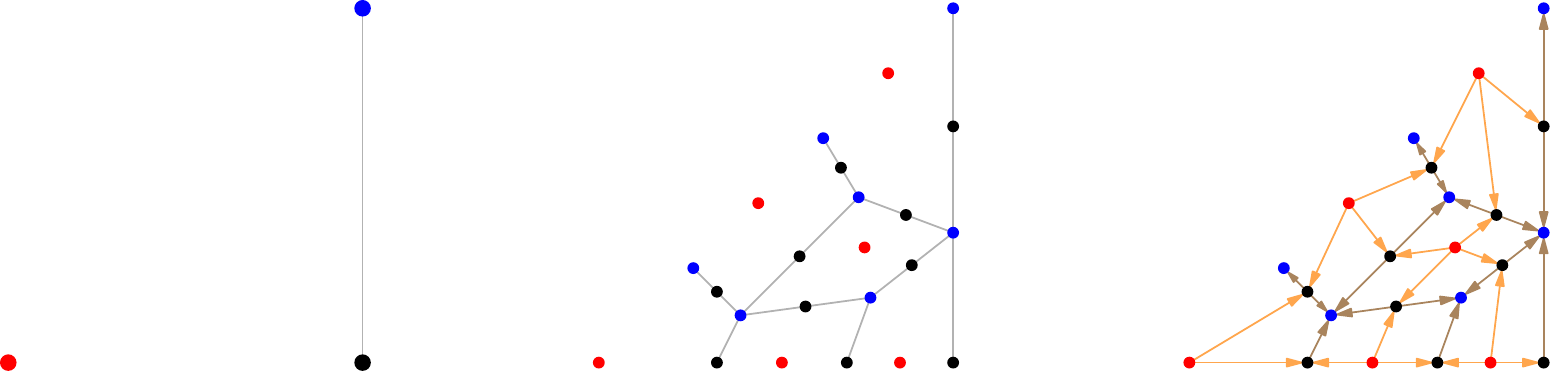}
    \caption{Left: The original incomplete $2$-cell.
             Middle: A geometric subdivision that respects the boundary types.
             Right: The corresponding arrow structure of the subdivision.
             The orange arrows are of $y_2 \to y_1$
             and the brown arrows are of $y_1 \to y_0$.}
    \label{fig:subdivision-general}
\end{figure}

To obtain a chain complex, we additionally need to assign values to the arrows of the subdivided structure.
When subdividing $x_s \xrightarrow{r_s} x_{s-1} \xrightarrow{r_{s-1}} ... \xrightarrow{r_{1}} x_0$,
    the arrow $y_i \to y_{i-1}$ with $y_i \in Y_i, y_{i-1} \in Y_{i-1}$
    will have value $r_i$ or $-r_i$.
The sign is determined from the standard sign convention from algebraic topology.
In particular, there is an orientation for each $i$-cell
    and the sign corresponds to whether the orientation agrees.
Note that orientations in the interior could be arbitrary,
    but the orientations on the boundaries have to be fixed
    so that the structures glue globally.
A consistent assignment for the orientations on the boundaries
    can be induced from
    the orientation of the original simplex $x_s \to x_{s-1} \to ... \to x_0$.
For each $i$, the orientation of the original incomplete $i$-cell $x_i$
    is set to be the direction $x_i \to x_{i-1} \to ... \to x_0$.
The cells along the boundary should follow similar orientations.
For example, say $s=2$ as in \Cref{fig:sign-assignment},
    the new $2$-cells on the boundary should follow the same orientation as the original $2$-cell.
    The new $1$-cells on the $2 \to 1$ boundary,
        should point away from the $2 \to 1$ boundary,
        similar to the original $1$-cell.
    The new $1$-cells on the $1 \to 0$ boundary,
        should point along the $1 \to 0$ boundary,
        similar to the original $1$-cell.
    The new $1$-cells on the $2 \to 0$ boundary,
        should point to the $2 \to 0$ boundary,
        similar to the original $1$-cell.

\begin{figure}
    \centering
    \includegraphics[width=0.97\textwidth]{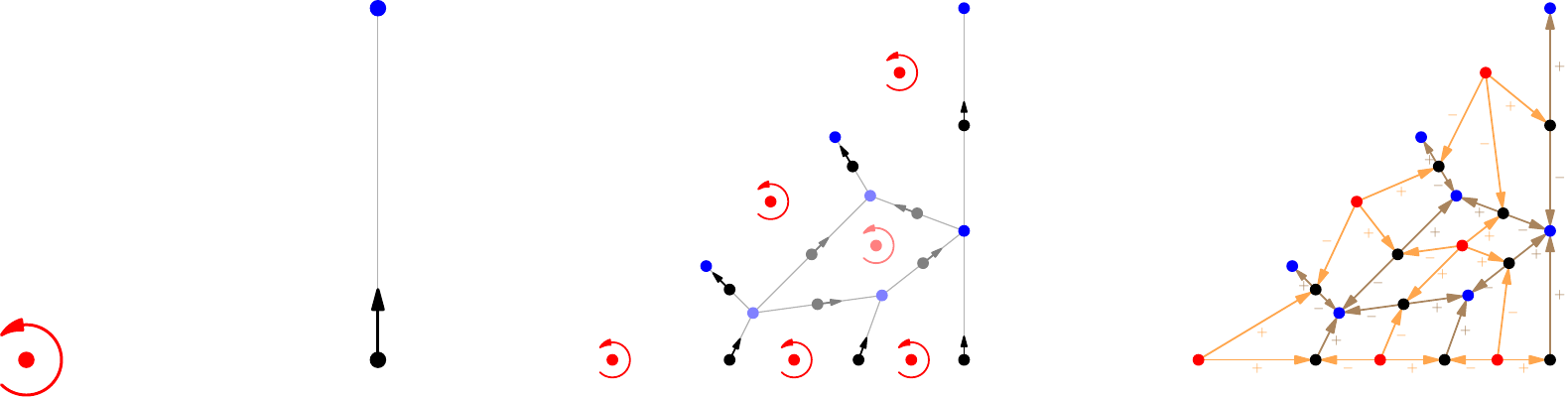}
    \caption{Left: The original incomplete $2$-cell with orientations.
             Middle: The geometric subdivision with orientations.
             Note that the orientations on the boundaries are fixed and are marked in opaque colors.
             Right: The corresponding sign assignments.}
    \label{fig:sign-assignment}
\end{figure}

\begin{remark}
    We note that an additional step is required for the new chain complex to be homotopic equivalent to the original chain complex.
    This step is the analog of the pairing process described in the main text.
    To see the necessity of such process,
        we consider the following $\F_2$ chain complex
    \begin{equation*}
        \F_2 \xrightarrow{\mathmakebox[3.5em]{\left[\begin{smallmatrix}
            1 & 1 & 1 & 1
        \end{smallmatrix}\right]}}
        \F_2^4 \xrightarrow{\mathmakebox[3.5em]{\left[\begin{smallmatrix}
            1 \\ 1 \\ 1 \\ 1
        \end{smallmatrix}\right]}}
        \F_2,
    \end{equation*}
    which corresponds to $4$ triangles that share an edge,
        as depicted in Left 1 of \Cref{fig:pairing-example}.

    The subdivision where each triangle is replaced by half of a surface code
        is depicted in Left 2 of \Cref{fig:pairing-example}.
    Unfortunately, this new chain complex is not homotopic equivalent to the original chain complex.
    One way to infer this is to notice that we introduced
        $5$ new elements in $X(0)$,
        $12$ new elements in $X(1)$,
        $5$ new elements in $X(2)$
        in the new chain complex.
    Since $5 - 12 + 5 \ne 0$, this process changes the Euler characteristic.
    Indeed, this process introduces new nontrivial homologies,
        for example, the vector with value $1$ on the circled elements in $X(1)$,
        as depicted in Right 2 of \Cref{fig:pairing-example}.

    This is where pairing come in.
    By pairing the two triangles on the left and pairing the two triangles on the right,
        we splits the structure where $4$ triangles share an edge
        into $2$ squares.
    This effectively splits the vertical cord into two,
        as depicted in Right 1 of \Cref{fig:pairing-example}.
    This leads to two additional elements one in $X(0)$ and one in $X(2)$.
    It is clear that the Euler characteristic is the same as the original chain complex as $6 - 12 + 6 = 0$.
    Indeed, the two chains complexes are now homotopically equivalent.

    \begin{figure}
        \centering
        \includegraphics[width=0.97\textwidth]{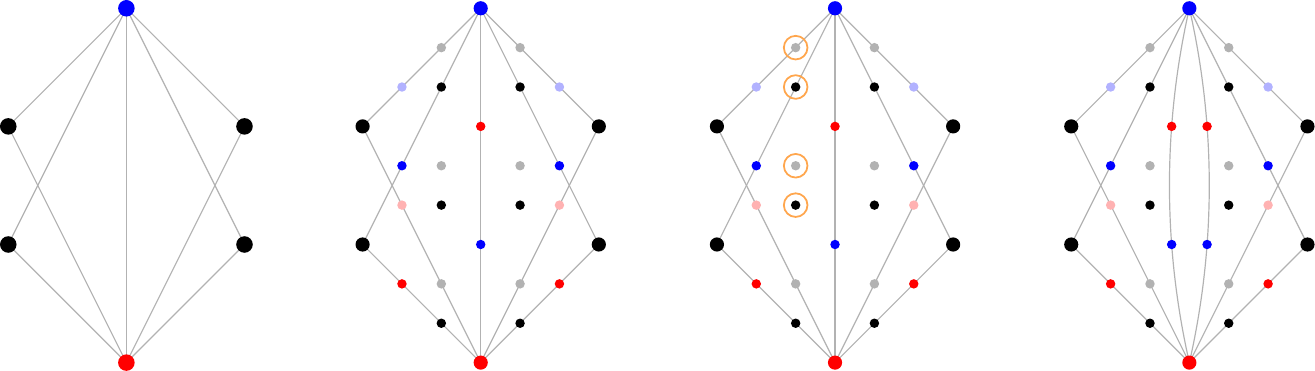}
        \caption{Left 1: The original chain complex.
                 Left 2: The naive subdivision without pairing.
                 Right 2: A nontrivial homology marked by the orange circles.
                 Right 1: The correct subdivision with pairing.}
        \label{fig:pairing-example}
    \end{figure}

    Since the pairing process is more complicated for longer chains,
        the details will be described elsewhere.
    After including pairing in the subdivision process,
        we believe that the new chain complex will be homotopic equivalent to the original chain complex,
        similar to what happens to the subdivision of a CW complex.
    If the claim is true, this would support the interpretation that the new chain complex can be viewed as a cellulation of the original chain complex.
\end{remark}

\end{spacing}
\end{document}